\newtheorem{theorem}{Theorem}
\newtheorem{lemma}[theorem]{Lemma}
\newtheorem{defn}{Definition}
\newtheorem*{defn*}{Definition}
\newtheorem*{prop*}{Proposition}
\newtheorem*{conj*}{Conjecture}
\newtheorem*{fact*}{Fact}
\newtheorem{prop}{Proposition}
\newtheorem*{ex*}{Example}
\newtheorem{rem}{Remark}
\newcommand{\lS}{{\bs\alpha}}
\newcommand{\ls}{\alpha}
\newcommand{\lT}{{\bs\beta}}
\newcommand{\lt}{\beta}
\newcommand{\lZ}{{\bs\omega}}
\newcommand{\lz}{\omega}
\newcommand{\beps}{{\bs\epsilon}}
\newcommand*{\claimproofname}{Proof of claim}
\DeclareMathOperator{\poly}{poly}
\DeclareMathOperator{\tr}{tr}
\newcommand{\al}{\alpha}
\newcommand{\eps}{\varepsilon}
\newcommand{\cT}{{\mathcal T}}
\newcommand{\mc}[1]{\mathcal{#1}}
\newcommand{\bs}[1]{\boldsymbol{#1}}
\newcommand{\bC}{\mathbf{C}}
\newcommand{\bH}{\mathbf{H}}
\newcommand{\cH}{\mathcal{H}}
\newcommand{\ketbra}[2]{\ket{#1}\!\!\bra{#2}}
\DeclareMathSymbol{\shortminus}{\mathbin}{AMSa}{"39}
\renewcommand{\epsilon}{\varepsilon}
\theoremstyle{definition}
\numberwithin{equation}{section} 
\numberwithin{figure}{section}
\numberwithin{table}{section}
\newcommand{\bE}{\mathbf{E}}
\newcommand{\R}{\mathbf{R}}
\newcommand{\C}{\mathbf{C}}
\newcommand{\bP}{\mathbf{P}}
\newcommand{\D}{\mathbf{D}}
\newcommand{\Z}{\mathbf{Z}}
\newcommand{\un}{\mathbf{1}}
\newcommand{\x}{\bs{x}}
\newcommand{\z}{\bs{z}}
\newcommand{\X}{\mathbf{X}}
\newcommand{\Y}{\mathbf{Y}}
\renewcommand{\a}{\bs{a}}
\newcommand{\Rad}{\textnormal{Rad}}
\newcommand{\cP}{\mathcal{P}}
\newcommand{\cS}{\mathcal{S}}
\newcommand{\ce}{\mathcal{E}}
\newcommand{\cA}{\mathcal{A}}
\newcommand{\bh}{\textnormal{BH}}
\newcommand{\go}{K}
\newcommand{\px}{a}
\newcommand{\pz}{b}
\newcommand{\vpx}{\textbf{a}}
\newcommand{\vpz}{\mathbf{b}}
\newcommand{\gx}{s}
\newcommand{\gz}{t}
\newcommand{\vgx}{\mathbf{s}}
\newcommand{\vgz}{\mathbf{t}}
\newcommand{\ca}{\mathcal{A}}
\newcommand{\spa}{\textnormal{span}}
\newcommand{\1}{\mathbf{1}}
\begin{document}
	
	\title[Quantum norm designs]{Approximating the operator norm of local Hamiltonians via few quantum states}

\author[L. Becker]{Lars Becker}
\address{(L.B.) Mathematics Department, Princeton University, Fine Hall, Washington Road, Princeton, NJ 08544-1000, USA}
\email{lbecker@math.princeton.edu}

\author[J. Slote]{Joseph Slote}
\address{(J.S.) Department of Computing and Mathematical Sciences, California Institute of Technology, Pasadena, CA}
\email{jslote@caltech.edu}

\author[A. Volberg]{Alexander Volberg}
\address{(A.V.) Department of Mathematics, MSU, 
East Lansing, MI 48823, USA, and Hausdorff Center of Mathematics, Bonn}
\email{volberg@math.msu.edu}

\author[H. Zhang]{Haonan Zhang}
\address{(H.Z.) Department of Mathematics, University of South Carolina}
\email{haonanzhangmath@gmail.com}

\begin{abstract}
	Consider a Hermitian operator $A$ acting on a complex Hilbert space of dimension $2^n$. We show that when $A$ has small degree in the Pauli expansion, or in other words, $A$ is a local $n$-qubit Hamiltonian, its operator norm can be approximated independently of $n$ by maximizing $|\braket{\psi|A|\psi}|$ over a small collection $\mathbf{X}_n$ of product states $\ket{\psi}\in (\mathbf{C}^{2})^{\otimes n}$. More precisely, we show that whenever $A$ is $d$-local, \textit{i.e.,} $\deg(A)\le d$, we have the following discretization-type inequality:
	\[
\|A\|\le C(d)\max_{\psi\in \mathbf{X}_n}|\braket{\psi|A|\psi}|.
	\]
The constant $C(d)$	depends only on $d$. This collection $\mathbf{X}_n$ of $\psi$'s, termed a \emph{quantum norm design}, is independent of $A$, and
can have cardinality as small as $C^n$, which is essentially tight. Previously, norm designs were known only for homogeneous $d$-local $A$ \cite{L,BGKT,ACKK}, and for non-homogeneous $2$-local traceless $A$ \cite{BGKT}.
	Several other results, such as boundedness of Rademacher projections for all levels and estimates of operator norms of random Hamiltonians, are also given.

	\end{abstract}
	

\thanks{
		}

	\maketitle
	

	\section{Introduction}
Let $\mc H=\C^2$ denote a two-dimensional complex Hilbert space and consider $A$ a Hermitian operator (or \textit{Hamiltonian}) on $\mc H^{\otimes n}$. In many problems in quantum physics and quantum computer science \cite{KKR,BGKT,KSV}, it is important to approximate the operator norm of $A$ 
\begin{equation}\label{eq:duality}
	\|A\|
	:=\sup_{\ket{\psi}}|\braket{\psi|A|\psi} |
		=\sup_{\rho}|\tr[A\rho]|
	\end{equation}
	where $\ket{\psi}$ is any unit vector in $\cH^{\otimes n}$ and $\rho$ is any density operator on $\cH^{\otimes n}$.
	
Computing $\|A\|$ is a hard problem in general when $n$ is large. In this work, we will focus on the case when $A$ is \textit{local}. Recall that any operator $A$ on $\mc H^{\otimes n}$ has the unique Pauli expansion
	\begin{equation}\label{eq:fourier expansion}
	A=\sum_{\lS\in \{0,1,2,3\}^n}\widehat{A}_{\lS}\sigma_{\ls_1}\otimes \cdots \otimes \sigma_{\ls_n}
 =\sum_{\lS\in \{0,1,2,3\}^n}\widehat{A}_{\lS}\sigma_{\lS},
	\end{equation}
	where $\sigma_0=\un$ is the $2$-by-$2$ identity matrix, and $\sigma_j,j=1,2,3$ are the Pauli matrices 
	\begin{equation*}
	 \sigma_1=\begin{pmatrix}0&1\\1&0\end{pmatrix},\quad \sigma_2=\begin{pmatrix}0&-i\\i&0\end{pmatrix},\quad \sigma_3=\begin{pmatrix}1&0\\0&-1\end{pmatrix}
	\end{equation*}
that satisfy anti-commutation relation
\begin{equation}\label{eq:anti}
\sigma_i\sigma_j+\sigma_j\sigma_i=2\delta_{ij}\un,\qquad 1\le i,j\le 3.
\end{equation}
In \eqref{eq:fourier expansion}, $\widehat{A}_{\lS}\in\C$ is the Pauli coefficient, and $\sigma_{\lS}=\sigma_{\ls_1}\otimes \cdots \otimes \sigma_{\ls_n}$ for $\lS=(\ls_1,\dots, \ls_n)\in \{0,1,2,3\}^n$ denote the \emph{Pauli monomials}. For a positive integer $d\ge 1$, we define the degree of $A$ as
$$
\deg(A):=\max\{|\lS|:\widehat{A}_{\lS}\neq 0\}
$$
where $|\lS|:=|\{j: \ls_j\neq 0\}|$. Here and in what follows, for a set $S$, $|S|$ denotes its cardinality.
By saying $A$ is local, we mean $\deg(A)$ is small:
in general, for a positive integer $d$, we say $A$ is \textit{$d$-local} if $\deg(A)\le d$.
We will also say $A$ is $d$-\emph{homogeneous} if $\widehat{A}_{\lS}$ is nonzero only when $|\lS|=d$ (rather than just $|\lS| \leq d$).

One may hope for better methods to compute $\|A\|$ when $A$ is local, but this is still challenging.
In fact, given a 2-local Hamiltonian $A$ with $\|A\|\leq 1$, deciding whether $\|A\|$ is at most $a\in (0,1)$ or at least $b\in (a,1)$ for $|b-a|\leq 1/\poly(n)$ is $\mathbf{QMA}$-Complete.
	$\mathbf{QMA}$ (Quantum Merlin Arthur) is a computational complexity class that is the natural quantum analogue of $\mathbf{NP}$ \cite{GHLS, W}.
	
In this work we show that $\|A\|$ can be approximated up to a multiplicative constant free of $n$, provided that $A$ is local, by considering \eqref{eq:duality} over a ``small'' set of states $\ket{\psi}$ or $\rho$.
We call sets of states allowing for these comparisons \textit{quantum norm designs}.

\begin{defn}
    Let $\X=\X_1,\X_2,\ldots$ be a sequence of sets with $\X_n$ denoting a set of $n$-qubit quantum states.
    We call $\X$ a \textit{quantum norm design} if there exists a constant $C(d)$ depending on $d$ but not $n$ such that for all $n$ and all $n$-qubit degree-d operators $A$,
    \[\sup_{\rho\in \X_n}\lvert\tr[A\rho]\rvert\quad\leq\quad\|A\|\quad\leq\quad C(d)\sup_{\rho\in \X_n}\lvert\tr[A\rho]\rvert\,.\]
\end{defn}
Note that the left-hand side is trivial. Recalling that each $\sigma_j,j=1,2,3$ has $\pm 1$ as eigenvalues, we use $D$ to denote the set of eigenstates of $\sigma_j,j=1,2,3$ corresponding to $\pm1$. Then $|D|=6$.

\begin{theorem}
    \label{thm: quantum norm design}
    Let $D$ be as above.
    Then for all degree-$d$ Hermitian operators $A$ on $\mathcal H^{\otimes n}$,
    \begin{equation}
    \label{ineq:basic-design-ineq}
    \|A\|\leq \tfrac32(3+3\sqrt{2})^d\sup_{\psi\in D^{\otimes n}}\lvert\tr[A\psi]\rvert\,.
    \end{equation}
    That is, $D^{\otimes n}=\{\otimes_{j=1}^n\psi_j\}_{\psi_j\in D, 1\le j\le n}$ for $n=1,2,\ldots$ is a quantum norm design with constant $C(d)=\frac32(3+3\sqrt{2})^d$. Moreover, if $A$ is $d$-homogeneous, we can take a better constant $C(d)=3^d$.
\end{theorem}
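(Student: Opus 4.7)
My plan is to prove the theorem in two stages. First I establish the homogeneous bound $\|A\|\le 3^d M$ for $d$-homogeneous $A$ via a direct random-measurement argument, where $M:=\max_{\psi\in D^{\otimes n}}|\langle\psi|A|\psi\rangle|$. Then I deduce the general $d$-local case by decomposing $A$ into its Pauli-homogeneous parts and invoking a Rademacher-projection bound to pay an additional factor $(1+\sqrt 2)^d$.

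For the homogeneous case, given a density matrix $\rho$ on $\mc H^{\otimes n}$ I run the following randomized protocol: draw $\vec c=(c_1,\ldots,c_n)\in\{1,2,3\}^n$ uniformly and independently of $\rho$, then measure each qubit $j$ in the eigenbasis of $\sigma_{c_j}$ to obtain outcomes $\vec\epsilon=(\epsilon_1,\ldots,\epsilon_n)\in\{\pm 1\}^n$ via Born's rule. The post-measurement product state $\psi_{\vec\epsilon,\vec c}:=\bigotimes_j\psi_{\epsilon_j,c_j}$ lies in $D^{\otimes n}$. Using the single-qubit identity $\langle\psi_{\epsilon_j,c_j}|\sigma_{s_j}|\psi_{\epsilon_j,c_j}\rangle=\delta_{s_j,0}+\epsilon_j\delta_{s_j,c_j}$ together with independence across qubits, a direct calculation shows
\begin{equation*}
\E_{\vec c,\vec\epsilon\,|\,\rho}\bigl[\langle\psi_{\vec\epsilon,\vec c}|A|\psi_{\vec\epsilon,\vec c}\rangle\bigr]=\sum_{\lS}3^{-|\lS|}\widehat A_{\lS}\tr[\sigma_{\lS}\rho].
\end{equation*}
When $A$ is $d$-homogeneous the right-hand side collapses to $3^{-d}\tr[A\rho]$, so $|\tr[A\rho]|\le 3^d M$; taking the supremum over $\rho$ yields the homogeneous bound.

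For the general case, decompose $A=\sum_{k=0}^d A_k$ into Pauli-homogeneous parts. For each fixed $\vec c$, the function $\vec\epsilon\mapsto f_{\vec c}(\vec\epsilon):=\langle\psi_{\vec\epsilon,\vec c}|A|\psi_{\vec\epsilon,\vec c}\rangle$ is a multilinear polynomial of degree $\le d$ on $\{\pm 1\}^n$, and its Walsh projection onto degree $k$ is precisely $\vec\epsilon\mapsto\langle\psi_{\vec\epsilon,\vec c}|A_k|\psi_{\vec\epsilon,\vec c}\rangle$. Invoking the Rademacher-projection bound $\|\mathrm{Proj}_k f\|_\infty\le(1+\sqrt 2)^d\|f\|_\infty$ for multilinear polynomials of degree $\le d$ — one of the auxiliary results announced in the abstract — and then maximizing over $\vec c$, I obtain $M_k\le(1+\sqrt 2)^d M$ for $M_k:=\max_{\psi\in D^{\otimes n}}|\langle\psi|A_k|\psi\rangle|$. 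Combining with the homogeneous bound and the triangle inequality,
\begin{equation*}
\|A\|\le\sum_{k=0}^d\|A_k\|\le\sum_{k=0}^d 3^k M_k\le(1+\sqrt 2)^d\sum_{k=0}^d 3^k M\le\tfrac{3}{2}(3+3\sqrt 2)^d M,
\end{equation*}
using $\sum_{k=0}^d 3^k=(3^{d+1}-1)/2\le\tfrac{3}{2}\cdot 3^d$.

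The main obstacle will be the Rademacher-projection bound with sharp constant $(1+\sqrt 2)^d$: the homogeneous step is a clean one-shot probabilistic computation, but separating $A$ into Pauli-homogeneous parts while controlling $M_k$ in terms of $M$ relies on this polynomial-analytic input as an independent ingredient. The homogeneous argument itself is the conceptual heart of the proof and explains why $3$ is the natural base of the exponent.
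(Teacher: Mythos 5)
Your proof is correct and follows essentially the same route as the paper: your averaged-measurement identity is exactly the paper's scenario decomposition $A=3^{-n}\sum_{\lZ}\ce_{\lZ}(A)$ with the $3^{-|\lS|}$ weights, and your degree-$k$ Walsh projection of $f_{\vec c}$ is the paper's $\Rad_k[\ce_{\lZ}(A)]$ restricted to the commutative subalgebra, handled by the same Figiel bound (the paper even remarks that the classical hypercube version suffices there, as you use). The only differences are organizational — you phrase the homogeneous step via Born-rule duality against $\rho$ rather than via $\|\ce_{\lZ}(A)\|$ and simultaneous diagonalization, and you apply Figiel before rather than after summing over scenarios — and you arrive at the identical constants $3^d$ and $\tfrac32(3+3\sqrt2)^d$.
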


For \textit{homogeneous} 2-local Hamiltonians, Lieb \cite{L} proved a result of the type \eqref{ineq:basic-design-ineq} with multiplicative constant $C_1= 9$.
This was extended to general $2$-local Hamiltonians by Bravyi--Gosset--K\"onig--Temme \cite{BGKT}, who obtained \eqref{ineq:basic-design-ineq} for $2$-local traceless Hamiltonians with the same constant $9$, using a nice idea that allows them to reduce the problem to the homogeneous case studied by Lieb.
We recall their idea  as it is also used in our proof of Figiel's inequality for qubit systems that is discussed in Sect. \ref{sec:Figiel}.

Also implicit in the work of \cite{BGKT} is a proof that tensor products of Pauli eigenstates are quantum norm designs for homogeneous $A$ of general degree with $C(d)=3^d$ (the reader may also consult the Appendix E of \cite{ACKK} where the proof is worked out by Anschuetz, Chen, Kiani and King in full).
Theorem \ref{thm: quantum norm design} extends this line of work to include non-homogeneous $d$-local Hamiltonians.

We next study what flexibility there is in choosing $\mathbf{X}_n$ satisfying the requirements of a quantum norm design, both in terms of the cardinality of $\mathbf{X}_n$ and the geometry of its constituent states.
In Sect. \ref{card} we show in Theorem \ref{thm: small quantum norm design} that in the limit of large $n$, the cardinality of quantum norm designs can be improved from $6^n$ to $C(\epsilon)(1+\epsilon)^n$ for any $\epsilon>0$ by subsampling our candidate norm design from Theorem \ref{thm: quantum norm design}.
We also show this cardinality is essentially optimal, even for norm designs not composed of product states---this is Theorem \ref{thm: lower bound design}.
In Sect. \ref{sec:2design} we further study the geometry of $\mathbf{X}_n$'s by showing that tensor powers of any \emph{1-qubit 2-design} also constitute norm designs.

\medskip

The norm design terminology is inspired by \textit{quantum state designs} \cite{AE}, or more originally the \textit{spherical designs} of Delsarte and Goethals \cite{DGS}, which refer to any discrete sets of points on the sphere, the uniform measure over which reproduces the uniform measure on the whole sphere for low-degree polynomials.
In comparison, here we are only concerned with ``reproducing the operator norm'' for low-degree operators in the asymptotic (and approximate) sense of a dimension-free estimate.


\medskip

Ideas developed in proving the results above also allow us to improve the constant in the \emph{Bohnenblust--Hille inequality} for qubit systems, the central technical result behind recent progress in learning bounded local Hamiltonians and observables \cite{HCP,VZ}.
See Sect. \ref{sec:multilinear} for details. The inequality of Figiel, as well as other related dimension-free inequalities, are discussed in Sect. \ref{sec:Figiel}.

\subsection*{Related work}
We conclude by mentioning some other related work.
Gharabian and Kempe \cite{GK} studied approximation ratio
with respect to the maximal eigenvalue of a local Hamiltonian
which is a sum of positive semidefinite terms.
 Brandao and Harrow \cite{BrH} established
upper bounds on the additive error between the
energy attainable by a product state and the maximal
eigenvalue.

Another relevant result by Harrow and Montanaro \cite{HM} gave an algorithm that given a
traceless $2$-local Hamiltonian $A$ of the form 
$$
A= \sum_{|\lS|=1,2} \widehat{A}_{\lS}\sigma_{\lS}
$$
outputs a product state $\ket \phi$ with energy 
$$
 \sum_{|\lS|=1,2} |\widehat{A}_{\lS}| \lesssim  n\bra \phi A\ket\phi \,.
$$
Let us remark that in the previous work on the aforementioned Bohnenblust--Hille inequality, Volberg and Zhang \cite{VZ} proved that in this setting
\begin{equation}
\label{quarter}
\sum_{|\lS|=1,2} |\widehat{A}_{\lS}| \lesssim n^{1/2}\|A\| \,.
\end{equation}
which comes out as a combination of \cite{VZ} and Fourier analysis results on discrete hypercubes going back to the celebrated Littlewood's $4/3$ inequality \cite{Lit}.


\medskip

Finally, we also mention that Theorem \ref{thm: quantum norm design} bears some similarity to a family of results in the classical approximation theory literature known as \textit{Bernstein-type discretization inequalities}, or \textit{discretizations of the uniform norm}.
Here one seeks to control the supremum norm of a low-degree multivariate polynomial $p$ over some domain $\Omega$ by its absolute supremum over some finite subset $X\subset \Omega$.
Relevant recent work is \cite{DP,BKSVZ}, the latter of which contains some estimates that we will find useful in the sequel.


\subsection*{Notation}
Dirac bra-ket notation will be used for quantum states.
For pure states $\ket{\psi}$ we will use $\psi$ to denote the rank-one projector onto $\ket{\psi}$, \textit{i.e.}, $\psi:=\ketbra{\psi}{\psi}$.

\subsection*{Acknowledgments}
Part of this work was finished when the authors were visiting the program \textit{Boolean Analysis in Computer Science}, at the Hausdorff Institute for Mathematics in the fall of 2024, funded by the Deutsche Forschungsgemeinschaft (DFG, German Research Foundation) under Germany’s Excellence
Strategy – EXC-2047/1 – 390685813. We are grateful to their kind hospitality. 
L.B. is supported by the Collaborative Research Center 1060 funded by the Deutsche Forschungsgemeinschaft and the Hausdorff Center for Mathematics, funded by the DFG under Germany's Excellence Strategy - GZ 2047/1. J.S. is supported by Chris Umans' Simons Institute Investigator Grant. A.V. is supported by NSF DMS-1900286, DMS-2154402 and by Hausdorff Center for Mathematics. H.Z. is supported by NSF DMS-2453408. The third author is grateful to Christopher Baldwin for indicating to him the paper \cite{AGK}. He is grateful to Mark Dykman and Christopher Baldwin for valuable discussions. 

\section{Proof of main results: quantum norm designs}
	\label{sec:multilinear}

In this section we prove the main result, Theorem \ref{thm: quantum norm design}. For any positive integer $n$, we put $[n]:=\{1,2,\dots,n\}.$ We start with some lemmas. 

\medskip

Recall that each Pauli matrix $\sigma_\ls$, $\ls=1,2,3$, has $\pm1$ as eigenvalues. For $\ls=1,2,3$ and $ \eps=\pm 1$, let $\ket{e^{(\ls)}_\eps}$ be the unit eigenvector of $\sigma_\ls$ with eigenvalue $\eps$. One useful property is the following. See also \cite[Lemma 2.1]{VZ}.

\begin{lemma}
\label{lem:anti}
For $\epsilon\in\{-1,1\}$, we have 
\begin{equation}\label{eq:eigenproj}
\ketbra{e^{(\ls)}_\epsilon}{e^{(\ls)}_\epsilon}
=\frac{1}{2}\sigma_0+\frac{1}{2}\epsilon\sigma_\ls,\qquad \ls
=1,2,3
\end{equation}
and 
\begin{equation}\label{eq:orth}
\tr[\sigma_{\ls} \ketbra{e^{(\lt)}_\epsilon}{e^{(\lt)}_\epsilon}]=\epsilon\delta_{\ls\lt},\qquad \ls,\lt=1,2,3.
\end{equation}
Moreover, for each $\ls\in\{1,2,3\}$, there exists a unitary $U_\ls$ over $\mc H=\C^2$ such that $U_\ls^\ast \sigma_\ls U_\ls=\sigma_3$, and thus 
\begin{equation}
U_\ls^\ast \ketbra{e^{(\ls)}_\epsilon}{e^{(\ls)}_\epsilon} U_\ls
=U_\ls^\ast \left(\frac{1}{2}\sigma_0+\frac{1}{2}\epsilon\sigma_\ls\right) U_\ls
=\frac{1}{2}\sigma_0+\frac{1}{2}\epsilon\sigma_3.
\end{equation}
\end{lemma}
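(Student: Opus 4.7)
The plan is to handle the three claims in sequence; each reduces to a short computation with the Pauli basis and the anti-commutation relation \eqref{eq:anti}.

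For \eqref{eq:eigenproj}, I would expand the rank-one projector in the Pauli basis. The four matrices $\{\sigma_0,\sigma_1,\sigma_2,\sigma_3\}$ form an orthogonal basis of the real vector space of Hermitian $2\times 2$ matrices under the Hilbert--Schmidt inner product, since $\tr[\sigma_i\sigma_j]=2\delta_{ij}$ for $0\le i,j\le 3$. Writing $\ketbra{e^{(\ls)}_\epsilon}{e^{(\ls)}_\epsilon}=\sum_{j=0}^3 a_j\sigma_j$, one has $a_j=\tfrac12\braket{e^{(\ls)}_\epsilon|\sigma_j|e^{(\ls)}_\epsilon}$. Unit trace forces $a_0=\tfrac12$; the eigenvalue equation $\sigma_\ls\ket{e^{(\ls)}_\epsilon}=\epsilon\ket{e^{(\ls)}_\epsilon}$ gives $a_\ls=\tfrac12\epsilon$; and for $j\in\{1,2,3\}\setminus\{\ls\}$, anti-commutation $\sigma_j\sigma_\ls=-\sigma_\ls\sigma_j$ yields
\[
\epsilon\,\braket{e^{(\ls)}_\epsilon|\sigma_j|e^{(\ls)}_\epsilon}=\braket{e^{(\ls)}_\epsilon|\sigma_j\sigma_\ls|e^{(\ls)}_\epsilon}=-\braket{e^{(\ls)}_\epsilon|\sigma_\ls\sigma_j|e^{(\ls)}_\epsilon}=-\epsilon\,\braket{e^{(\ls)}_\epsilon|\sigma_j|e^{(\ls)}_\epsilon},
\]
which forces $a_j=0$.

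For \eqref{eq:orth}, I would substitute \eqref{eq:eigenproj} (with $\ls$ replaced by $\lt$) into the trace and use linearity together with $\tr[\sigma_\ls]=0$ and $\tr[\sigma_\ls\sigma_\lt]=2\delta_{\ls\lt}$ for $\ls,\lt\in\{1,2,3\}$.

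For the final claim, I would invoke the spectral theorem: each $\sigma_\ls$ with $\ls\in\{1,2,3\}$ has spectrum $\{\pm 1\}$, so all three are unitarily equivalent. Explicit choices are $U_3=\mathbf{1}$, $U_1=\tfrac{1}{\sqrt 2}(\sigma_1+\sigma_3)$ (the Hadamard gate), and $U_2=\tfrac{1}{\sqrt 2}(\sigma_2+\sigma_3)$; a one-line check using \eqref{eq:anti} verifies that these are unitary and conjugate $\sigma_\ls$ to $\sigma_3$. Conjugating both sides of \eqref{eq:eigenproj} by $U_\ls$ then yields the stated final identity.

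No substantive obstacle arises; the only point requiring care is the vanishing of the off-diagonal Pauli coefficients in the first step, where \eqref{eq:anti} does the essential work. The lemma itself is purely a bookkeeping tool that will be applied tensorially in the main argument.
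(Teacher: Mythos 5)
Your proposal is correct and follows essentially the same route as the paper, which simply declares \eqref{eq:eigenproj} a direct computation, derives \eqref{eq:orth} from it via orthonormality of the Pauli matrices under $\tfrac12\tr$, and gets the unitary from the fact that each $\sigma_\ls$ is Hermitian with eigenvalues $\pm1$. Your Pauli-basis expansion with the anti-commutation argument for the vanishing off-diagonal coefficients is just a careful write-up of the paper's ``direct computation,'' and your explicit choices of $U_\ls$ check out.
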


\begin{proof}
The first equation \eqref{eq:eigenproj} is a direct computation. 
The second equation \eqref{eq:orth} follows from \eqref{eq:eigenproj} and the fact that $\sigma_{\ls},0\le \ls\le 3$ are orthonormal with respect to the inner product given by the normalized trace $\frac{1}{2}\tr$. The last statement follows from the fact that each of $\sigma_\ls,\ls=1,2,3$ is Hermitian with eigenvalues $\pm 1$.
\end{proof}

The main difficulty of proving Theorem \ref{thm: quantum norm design} is the non-commutativity of generic pairs of $\sigma_{\lS}$'s, $\lS\in\{0,1,2,3\}^n$.
If $A$ happens to be a linear combination of $\sigma_{\lS}$'s that commute, one may use their common eigenprojections as $\mathbf{X}_n$. To deal with the general case, we will organize $\sigma_{\lS}$'s according to a partial order on indices.

	\begin{defn}[Partial order on Pauli monomials]
	\label{defn:pauli-partial-order}
		Let $\lS,\lT\in\{0,1,2,3\}^n$.
		Then we say $\lS\leq \lT$ if for all $j=1,2,\ldots, n$ it holds that $\ls_j\in\{0,\lt_j\}$.
	\end{defn}
	
	Note that the maximal elements with respect to $\leq$ are all the $\lZ\in[3]^n=\{1,2,3\}^n$, corresponding to the Pauli monomials $\sigma_{\lZ}$ of maximum degree.

For all $\lZ\in[3]^n$ and $\beps\in\{-1, 1\}^n$, define 
\begin{equation}\label{r-s}
\rho_{\beps, \lZ}
:=\ketbra{e^{\lz_1}_{\eps_1}}{e^{\lz_1}_{\eps_1}}\otimes\dots\otimes  \ketbra {e^{\lz_n}_{\eps_n}}{e^{\lz_n}_{\eps_n}} 
\stackrel{\eqref{eq:eigenproj}}{=}\Big(\frac12\sigma_0+ \frac12 \eps_1 \sigma_{\lz_1}\Big)\otimes\dots\otimes  \Big(\frac12\sigma_0+ \frac12 \eps_n \sigma_{\lz_n}\Big),
\end{equation}
\textit{i.e.,} the tensor product of eigenprojections of $\sigma_{\lz_j},j\in [n]$ corresponding to eigenvalues $\eps_j$, $j\in [n]$. For any $\lZ\in[3]^n$, consider
the map
\begin{equation}\label{defn:es}
\ce_{\lZ}(A):=\sum_{\beps\in\{-1,1\}^n}\rho_{\beps,\lZ}A\rho_{\beps,\lZ}.
\end{equation}
The operator $\ce_{\lZ}$ is the conditional expectation onto the commutative subalgebra $\mathcal{A}_{\lZ}$ generated by 
$$\un\otimes \cdots \otimes \sigma_{\lz_j}\otimes  \cdots \otimes \un,\qquad j\in [n]$$
where $\sigma_{\lz_j}$ appears in the $j$-th place. 
It also is related to the $n$-fold tensor product of the 1-qubit depolarizing channel with parameter 1/3 employed by \cite{BGKT} via averaging over $\lZ$'s, as explained as part of the next lemma.

\begin{lemma}
For any $\lS\in\{0,1,2,3\}^n$ and $\lZ\in[3]^n$, $\ce_{\lZ}$ is a conditional expectation such that 
\begin{equation}\label{es:key1}
\ce_{\lZ}(\sigma_{\lS})=
\begin{cases}
\sigma_{\lS}, & \lS\le \lZ\\
0, & \lS\not\le \lZ
\end{cases}
\end{equation}
and 
\begin{equation}\label{es:key2}
\frac{1}{3^n}\sum_{\lZ\in[3]^n}\ce_\lZ(\sigma_{\lS})=3^{-|\lS|}\sigma_{\lS}.
\end{equation}
As a consequence, for any $A=\sum_{\lS}\widehat{A}_{\lS}\sigma_{\lS}$ one has
\begin{equation}\label{es:key1 general}
\ce_{\lZ}(A)=\sum_{\lS:\lS\le \lZ}\widehat{A}_{\lS}\sigma_{\lS}
\end{equation}
and
\begin{equation}\label{es:key2 general}
\frac{1}{3^n}\sum_{\lZ\in[3]^n}\ce_\lZ(A)=\sum_{\lS}3^{-|\lS|}\widehat{A}_{\lS}\sigma_{\lS}.
\end{equation}
Moreover, 
\begin{equation}
\label{AsA}
\tr [A \,\rho_{\beps, \lZ}] = \tr [\ce_\lZ(A)\, \rho_{\beps,\lZ}], \quad  \beps\in \{-1,1\}^n, \, \lZ\in [3]^n.
\end{equation}
\end{lemma}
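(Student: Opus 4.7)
The plan is to establish the site-by-site action of the map $\ce_{\lZ}$ on individual Pauli monomials, then extract everything else by linearity and simple counting.

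First I would verify \eqref{es:key1}. Since $\rho_{\beps,\lZ}=\bigotimes_j P^{(\lz_j)}_{\eps_j}$ with $P^{(\lz_j)}_{\eps_j}=\tfrac12(\sigma_0+\eps_j\sigma_{\lz_j})$ and $\sigma_{\lS}=\bigotimes_j \sigma_{\ls_j}$, the conjugation $\rho_{\beps,\lZ}\sigma_{\lS}\rho_{\beps,\lZ}$ factors into a tensor product of one-site expressions $P^{(\lz_j)}_{\eps_j}\sigma_{\ls_j}P^{(\lz_j)}_{\eps_j}$. There are three cases, each handled using only \eqref{eq:anti} and the projection identity $P^2=P$:
\begin{itemize}
\item If $\ls_j=0$, the one-site factor is just $P^{(\lz_j)}_{\eps_j}$.
\item If $\ls_j=\lz_j$, then $\sigma_{\lz_j}$ commutes with and is diagonalized by $P^{(\lz_j)}_{\eps_j}$ with eigenvalue $\eps_j$, giving $\eps_j P^{(\lz_j)}_{\eps_j}$.
\item If $\ls_j\in\{1,2,3\}\setminus\{\lz_j\}$, then $\sigma_{\ls_j}$ anti-commutes with $\sigma_{\lz_j}$, so $P^{(\lz_j)}_{\eps_j}\sigma_{\ls_j}=\sigma_{\ls_j}P^{(\lz_j)}_{-\eps_j}$, and orthogonality $P^{(\lz_j)}_{\eps_j}P^{(\lz_j)}_{-\eps_j}=0$ kills the term.
\end{itemize}
Thus if $\lS\not\le\lZ$ the whole tensor product vanishes for every $\beps$; if $\lS\le\lZ$ the tensor summed over $\beps$ produces, at each site $j$, either $\sum_{\eps_j}P^{(\lz_j)}_{\eps_j}=\sigma_0$ (when $\ls_j=0$) or $\sum_{\eps_j}\eps_j P^{(\lz_j)}_{\eps_j}=\sigma_{\lz_j}=\sigma_{\ls_j}$ (when $\ls_j=\lz_j$). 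Either way the factor equals $\sigma_{\ls_j}$, which reassembles to $\sigma_{\lS}$. This also shows $\ce_{\lZ}$ is a conditional expectation onto $\mathcal{A}_{\lZ}$: it is clearly completely positive, unital, and the computation exhibits its range as precisely the span of $\{\sigma_{\lS}:\lS\le\lZ\}=\mathcal{A}_{\lZ}$, while acting as the identity there.

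Next, \eqref{es:key1 general} is immediate by linearity of $\ce_{\lZ}$ and \eqref{eq:fourier expansion}. To get \eqref{es:key2}, I would count: for fixed $\lS$, the condition $\lS\le\lZ$ fixes $\lz_j=\ls_j$ whenever $\ls_j\ne 0$, but leaves $\lz_j\in\{1,2,3\}$ free whenever $\ls_j=0$. So there are exactly $3^{n-|\lS|}$ such $\lZ\in[3]^n$, and applying \eqref{es:key1} gives $\sum_{\lZ}\ce_{\lZ}(\sigma_{\lS})=3^{n-|\lS|}\sigma_{\lS}$, which yields \eqref{es:key2}. Equation \eqref{es:key2 general} follows by linearity.

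Finally, for \eqref{AsA}, I would use that the projections $\rho_{\beps,\lZ}$ as $\beps$ varies form a resolution of the identity diagonalizing $\mathcal{A}_{\lZ}$, so $\rho_{\beps',\lZ}\rho_{\beps,\lZ}=\delta_{\beps,\beps'}\rho_{\beps,\lZ}$. Combining this with rank-one idempotence $\rho_{\beps,\lZ}^2=\rho_{\beps,\lZ}$ and cyclicity of the trace,
\[
\tr[\ce_{\lZ}(A)\rho_{\beps,\lZ}]=\sum_{\beps'}\tr[\rho_{\beps',\lZ}A\rho_{\beps',\lZ}\rho_{\beps,\lZ}]=\tr[\rho_{\beps,\lZ}A\rho_{\beps,\lZ}]=\tr[A\rho_{\beps,\lZ}^2]=\tr[A\rho_{\beps,\lZ}].
\]
I do not expect a real obstacle here; the only delicate bookkeeping is the three-case site analysis for \eqref{es:key1}, and after that everything is linearity and orthogonality.
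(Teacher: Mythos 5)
Your proof is correct and follows essentially the same route as the paper: a site-by-site analysis of $\rho_{\beps,\lZ}\sigma_{\lS}\rho_{\beps,\lZ}$ (you phrase the vanishing case via anti-commutation and orthogonality of the spectral projections, the paper via the matrix elements $\bra{e^{\lz_j}_{\eps_j}}\sigma_{\ls_j}\ket{e^{\lz_j}_{\eps_j}}$, which is the same computation), followed by linearity and the $3^{n-|\lS|}$ count. Your direct verification of \eqref{AsA} using $\rho_{\beps',\lZ}\rho_{\beps,\lZ}=\delta_{\beps\beps'}\rho_{\beps,\lZ}$ and cyclicity is a valid, slightly more hands-on substitute for the paper's appeal to the abstract conditional-expectation property.
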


\begin{rem}
\label{scenario}
One can call $\lZ$ a {\it scenario}, it is the same as a map $s: [n]\to [3]$, and $\ce_{\lZ}(A)$ gives us the sum of monomials of $A$ such that on $i$-th place
monomials have either $\sigma_0$ or $\sigma_{s(i)}$.
\end{rem}

\begin{proof}
By definition, $\ce_{\lZ}$ is linear and completely positive. By \eqref{es:key1}, it is unital and $\ce_{\lZ}^2=\ce_{\lZ}$, thus a conditional expectation. The equations \eqref{es:key1 general} and \eqref{es:key2 general} are immediate consequences of \eqref{es:key1} and \eqref{es:key2} by linearity. The identity \eqref{AsA} is a consequence of the fact that $\ce_{\lZ}$ is a conditional expectation, since $\rho_{\beps,\lZ}$ belongs to the commutative subalgebra $\mathcal{A}_{\lZ}$. Or, one can see \eqref{AsA} from linearity, \eqref{es:key1} and \eqref{prod} below. So, it suffices to verify \eqref{es:key1} and \eqref{es:key2}.
 
 To see \eqref{es:key1}, note that 
 $$
 \ce_{\lZ}(\sigma_{\lS})=
 \sum_{\eps}\prod_{j\in [n]}\bra{e^{\lz_j}_{\epsilon_j}}\sigma_{\ls_j}\ket{e^{\lz_j}_{\epsilon_j}}\cdot \ketbra{e^{\lz_1}_{\epsilon_1}}{e^{\lz_1}_{\epsilon_1}}\otimes \cdots\otimes \ketbra{e^{\lz_n}_{\epsilon_n}}{e^{\lz_n}_{\epsilon_n}}.
 $$
Recall that by Lemma \ref{lem:anti}
$$
\bra{e^{\lz_j}_{\epsilon_j}}\sigma_{\ls_j}\ket{e^{\lz_j}_{\epsilon_j}}
=
\begin{cases}
1, & \ls_j=0\\
\epsilon_j, & \ls_j=\lz_j\\
0, &\textnormal{otherwise}
\end{cases}
$$
which implies 
\begin{equation}\label{prod}
\prod_{j\in [n]}\bra{e^{\lz_j}_{\epsilon_j}}\sigma_{\ls_j}\ket{e^{\lz_j}_{\epsilon_j}}
=
\begin{cases}
\prod_{j:\ls_j=\lz_j}\epsilon_j, & \lS\le \lZ\\
0, &\textnormal{otherwise}
\end{cases}.
\end{equation}
Therefore, when $\lS\not\le \lZ$, the identity \eqref{es:key1} holds since both sides vanish. When $\lS\le \lZ$, the right-hand side of \eqref{es:key1} is $\sigma_{\lS}$, while the left-hand side is
\begin{align*}
\sum_{\epsilon}\rho_{\beps,\lZ}\sigma_{\lS}\rho_{\beps,\lZ}
&\stackrel{\eqref{prod}}{=}\sum_{\beps}\prod_{j:\ls_j=\lz_j}\epsilon_j\cdot \ketbra{e^{\lz_1}_{\epsilon_1}}{e^{\lz_1}_{\epsilon_1}}\otimes \cdots\otimes \ketbra{e^{\lz_n}_{\epsilon_n}}{e^{\lz_n}_{\epsilon_n}}\\
&\, \, \,=\sum_{\beps}\left({\textstyle\bigotimes}_{j:\ls_j=0}\ketbra{e^{\lz_j}_{\epsilon_j}}{e^{\lz_j}_{\epsilon_j}}\right)\otimes\left({\textstyle\bigotimes}_{j:\ls_j=\lz_j}\,\epsilon_j
\ketbra{e^{\lz_j}_{\epsilon_j}}{e^{\lz_j}_{\epsilon_j}}\right)\\
&\, \, \,=\left({\textstyle\bigotimes}_{j:\ls_j=0}\sum_{\eps_j}\ketbra{e^{\lz_j}_{\epsilon_j}}{e^{\lz_j}_{\epsilon_j}}\right)\otimes\left({\textstyle\bigotimes}_{j:\ls_j=\lz_j}\sum_{\eps_j}\epsilon_j
\ketbra{e^{\lz_j}_{\epsilon_j}}{e^{\lz_j}_{\epsilon_j}}\right)\\
&\, \, \,=\left(\otimes_{j:\ls_j=0}\un\right)\otimes\left(\otimes_{j:\ls_j=\lz_j}\sigma_{\lz_j}\right)\\
&\, \, \,=\sigma_{\lS}.
\end{align*}
This proves \eqref{es:key1}. The identity \eqref{es:key2} follows from applying \eqref{es:key1} via
$$
\frac{1}{3^n}\sum_{\lZ\in[3]^n}\ce_\lZ(\sigma_{\lS})=\frac{1}{3^n}\sum_{\lZ:\lS\le \lZ}\sigma_{\lS}
$$
and the fact that, for fixed $\lS$, the number of $\lZ\in[3]^n$ satisfying $\lS\le \lZ$ is exactly $3^{n-|\lS|}$.
\end{proof}

%



To some extent, we shall use the conditional expectations $\ce_\lZ$ in the above lemma to reduce the problem to the commutative subalgebras. We will need some tools from the classical setting, which we now recall. 
Any function $f:\{-1,1\}^n\to \C$ has a unique Fourier expansion 
$$
f=\sum_{S\subset [n]}\widehat{f}(S)\chi_S, \qquad \chi_S(x)=\prod_{j\in S}x_j.
$$ 
Thus, $f$ is realized uniquely as a multi-linear (or multi-affine) polynomial, and its degree is defined as $\deg(f):=\max_{\widehat{f}(S)\neq 0}|S|$. Its $k$-homogeneous part is $f_k=\sum_{|S|=k}\widehat{f}(S)\chi_S$.
The following inequality is named after Figiel \cite{MS} and can be found in \cite[Lemma 1]{DMP}: For $f:\{-1,1\}^n\to \R$ of degree at most $d$, its $k$-homogeneous part $f_k$ satisfies 
\begin{equation}
\label{ineq:Rk}
\max_{x\in \{-1,1\}^n}|f_k(x)|\le C(d,k) \max_{x\in \{-1,1\}^n}|f(x)|,\qquad 0\le k\le d,
\end{equation}
 where $C(d,k)$ is a constant depending only on $d$ and $k$, and in particular, $C(d,k)\le (\sqrt{2}+1)^d$.

We will discuss more about \eqref{ineq:Rk} in Section \ref{sec:Figiel}. The next lemma is a qubit analog of \eqref{ineq:Rk}, and we only use the bound $(\sqrt{2}+1)^d$ here, since this is not essential for the proof of Theorem \ref{thm: quantum norm design}. 


\begin{lemma}\label{lem:figiel}
 Let $0\le k\le d\le n$. Suppose that $A$ is a Hermitian operator over $\cH^{\otimes n}$ of degree at most $d$:
 \begin{equation*}
 A=\sum_{|\lS|\le d}\widehat{A}_{\lS}\sigma_{\lS}.
 \end{equation*}
 Then the level $k$-Rademacher projection $\Rad_{k}$ defined by
 \begin{equation*}
\Rad_{k}(A):=\sum_{|\lS|=k}\widehat{A}_{\lS}\sigma_{\lS}
 \end{equation*} 
 satisfies
 \begin{equation}\label{ineq:rk lem}
 \|\Rad_{k}(A)\|\le (\sqrt{2}+1)^d\|A\|.
 \end{equation}
 \end{lemma}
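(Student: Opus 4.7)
My plan is to reduce the inequality to its classical counterpart \eqref{ineq:Rk} via the conditional expectations $\ce_\lZ$ introduced just above.

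For each scenario $\lZ\in[3]^n$, Lemma~\ref{lem:anti} provides a product unitary $U_\lZ:=U_{\lz_1}\otimes\cdots\otimes U_{\lz_n}$ that conjugates $\cA_\lZ$ onto the diagonal subalgebra of $B((\C^2)^{\otimes n})$, hence yields a $C^*$-isomorphism $\cA_\lZ\cong\ell^\infty(\{-1,1\}^n)$. Under this identification, $\ce_\lZ(A)$ corresponds to the Walsh polynomial
\[
f_\lZ(\beps):=\tr[A\,\rho_{\beps,\lZ}]
\]
on $\{-1,1\}^n$, which by \eqref{es:key1 general} has Fourier--Walsh degree at most $d$, and whose sup norm equals $\|\ce_\lZ(A)\|\le\|A\|$ because $\ce_\lZ$ is a norm-$1$ conditional expectation. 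Since $\ce_\lZ$ also commutes with $\Rad_k$, the $k$-homogeneous part $(f_\lZ)_k$ corresponds to $\ce_\lZ(\Rad_k(A))$ under the isomorphism.

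Applying the classical Figiel inequality \eqref{ineq:Rk} slicewise in $\lZ$ then yields
\[
\|\ce_\lZ(\Rad_k(A))\|=\|(f_\lZ)_k\|_\infty\le(\sqrt{2}+1)^d\,\|f_\lZ\|_\infty\le(\sqrt{2}+1)^d\,\|A\|,
\]
i.e.\ $|\tr[\Rad_k(A)\,\rho_{\beps,\lZ}]|\le(\sqrt{2}+1)^d\|A\|$ against every product Pauli eigenstate. A second route to the same slicewise bound is to use the one-parameter family of product density matrices $\rho^\lambda_{\beps,\lZ}:=\bigotimes_j\tfrac12(\un+\lambda\epsilon_j\sigma_{\lz_j})$, valid for every $\lambda\in[-1,1]$ by 1-qubit positivity: then $\lambda\mapsto\tr[A\,\rho^\lambda_{\beps,\lZ}]$ is a degree-$d$ polynomial in $\lambda$ bounded by $\|A\|$ on $[-1,1]$ whose $\lambda^k$-coefficient is exactly $\tr[\Rad_k(A)\,\rho_{\beps,\lZ}]$, and a one-dimensional Chebyshev coefficient estimate delivers the same constant.

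The main obstacle I expect is the final passage from this slicewise control to the operator norm of $\Rad_k(A)$ without any additional exponential loss. A naive triangle inequality applied to the averaging identity $\Rad_k(A)=3^{k-n}\sum_\lZ\ce_\lZ(\Rad_k(A))$ would cost a factor $3^k$; to bypass that loss, I would instead work with the operator-valued polynomial $g(\lambda):=\Phi_\lambda^{\otimes n}(A)=\sum_{j\le d}\lambda^j\Rad_j(A)$, which satisfies $\|g(\lambda)\|\le\|A\|$ for $\lambda\in[-1/3,1]$ because $\Phi_\lambda^{\otimes n}$ is a unital CP channel on that range. Representing
\[
\Rad_k(A)=\int w(\lambda)\,g(\lambda)\,d\lambda
\]
for a signed Chebyshev-type extraction kernel $w$ on $[-1/3,1]$ with moments $\int w(\lambda)\lambda^j\,d\lambda=\delta_{jk}$ for $0\le j\le d$ reduces the lemma to bounding the total variation $\int|w|$ by $(\sqrt{2}+1)^d$. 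Since the natural interval $[-1/3,1]$ is asymmetric---in contrast to the symmetric $[-1,1]$ underlying the classical Figiel bound---achieving the sharp constant along this route will require an additional Chebyshev-theoretic step, and this is the point where I expect the proof to require the most care.
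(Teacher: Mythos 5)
Your second route—extracting the $k$-th coefficient of the operator-valued polynomial $\lambda\mapsto\sum_{j\le d}\lambda^j\Rad_j(A)$ by a Chebyshev-type functional—is essentially the paper's argument (the paper dualizes against trace-class $B$ with $\|B\|_1\le1$ to reduce to a scalar polynomial, which is equivalent). But there is one genuine gap, and it sits exactly where you flagged uncertainty: the restriction to the interval $[-1/3,1]$. You justify $\|g(\lambda)\|\le\|A\|$ only for $\lambda\in[-1/3,1]$ because that is where the one-qubit depolarizing map is completely positive, so that its $n$-fold tensor power is a unital CP (hence Russo--Dye-contractive) channel. On that asymmetric interval, however, the coefficient-extraction problem is governed by the Chebyshev polynomial transplanted to $[-1/3,1]$, whose leading coefficient is $2^{d-1}(3/2)^d=3^d/2$; so for $k=d$ the best possible total variation of your kernel $w$ is $3^d/2$, which exceeds $(\sqrt2+1)^d$ once $d\ge5$. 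No "additional Chebyshev-theoretic step" on $[-1/3,1]$ can recover the stated constant; asymptotically this route gives roughly $3^d$, no better than the naive triangle-inequality recombination you were trying to avoid.

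The missing ingredient is operator-algebraic, not Chebyshev-theoretic: the map $P_\lambda(A)=\sum_\lS\lambda^{|\lS|}\widehat A_\lS\sigma_\lS$ is in fact a contraction in operator norm on the \emph{full symmetric} interval $\lambda\in[-1,1]$, even though it fails to be completely positive (indeed fails to be positive on $\cH^{\otimes n}$) for $\lambda<-1/3$. The paper proves this (Lemma \ref{const1}) by writing $P_{-1}(A)=(UAU)^T$ with $U=\sigma_2\otimes\cdots\otimes\sigma_2$, using $(\sigma_2\sigma_j\sigma_2)^T=-\sigma_j$ for $j=1,2,3$; unitary conjugation and transposition preserve all Schatten norms, so $P_{-1}$ is an isometry and $P_\lambda=P_{-\lambda}P_{-1}$ is a contraction for $\lambda\in[-1,0)$ as well. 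With boundedness on all of $[-1,1]$ in hand, the standard symmetric-interval coefficient bound delivers $C(d,k)\le(\sqrt2+1)^d$, which is exactly the paper's proof. Your first, slicewise half (control of $\tr[\Rad_k(A)\rho_{\beps,\lZ}]$ via $\ce_\lZ$ and the classical Figiel inequality) is correct as far as it goes, but as you acknowledge it only bounds $\Rad_k(A)$ against product Pauli eigenstates, which undershoots $\|\Rad_k(A)\|$ by a factor up to $3^k$, so it cannot close the argument on its own.
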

 
 \begin{proof}
 See Section \ref{sec:Figiel}. 
 \end{proof}
 
Now we are ready to prove Theorem \ref{thm: quantum norm design}.


\begin{proof}[Proof of Theorem \ref{thm: quantum norm design}]
We start with the homogeneous case, that is,
\begin{equation}
A=\sum_{\lS:|\lS|=d}\widehat{A}_{\lS}\sigma_{\lS}.
\end{equation}
Fix $\lZ\in[3]^n$ and consider the unitary $U_\lZ:=U_{\lz_1}\otimes \cdots \otimes U_{\lz_n}$ on $\mc H^{\otimes n}$. By Lemma \ref{lem:anti}, for any $\lS\le \lZ$, $U_\lZ^\ast \sigma_{\lS}U_\lZ$ is a tensor product of $\sigma_0$ and $\sigma_3$'s: On the $j$-th place, one has $\sigma_0$ if $\ls_j=0$, and $\sigma_3$ if $\ls_j=\lz_j$. So $U_\lZ^\ast \ce_\lZ(A) U_\lZ$ is a diagonal matrix (or a function on $\{-1,1\}^n$), implying 
\begin{equation}
\| U_\lZ^\ast \ce_\lZ(A) U_\lZ\|=\max_\beps |\tr[U_\lZ^\ast \ce_\lZ(A) U_\lZ\rho_\beps]|
\end{equation}
where $\rho_\beps$, $\beps=(\eps_1,\dots, \eps_n)\in \{-1,1\}^n$, are given by
\begin{equation}
\rho_{\beps}
:= \ket {e^{3}_{\eps_1}}\bra {e^{3}_{\eps_1}}\otimes\dots\otimes  \ket {e^{3}_{\eps_n}}\bra {e^{3}_{\eps_n}}
\stackrel{\eqref{eq:eigenproj}}{=}\Big(\frac12\sigma_0+ \frac12 \eps_1 \sigma_{3}\Big)\otimes\dots\otimes  \Big(\frac12\sigma_0+ \frac12 \eps_n \sigma_{3}\Big).
\end{equation}
Recall that by Lemma \ref{lem:anti}, $U_{\lZ}^\ast \rho_{\beps,\lZ}U_{\lZ}=\rho_{\beps}$. In other words, naively we may diagonalize all the involved $\sigma_{\lS}$ simultaneously to get
\begin{equation}
\label{aS}
\|\ce_\lZ(A)\|
=\max_\beps |\tr[U_\lZ^\ast \ce_\lZ(A) U_\lZ\rho_\beps]|
=\max_\beps |\tr [\ce_\lZ(A) \, \rho_{\beps, \lZ}]|
\end{equation} 
So, there is nothing to do in case  $A=\ce_\lZ(A)$.
If $A$ is not of the particular form $\ce_\lZ(A)$ for some $\lZ \in [3]^n$, we recall \eqref{es:key2 general}
\begin{equation}\label{eq:A-As homo}
A=3^{d-n}\sum_{\lZ\in [3]^n}\ce_\lZ(A)
\end{equation} 
since $A$ is homogeneous of degree $d$. 
Combining \eqref{AsA}, \eqref{aS}, and \eqref{eq:A-As homo}, we have 
\begin{align*}
\label{eq1}
\|A\|
&\stackrel{\eqref{eq:A-As homo}}{\le }
3^{d-n} \sum_{\lZ \in [3]^n}\| \ce_\lZ(A)\|\\
&\stackrel{\eqref{aS}}{=}
3^{d-n}\sum_{\lZ \in [3]^n} \max_{\beps\in \{-1,1\}^n} |\tr [\ce_\lZ(A) \,\rho_{\beps, \lZ}]| \\
& \stackrel{\eqref{AsA}}{=} 3^{d-n}\sum_{\lZ \in [3]^n} \max_{\beps\in \{-1,1\}^n} |\tr [A\, \rho_{\beps, \lZ}]|\\
& \, \, \,\le  \, \, \, 3^d \max_{\lZ, \beps} |\tr[A\,\rho_{\beps, \lZ}]|\,.
\end{align*}
This finishes the proof of the homogeneous case with constant $3^d$.
This recovers the argument of \cite{BGKT} (and \cite[Appendix E]{ACKK}) in our language.

Now let us treat the general case of non-homogeneous $A$ of degree at most $d$:
\begin{equation}
A=\sum_{\lS:|\lS|\le d}\widehat{A}_{\lS}\sigma_{\lS}.
\end{equation}
We are going to follow the same argument as in the homogeneous case, and the main difference is that instead of a nice form of \eqref{eq:A-As homo}, we now only have 
have
\begin{equation}
\label{avd}
A = 3^{-n} \sum_{\lZ \in [3]^n}\sum_{k=0}^{d}3^k\sum_{|\lS|=k, \lS \le \lZ}\widehat{A}_{\lS}\sigma_{\lS}
=3^{-n} \sum_{\lZ \in [3]^n}\sum_{k=0}^{d}3^k\Rad_k[\ce_\lZ(A)]
\end{equation}
by \eqref{es:key2} or \eqref{es:key2 general}.
Then, combining \eqref{AsA},  \eqref{ineq:rk lem}, \eqref{aS}, and \eqref{avd} we obtain
\begin{align*}
\label{final}
\|A\| 
&\stackrel{\eqref{avd}}{\le}  3^{-n} \sum_{\lZ\in [3]^n}\sum_{k=0}^d 3^k\|\Rad_k[\ce_\lZ(A)]\|    \notag
\\
& \stackrel{\eqref{ineq:rk lem}}{\le}  (1+\sqrt{2})^d\,3^{-n} \sum_{\lZ\in [3]^n} \sum_{0\le k\le d} 3^k \|\ce_\lZ(A)\| \notag
\\
&\stackrel{\eqref{aS}}{=} \frac{3^{d+1}-1}{2}(1+\sqrt{2})^d 3^{-n} \sum_{\lZ\in [3]^n}  \max_\eps|\tr[\ce_\lZ(A) \, \rho_{\beps, \lZ}]|\notag
\\
&\stackrel{\eqref{AsA}}{=} \frac{3^{d+1}-1}{2}(1+\sqrt{2})^d 3^{-n} \sum_{\lZ\in [3]^n}  \max_\eps|\tr[A \, \rho_{\beps, \lZ}]|\notag
\\
&\, \, \,\le \,\,\,\frac{3}{2}(3+3\sqrt{2})^d  \max_{\beps, \lZ}|\tr[A \, \rho_{\beps, \lZ}]|.
\end{align*}
This concludes the proof of the non-homogeneous case with constant $\frac{3}{2}(3+3\sqrt{2})^d$.
\end{proof}

\begin{rem}
In the above proof of the non-homogeneous case where we used \eqref{ineq:rk lem} of Lemma \ref{lem:figiel}, we may appeal to its classical version \eqref{ineq:Rk} instead, since we applied \eqref{ineq:rk lem} to $\ce_\lZ(A)$ that lies in the commutative subalgebra $\mathcal{A}_\lZ$. 
\end{rem}


We conclude this section with another application of the above method. The so-called \textit{Bohnenblust--Hille inequality} for discrete hypercubes $\{-1,1\}^n$ states that for any function $f:\{-1,1\}^n\to \R$ of degree at most $d$, we have the dimension-free estimate
\begin{equation}\label{bh boolean}
\|\widehat{f}\|_{\frac{2d}{d+1}}:=\left(\sum_{S}|\widehat{f}(S)|^{\frac{2d}{d+1}}\right)^{\frac{d+1}{2d}}
\le \bh^{\le d}_{\{\pm 1\}}\|f\|_{\infty},
\end{equation}
where $\|f\|_{\infty}$ is the uniform norm over $\{-1,1\}^n$ and $\bh^{\le d}_{\{\pm 1\}}<\infty$ denotes the best constant. We refer to \cite{DMP,DGMS} for more background about this inequality, and the best known estimate is $\bh^{\le d}_{\{\pm 1\}}\le C^{\sqrt{d\log d}}$ with $C>1$ being a universal constant. 

A qubit analog of \eqref{bh boolean} was proved by Huang--Chen--Preskill \cite{HCP} and Volberg--Zhang \cite{VZ}. Namely, for any operator $A$ over $\mc H^{\otimes n}$ of degree at most $d$, one has  
\begin{equation}\label{bh qubit}
\|\widehat{A}\|_{\frac{2d}{d+1}}:=\left(\sum_{\lS}|\widehat{A}_{\lS}|^{\frac{2d}{d+1}}\right)^{\frac{d+1}{2d}}
\le \bh^{\le d}_{M_2}\|A\|,
\end{equation}
where $\bh^{\le d}_{M_2}<\infty$ denotes the best constant. Clearly, $\bh^{\le d}_{\{\pm 1\}}\le \bh^{\le d}_{M_2}$. The main result of \cite{VZ} states that 
\begin{equation}\label{bh 3d}
\bh^{\le d}_{M_2}\le 3^d\bh^{\le d}_{\{\pm 1\}}
\end{equation}
via a reduction method. 

The proof of our Theorem \ref{thm: quantum norm design} relies on a similar reduction idea. In summary, the key ingredient, for $A$ homogeneous of degree $d$ over $\mc H^{\otimes n}$, states that 
\begin{equation}
\max_{\lZ\in[3]^n}\|\ce_\lZ(A)\|\le \|A\|\le 3^{d-n}\sum_{\lZ\in[3]^n}\|\ce_\lZ(A)\|
\le 3^d\max_{\lZ\in[3]^n}\|\ce_\lZ(A)\|.
\end{equation}
Using this idea we can improve upon the previous upper bound \eqref{bh 3d}.

\begin{prop}
For all $d\ge 1$, we have $\bh^{\le d}_{M_2}\le \sqrt{3}^{d+1}\bh^{\le d}_{\{\pm 1\}}$. In other words, for any operator $A$ over $\mc H^{\otimes n}$ of degree at most $d$, one has  
\begin{equation}\label{bh qubit new}
\|\widehat{A}\|_{\frac{2d}{d+1}}\le \sqrt{3}^{d+1}\bh^{\le d}_{\{\pm 1\}}\|A\|.
\end{equation}
\end{prop}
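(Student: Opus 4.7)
\bigskip

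\noindent\textbf{Proof proposal.} The plan is to apply the classical Bohnenblust--Hille inequality on $\{-1,1\}^n$ separately to each conditional expectation $\ce_\lZ(A)$, raise to the Bohnenblust--Hille exponent before summing over $\lZ\in[3]^n$, and then use a double-counting observation: each Pauli coefficient $\widehat{A}_\lS$ contributes to exactly $3^{n-|\lS|}$ of the quantities $\ce_\lZ(A)$ and $|\lS|\le d$. This replaces the naive max bound used to derive \eqref{bh 3d} with an $\ell^p$-combination, which is the source of the improvement from $3^d$ to $\sqrt{3}^{\,d+1}$.

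First I would identify, for each $\lZ\in[3]^n$, the operator $\ce_\lZ(A)=\sum_{\lS\le \lZ}\widehat{A}_\lS\sigma_\lS$ with a low-degree function on the discrete cube. Conjugating by the unitary $U_\lZ:=U_{\lz_1}\otimes\cdots\otimes U_{\lz_n}$ from Lemma \ref{lem:anti} makes $\ce_\lZ(A)$ diagonal, since every $\sigma_\lS$ with $\lS\le \lZ$ turns into a tensor product of $\sigma_0$'s and $\sigma_3$'s. Identifying diagonal matrices with functions on $\{-1,1\}^n$, this produces a multilinear polynomial
\[
f_\lZ(x)\;=\;\sum_{S\subseteq[n]:|S|\le d}\widehat{A}_{\lS(S,\lZ)}\chi_S(x),
\]
where $\lS(S,\lZ)$ is defined by $\ls_j=\lz_j$ for $j\in S$ and $\ls_j=0$ otherwise. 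Then $\deg f_\lZ\le d$, $\|f_\lZ\|_\infty=\|\ce_\lZ(A)\|$, and $S\mapsto \lS(S,\lZ)$ bijects $2^{[n]}$ onto $\{\lS:\lS\le\lZ\}$.

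Next I would apply the Boolean Bohnenblust--Hille inequality \eqref{bh boolean} to each $f_\lZ$, raise both sides to the power $p:=\tfrac{2d}{d+1}$, and use the contractivity $\|\ce_\lZ(A)\|\le\|A\|$ of the conditional expectation:
\[
\sum_{\lS\le \lZ}|\widehat{A}_\lS|^{p}\;\le\;(\bh^{\le d}_{\{\pm 1\}})^{p}\,\|A\|^{p}.
\]
Summing over $\lZ\in[3]^n$ and switching the order of summation, each $\lS$ appears with multiplicity $\#\{\lZ:\lS\le\lZ\}=3^{n-|\lS|}$, giving
\[
\sum_{\lS}3^{n-|\lS|}\,|\widehat{A}_\lS|^{p}\;\le\;3^{n}(\bh^{\le d}_{\{\pm 1\}})^{p}\,\|A\|^{p}.
\]
Since $|\lS|\le d$, we have $3^{n-|\lS|}\ge 3^{n-d}$; dividing and taking the $1/p$-th power yields
\[
\|\widehat A\|_{p}\;\le\;3^{d/p}\,\bh^{\le d}_{\{\pm 1\}}\|A\|\;=\;3^{(d+1)/2}\,\bh^{\le d}_{\{\pm 1\}}\|A\|,
\]
which is the desired bound $\sqrt{3}^{\,d+1}\bh^{\le d}_{\{\pm 1\}}\|A\|$.

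I do not expect any serious obstacle: the key structural input (that each $\ce_\lZ(A)$ is unitarily diagonal and contractive in $A$) is already worked out in Section \ref{sec:multilinear}, and the only new ingredient is the $\ell^p$-aggregation over scenarios $\lZ$ in place of a maximum. The one minor subtlety is that, when $A$ is not Hermitian, the corresponding $f_\lZ$ may be complex-valued; in this case one either invokes the complex form of the Boolean Bohnenblust--Hille inequality or splits $A$ into its Hermitian and anti-Hermitian parts before applying the above argument, which only affects the constant by an absolute factor.
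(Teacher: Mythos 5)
Your proposal is correct and follows essentially the same route as the paper: apply the Boolean Bohnenblust--Hille inequality to each conditional expectation $\ce_\lZ(A)$ (which lies in a commutative subalgebra and is a contraction of $A$), raise to the exponent $\tfrac{2d}{d+1}$, sum over all scenarios $\lZ\in[3]^n$, and double-count using $|\{\lZ:\lS\le\lZ\}|=3^{n-|\lS|}\ge 3^{n-d}$ before rearranging. The only additions beyond the paper's argument are the explicit diagonalization via $U_\lZ$ and the remark on non-Hermitian $A$, both of which are harmless elaborations rather than a different method.
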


\begin{proof}
Recall that for any $\lZ\in[3]^n$, $\ce_\lZ$ is a conditional expectation, and we have by \eqref{es:key1 general}
\begin{equation*}
\ce_\lZ(A)=\sum_{\lS:\lS\le \lZ}\widehat{A}_{\lS}\sigma_{\lS}.
\end{equation*}
So, applying the Bohnenblust--Hille inequality for the discrete hypercubes \eqref{bh boolean} to $\ce_\lZ(A)$ implies
\begin{equation}
\sum_{\lS:\lS\le \lZ}|\widehat{A}_{\lS}|^{\frac{2d}{d+1}}
\le \left( \bh^{\le d}_{\{\pm 1\}}\|\ce_\lZ(A)\|\right)^{\frac{2d}{d+1}}
\le \left( \bh^{\le d}_{\{\pm 1\}}\|A\| \right)^{\frac{2d}{d+1}}.
\end{equation}
Summing over all $\lZ\in[3]^n$, and using the fact that $|\{\lZ:\lS\le \lZ\}|=3^{n-|\lS|}$ for any fixed $\lS$, we get
\begin{equation}
3^{n-d}\sum_{\lS:|\lS|\le d}|\widehat{A}_{\lS}|^{\frac{2d}{d+1}}
\le \sum_{\lS:|\lS|\le d}3^{n-|\lS|}|\widehat{A}_{\lS}|^{\frac{2d}{d+1}}
=\sum_{\lZ\in [3]^n}\sum_{\lS:\lS\le \lZ}|\widehat{A}_{\lS}|^{\frac{2d}{d+1}}
\le 3^n\left( \bh^{\le d}_{\{\pm 1\}}\|A\|\right)^{\frac{2d}{d+1}} .
\end{equation}
Rearranging, this is exactly \eqref{bh qubit new}.
\end{proof}

The above method also provides simple bounds on operator norms of \emph{random Hamiltonians}. We omit the details here since we will obtain some better bounds using a different method in Appendix \ref{appendix b}.

\section{The cardinality of quantum norm designs}
\label{card}

According to Theorem \ref{thm: quantum norm design}, the sets $\mathbf{X}_n$ of the quantum norm design can be chosen to be of exponential size $|D^{\otimes n}|=6^n$.
In this section we construct universal sampling sets of smaller size, at the cost of increasing the norm design constant.

\begin{theorem}\label{thm: small quantum norm design}
Fix $d\ge 1$ and $\varepsilon > 0$.
Then there exists $C = C(\varepsilon) > 0$ and a norm design $\X = {\X_1,\X_2,\ldots}$ such that each $\X_n$ is a set of product states and has cardinality at most $C\cdot (1 + \varepsilon)^n$.
\end{theorem}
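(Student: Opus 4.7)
I would construct $\X_n$ via a two-level procedure combining the scenario decomposition from Section~\ref{sec:multilinear} with a classical Bernstein--Kro{\'o}-type Boolean-cube discretization.

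\emph{Inner level.} The plan is to use (or adapt) a Boolean-cube discretization theorem of the type developed in \cite{DP,BKSVZ}: for each $\varepsilon$ and $d$ there exist a constant $C(\varepsilon,d)$ and a subset $B\subset\{-1,1\}^n$ of cardinality at most $C(\varepsilon)(1+\varepsilon)^n$ with
\[
\|f\|_{L^\infty(\{-1,1\}^n)}\;\le\;C(\varepsilon,d)\,\max_{\beps\in B}|f(\beps)|
\]
uniformly for every degree-$d$ multilinear polynomial $f$. For any Pauli scenario $\lZ\in[3]^n$, the operator $\ce_\lZ(A)\in\mathcal A_\lZ$ corresponds under the algebra isomorphism sending $\sigma_{\lz_j}^{(j)}\mapsto \chi_{\{j\}}$ to such a polynomial $f_{A,\lZ}$ on $\{-1,1\}^n$, with $f_{A,\lZ}(\beps)=\tr[A\rho_{\beps,\lZ}]$ by~\eqref{AsA}; hence $\max_{\beps\in B}|\tr[A\rho_{\beps,\lZ}]|\ge \|\ce_\lZ(A)\|/C(\varepsilon,d)$.

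\emph{Outer level.} By Theorem~\ref{thm: quantum norm design} (more precisely, the averaging step~\eqref{avd}), for every fixed $d$-local $A$ a constant fraction (density $\ge 1/(2C(d))$) of scenarios $\lZ\in[3]^n$ satisfies $\|\ce_\lZ(A)\|\ge\|A\|/(2C(d))$; this follows from Markov applied to the random variable $\|\ce_\lZ(A)\|/\|A\|\in[0,1]$ whose mean exceeds $1/C(d)$. A uniformly random $\mc Z\subset[3]^n$ of size $|\mc Z|=\poly(n)$ therefore contains a good scenario for any fixed $A$ with probability $\ge 1-e^{-\poly(n)}$; union-bounding over an $\eta$-net in the $O(n^d)$-dimensional space of $d$-local Hermitian operators and taking $|\mc Z|=O(n^d\log(1/\eta))$ closes the uniformity in $A$. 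Setting
\[
\X_n\;=\;\{\rho_{\beps,\lZ}\,:\,\beps\in B,\ \lZ\in\mc Z\}
\]
yields $|\X_n|\le\poly(n)\cdot C(\varepsilon)(1+\varepsilon)^n$, which (absorbing the polynomial factor into an arbitrarily small slack in $\varepsilon$) is at most $C'(\varepsilon)(1+2\varepsilon)^n$; the resulting norm-design constant is $C''(\varepsilon,d)=4C(d)^2 C(\varepsilon,d)$. Since $\varepsilon$ was arbitrary, this gives the stated cardinality bound with constant depending only on $\varepsilon$ and $d$.

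The principal obstacle is the inner-level input: a Boolean-cube discretization of size $\sim (1+\varepsilon)^n$ holding \emph{uniformly} for every degree-$d$ multilinear polynomial. The classical analogue on $[-1,1]^n$ is due to Kro{\'o} and collaborators, and the discrete-cube form required here is expected to follow from the framework of \cite{DP,BKSVZ}, but verifying the exact form needed here may require adaptation. A secondary technical point is the chaining/net argument for the scenario sample $\mc Z$; since the $d$-local operator space has polynomial dimension $O(n^d)$ and the Chernoff exponent for missing a good scenario is proportional to $|\mc Z|$, a naive $\eta$-net union bound closes cleanly with $|\mc Z|=\poly(n)$.
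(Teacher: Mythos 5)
Your route is genuinely different from the paper's and is, in outline, viable. The paper's proof is deterministic and single-level: it folds all $3^n$ scenarios into one object by writing $\tr[A\rho(\x)]=p_A(\x)$ for a multi-affine polynomial $p_A$ in $3n$ real variables (one per Pauli direction per qubit), bounds $\|A\|\le\sup_{\D^{3n}}|p_A(3\x)|$ via \eqref{avd} and \eqref{lem product}, and then applies Proposition \ref{prop:smalldes} on the polydisc with sample points in $\{-1/6,1/6\}^{3n}$; the resulting design consists of \emph{mixed} product states $\rho(\x')$ with Bloch vectors in $\{-1/2,1/2\}^3$. Your two-level scheme instead keeps the scenarios separate, discretizes each commutative algebra $\mathcal A_\lZ$ by a Boolean-cube norming set, and then subsamples $\poly(n)$ scenarios probabilistically. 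Your outer level is sound: reverse Markov applied to $\|\ce_\lZ(A)\|/\|A\|\in[0,1]$, whose mean is at least $1/C(d)$ by \eqref{avd} together with Lemma \ref{lem:figiel}, does give a $1/(2C(d))$-fraction of good scenarios, and since $\ce_\lZ$ is a contraction the net argument over the $O(n^d)$-dimensional space of $d$-local Hermitian operators closes with $|\mathcal Z|=\poly(n)$; the $\poly(n)$ factor is absorbed into the exponential as you say. What you buy is a design of \emph{pure} Pauli-eigenstate products; what you lose is constructiveness and a clean constant.

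The one genuine gap is the inner-level input, which you correctly flag: you need a subset $B\subset\{-1,1\}^n$ of size $C(\varepsilon)(1+\varepsilon)^n$ that is norming, with a constant $C(\varepsilon,d)$, for \emph{all} degree-$d$ multilinear polynomials on the Boolean cube. The paper never states such a result; its Proposition \ref{prop:smalldes} requires the sample alphabet $\{a,b\}$ to lie strictly inside $(-1,1)$, and the paper's own proof is careful to land on interior points ($\pm 1/2$ after rescaling). Whether the endpoint version you want is available in \cite{BKSVZ} in exactly that form must be checked, not assumed. Fortunately your argument does not actually need it: for fixed $\lZ$, the identification $f_{A,\lZ}(\beps)=\tr[A\rho_{\beps,\lZ}]$ from \eqref{aS}--\eqref{AsA} gives $\|\ce_\lZ(A)\|=\max_{\beps\in\{-1,1\}^n}|f_{A,\lZ}(\beps)|\le\sup_{\z\in\D^n}|f_{A,\lZ}(\z)|$, and Proposition \ref{prop:smalldes} (in $n$ variables, with $\{a,b\}=\{-1/2,1/2\}$, say) then bounds this by $C(d)\max_{\x\in\mathbf S_n}|f_{A,\lZ}(\x)|$, where each sample point corresponds to the product state $\bigotimes_j\tfrac12(\sigma_0+x_j\sigma_{\lz_j})$ and the trace against $A$ equals the trace against $\ce_\lZ(A)$ because these states lie in $\mathcal A_\lZ$. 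This repairs the inner level at the cost of replacing your pure eigenstates by mixed product states, which still satisfies the theorem as stated. If you insist on pure states, you must supply the Boolean-endpoint discretization as a separate lemma.
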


The bound on the size in Theorem \ref{thm: small quantum norm design} is essentially optimal.

\begin{theorem}
    \label{thm: lower bound design}
    For every $C > 0$ there exists $\varepsilon = \varepsilon(C)$ such that the following holds.
    Suppose that $\mathbf{Y}_n$ is a set of 
    states on $\mathcal{H}^{\otimes n}$ such that for any operator $A$ on $\mathcal{H}^{\otimes n}$ of degree $1$, we have
    \begin{equation}\label{ineq:small design2}
	\sup_{\rho\in \Y_n}|\tr[A\rho]|
	\le \|A\|
	\le C	\sup_{\rho\in \Y_n}|\tr[A\rho]|.
	\end{equation}
    Then 
    $$
        |\mathbf{Y}_n| \ge (1 + \varepsilon)^n\,.
    $$
\end{theorem}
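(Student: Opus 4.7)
The plan is to lower-bound $|\mathbf Y_n|$ by exhibiting an exponentially large family of degree-$1$ Hamiltonians whose ``witnessing capacity'' in $\mathbf Y_n$ can be controlled via Hoeffding concentration. Specifically, I would apply \eqref{ineq:small design2} to the diagonal, sign-indexed family
\[A_{\boldsymbol\eta}\;=\;\sum_{j=1}^n\eta_j\,\sigma_3^{(j)},\qquad \boldsymbol\eta\in\{\pm1\}^n,\]
where $\sigma_3^{(j)}$ denotes $\sigma_3$ acting on the $j$-th qubit. Each $A_{\boldsymbol\eta}$ is Hermitian of degree $1$, and since the $\sigma_3^{(j)}$ mutually commute with joint eigenvalues in $\{\pm1\}^n$, $\|A_{\boldsymbol\eta}\|=n$. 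Introducing $m_j(\rho):=\tr[\sigma_3^{(j)}\rho]\in[-1,1]$, the $z$-Bloch coordinate of the $j$-th one-qubit marginal of $\rho$, the expectation factorizes as $\tr[A_{\boldsymbol\eta}\rho]=\sum_j\eta_j m_j(\rho)$. The norm-design hypothesis therefore asserts: for every $\boldsymbol\eta\in\{\pm1\}^n$ there exists some $\rho\in\mathbf Y_n$ with $\bigl|\sum_j\eta_j m_j(\rho)\bigr|\ge n/C$.

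Next I would argue that any single $\rho$ can witness only an exponentially small fraction of the signs. Fix $\rho\in\mathbf Y_n$ and draw $\boldsymbol\eta$ uniformly from $\{\pm1\}^n$; then $\sum_j\eta_j m_j(\rho)$ is a mean-zero Rademacher sum whose $j$-th summand is bounded by $|m_j(\rho)|\le1$, so Hoeffding's inequality gives
\[\Pr_{\boldsymbol\eta}\!\Big[\,\bigl|{\textstyle\sum_{j=1}^n}\eta_j m_j(\rho)\bigr|\ge\tfrac{n}{C}\,\Big]\;\le\;2\exp\!\bigl(-\tfrac{n}{2C^2}\bigr).\]

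Combining this with a union bound over $\rho\in\mathbf Y_n$: by the norm-design property \emph{every} $\boldsymbol\eta\in\{\pm1\}^n$ is witnessed by some $\rho$, so the probability on the left equals $1$, yielding
\[1\;\le\;2\,|\mathbf Y_n|\,\exp\!\bigl(-\tfrac{n}{2C^2}\bigr),\qquad\text{whence}\qquad |\mathbf Y_n|\;\ge\;\tfrac12\, e^{n/(2C^2)}.\]
Since $e^{1/(2C^2)}>1$, this is of the desired form $(1+\varepsilon)^n$ for any $\varepsilon=\varepsilon(C)\in\bigl(0,e^{1/(2C^2)}-1\bigr)$; the constant factor $\tfrac12$ is absorbed either by shrinking $\varepsilon$ slightly or, if the conclusion is read for all $n$, by handling the finitely many small $n$ from the trivial bound $|\mathbf Y_n|\ge1$.

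I do not anticipate any serious technical obstacle. The argument is essentially a ``balls into bins'' pigeonhole driven by Hoeffding. The only conceptual point is the observation that degree-$1$ operators already realize a $2^n$-element family with uniform norm $n$ whose traces against $\rho$ depend only on the single-qubit marginals, at which point concentration delivers the matching exponential lower bound to Theorem \ref{thm: small quantum norm design}.
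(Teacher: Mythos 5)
Your proof is correct and is essentially identical to the paper's own argument: the authors use the same family $A(\x)=\sum_j x_j\sigma_3^{(j)}$ with $\|A(\x)\|=n$, the same reduction to the single-qubit $\sigma_3$-marginals (the Pauli coefficients $\widehat{\rho}_{3\mathbf{e_j}}$), the same Hoeffding bound $2^n\exp(-n/(2C^2))$ on the number of sign vectors any fixed $\rho$ can witness, and the same union bound to conclude $|\mathbf{Y}_n|\ge e^{n/(2C^2)}$. Your extra factor of $2$ from the two-sided Hoeffding inequality is harmless and handled exactly as you describe.
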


We now proceed to the proof of Theorem \ref{thm: small quantum norm design}; Theorem \ref{thm: lower bound design} will be proved afterwards.

Theorem \ref{thm: small quantum norm design} follows from a reduction to commutative polynomials, and then the main theorem of \cite{BKSVZ}, which is a discretization inequality for commutative polynomials. For the convenience of the reader, we state here the special case of this theorem that we will be using. Let $\D=\{z:|z|<1\}$ be the open unit disk.  

\begin{prop}
    \label{prop:smalldes}
    Fix $d\ge 1$ and $-1<a<b<1$.
    For every $\varepsilon > 0$ there exists $C = C(\varepsilon) > 0$ such that the following holds. For every $n\ge 1$ there exists a set $\mathbf{S}_n \subset \{a,b\}^n$ with $|\mathbf{S}_n| \le C\cdot (1+\varepsilon)^n$, such that for every multi-affine analytic polynomial $f:\D^{n}\to \C$ of degree at most $d$, we have 
    \begin{equation}
    \sup_{\z\in \D^{n}}|f(\z)|\le C(d)\sup_{\z\in \mathbf{S}_n}|f(\z)|.
    \end{equation} 
\end{prop}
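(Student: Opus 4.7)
The plan is to derive Proposition \ref{prop:smalldes} from the main theorem of \cite{BKSVZ} via a simple affine change of coordinates that normalizes the two-point set $\{a,b\}$ to $\{-1,1\}$. Set $\alpha=(a+b)/2$ and $\beta=(b-a)/2$, so that $\beta>0$ and $|\alpha|+\beta=\max(|a|,|b|)<1$. The map $T(w):=\alpha+\beta w$, applied coordinate-wise, is a bijection $\{-1,1\}^n\to\{a,b\}^n$, and its preimage of the open unit disc, $\widetilde\D:=T^{-1}(\D)$, is an open disc of radius $1/\beta>1$ centered at $-\alpha/\beta$; the inequality $|\alpha|+\beta<1$ forces $\widetilde\D$ to strictly contain the closed unit disc.

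For a multi-affine analytic polynomial $f:\D^n\to\C$ of degree at most $d$, the composition $\widetilde f:=f\circ T$ is a multi-affine polynomial of the same degree on the larger polydisc $\widetilde\D^n$, with
\begin{equation*}
\sup_{\z\in\D^n}|f(\z)|=\sup_{\widetilde\D^n}|\widetilde f|\quad\text{and}\quad \sup_{\z\in\{a,b\}^n}|f(\z)|=\sup_{\{-1,1\}^n}|\widetilde f|.
\end{equation*}
Thus the proposition reduces to producing a set $\widetilde{\mathbf{S}}_n\subset\{-1,1\}^n$ of cardinality at most $C(\varepsilon)(1+\varepsilon)^n$ such that $\sup_{\widetilde\D^n}|g|\le C(d)\sup_{\widetilde{\mathbf{S}}_n}|g|$ for every multi-affine polynomial $g$ of degree at most $d$ on $\widetilde\D^n$. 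This is precisely the assertion of the main theorem of \cite{BKSVZ}, with the dependence on $a,b$ (through the radius of $\widetilde\D$) absorbed into the constant $C(d)$. Pulling back through $T$ then gives $\mathbf{S}_n:=T(\widetilde{\mathbf{S}}_n)\subset\{a,b\}^n$ with the required discretization property.

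The substantive work lies in the underlying theorem of \cite{BKSVZ}, whose proof I would sketch as two distinct ingredients: first, a Remez-type inequality that exploits multi-affineness together with the degree bound $d$ to show $\sup_{\widetilde\D^n}|g|\le C_1(d)\sup_{\{-1,1\}^n}|g|$ with $C_1$ independent of $n$; and second, a probabilistic subsampling step showing that a suitably chosen random subset of $\{-1,1\}^n$ of size $(1+\varepsilon)^n$ simultaneously $C_2(d)$-discretizes the sup norms of all multi-affine polynomials of degree at most $d$, with positive probability. I expect the subsampling step to be the main obstacle: a naive $\varepsilon$-net argument over the $\binom{n}{\le d}$-dimensional coefficient space gives only cardinality polynomial in $n$, so bringing the size down to $(1+\varepsilon)^n$ requires a finer chaining or compression argument; by Theorem \ref{thm: lower bound design} this exponential rate is essentially optimal, which explains why one cannot hope for a polynomial-size discretization.
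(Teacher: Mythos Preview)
The paper does not give a proof of Proposition \ref{prop:smalldes}; it is simply stated as ``the special case of [the main theorem of \cite{BKSVZ}] that we will be using,'' and everything is deferred to that reference. So there is no paper's proof to compare against, and your plan of quoting the main theorem of \cite{BKSVZ} is exactly what the paper does.

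Your affine change of variables is correct, and the inclusion $\overline{\D}\subset\widetilde\D$ is verified properly. One remark: after the substitution you need the discretization inequality over the \emph{larger} polydisc $\widetilde\D^n$, which is a slightly stronger statement than the unit-polydisc version. In fact the main theorem of \cite{BKSVZ} is already formulated with enough flexibility in the sample set and domain that Proposition \ref{prop:smalldes} is a direct instance of it, so the change of variables, while valid, is not needed in the paper's treatment.

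Your final paragraph contains an inconsistency. You write that ``a naive $\varepsilon$-net argument \ldots\ gives only cardinality polynomial in $n$, so bringing the size down to $(1+\varepsilon)^n$ requires a finer \ldots\ argument.'' But polynomial in $n$ is \emph{smaller} than $(1+\varepsilon)^n$ for large $n$, so this reads backwards; and indeed Theorem \ref{thm: lower bound design} (which you cite in the next sentence) shows that polynomial-size norm designs with dimension-free constants cannot exist. Whatever heuristic you had in mind, it should be rephrased: the obstacle is not that the naive bound is too large, but that any approach must respect the exponential lower bound.
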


Let $A$ be Hermitian of degree at most $d$. We use the following notation introduced in Section \ref{sec:multilinear}. Let $\lZ\in[3]^n$. For 
\begin{equation}\label{eq:A}
A=\sum_{\lS:|\lS|\le d}\widehat{A}_{\lS}\sigma_{\lS}
\end{equation}
we write 
\begin{equation}
\ce_\lZ(A)=\sum_{|\lS|\le d, \lS\le \lZ}\widehat{A}_{\lS}\sigma_{\lS}.
\end{equation}
Then we have shown in \eqref{aS} and \eqref{AsA} that 

\begin{equation}\label{lem product}
\|\ce_\lZ(A)\| 
=\max_{\beps,\lZ}|\tr[A\rho_{\beps,\lZ}]|
\le \max_{\substack{\psi = \psi_1 \otimes \dotsb \otimes \psi_n\\ \|\psi\| = 1}} |\langle \psi | A | \psi \rangle|.
\end{equation}

For any vector $\x\in \R^{3n}$ of the form
\[
    \x = \left(x^{(1)}_1, x^{(1)}_2, x^{(1)}_3, x^{(2)}_1, x^{(2)}_2, x^{(2)}_3,\dotsc, x^{(n)}_1, x^{(n)}_2, x^{(n)}_3\right),
\]
we will denote
\begin{equation}
    \label{eq: rho_a}
    \rho(\x) = \rho(x^{(1)}) \otimes\dotsb\otimes \rho(x^{(n)})
\end{equation}
where each $\rho(x^{(j)})$ is given by
\begin{equation}
    \label{eq: rho_a2}
    \rho(x^{(j)}) = \frac{1}{2}(\sigma_0 + x_1^{(j)} \sigma_1 + x_2^{(j)} \sigma_2 + x_3^{(j)} \sigma_3).
\end{equation}
Any $A$ as in \eqref{eq:A} corresponds to a classical polynomial 
\begin{equation}
p_A(\x):=\sum_{\lS:|\lS|\le d}\widehat{A}_{\lS}\prod_{j:\ls_j\neq 0}x^{(j)}_{\ls_j}
\end{equation}
of the same degree. Moreover, for all $\x\in \R^{3n}$,
\begin{equation}\label{eq:lem polynomial}
\tr[A\rho(\x)]=p_A(\x).
\end{equation}
In fact, the equation is linear in $A$, so it suffices to verify it for $A=\sigma_{\lS}$. Then by orthogonality, 
$$
\tr[A\rho(\x)]=\prod_{j}\tr[\sigma_{\ls_j}\rho(x^{(j)})]
=\prod_{j:\ls_j\neq 0}x_{\ls_j}^{(j)}=p_A(\x).
$$

Now, we are ready to reduce the problem of finding small quantum-grids to the problem of finding small grids for polynomials.

\begin{proof}[Proof of Theorem \ref{thm: small quantum norm design}]
Recalling \eqref{avd}:
\begin{equation}
A = 3^{-n} \sum_{\lZ\in[3]^n}\sum_{k=0}^{d}3^k\sum_{|\lS|=k, \lS \le \lZ}\widehat{A}_{\lS}\sigma_{\lS},
\end{equation}
which implies  (let $\tilde A:=\sum_{k=0}^{d}3^k\sum_{|\lS|=k}\widehat{A}_{\lS}\sigma_{\lS}$ )
\begin{equation}
\|A\|
\le \max_{\lZ\in[3]^n}\left\|\sum_{k=0}^{d}3^k\sum_{|\lS|=k, \lS \le \lZ}\widehat{A}_{\lS}\sigma_{\lS} \right\|= \max_{\lZ\in[3]^n}\|  \ce_{\lZ}(\tilde A)\|
\end{equation}
Applying \eqref{lem product} to $\tilde A=\sum_{k=0}^{d}3^k\sum_{|\lS|=k}\widehat{A}_{\lS}\sigma_{\lS}$ instead of $A$ and using \eqref{AsA} that says  $\tr [\ce_{\lZ}(\tilde A)\, \rho_{\eps, \lZ}] = \tr [\tilde A  \rho_{\eps, \lZ}]$, we see that the norm inside the maximum is bounded by (as for any fixed $\lZ$ the state $ \rho_{\eps, \lZ}]$ is a particular case of $\rho(\x)$)
$$
\sup_{\x \in \R^{3n} : \max_{1 \le j \le n} \|x^{(j)}\|_2 \le 1} \left|\sum_{k=0}^{d}3^k\sum_{|\lS|=k}\widehat{A}_{\lS}\tr[\sigma_{\lS} \rho(\x)]\right|
$$
where in the sup, $\|x^{(j)}\|_2$ denotes the $\ell^2$ norm of $x^{(j)}=(x^{(j)}_1,x^{(j)}_2,x^{(j)}_3)$ and the bound of 1 follows because $x^{(j)}$ is a Bloch vector. Following the computation in verifying \eqref{eq:lem polynomial}, one has
$$
\sum_{k=0}^{d}3^k\sum_{|\lS|=k}\widehat{A}_{\lS}\tr[\sigma_{\lS} \rho(\x)]
=\sum_{k=0}^{d}\sum_{|\lS|=k}\widehat{A}_{\lS}\prod_{j:\ls_j\neq 0}3x_{\ls_j}^{(j)}
=p_A(3\x).
$$
All combined, we have shown
\begin{equation}\label{ineq:A  PA}
   \|A\| \le \sup_{\x \in \R^{3n} : \max_{1 \le j \le n} \|x^{(j)}\|_2 \le 1} |p_A(3\x)|.
\end{equation}
Consider the polynomial $q_A(\x) := p_A(3\x)$ that has the degree at most $d$, and note that 
\[
    \left\{\x \in \R^{3n}\ : \max_{1 \le j \le n} \|x^{(j)}\|_2 \le 1\right\} \subset \D^{3n},
\]
so
\[
    \|A\| \le \sup_{\x\in \D^{3n}} |q_A(\x)|.
\]
By Proposition \ref{prop:smalldes} applied to $\{a,b\} = \{-1/6,1/6\}$, there exist universal sets
\[
    \mathbf{S}_n \subset \left\{-\frac{1}{6}, \frac{1}{6}\right\}^n
\]
of size 
\[
    \lvert \mathbf{S}_n \rvert  \le C(d, \epsilon) (1 + \epsilon)^n
\]
and a constant $C(d)$ such that for the above $q_A$ 
\[
    \sup_{\D^{3n}} |q_A| \le C(d) \sup_{\mathbf{S}_n} |q_A|.
\]
This, together with \eqref{ineq:A  PA}, implies 
\[
    \|A\| \le C(d) \sup_{\x \in \mathbf{S}_n} |q_A(\x)| = C(d) \sup_{\x \in \mathbf{S}_n} |p_A(3\x)| 
    = C(d) \sup_{\x' \in 3\mathbf{S}_n} |p_A(\x')|.
\]
To relate this back to $A$ we use again \eqref{eq:lem polynomial}
\[
   C(d) \sup_{\x' \in 3\mathbf{S}_n} |p_A(\x')| = C(d) \sup_{\x' \in 3\mathbf{S}_n} \tr[A \rho(\x')].
\]
This completes the proof, since for all $\x' \in 3\mathbf{S}_n\subset \{-1/2,1/2\}^n$, $\rho(\x')$ is a quantum state.
\end{proof}
%
%

Now we turn to the lower bound for the number of necessary sample states. It follows very similarly to the analogous lower bound for commutative polynomials, which is proved in \cite{BKSVZ}.
We reproduce the argument here again for the convenience of the reader.

\begin{proof}[Proof of Theorem \ref{thm: lower bound design}]
We consider the degree $1$ matrix polynomials
$$
    A(\x) = \sum_{j = 1}^n x_j \sigma_j^{(3)}\,,\quad\quad \x \in \{-1,1\}^n\,,
$$
where $\sigma_j^{(3)}$ is the tensor product of $n$ copies of $\sigma_0=\un$, except for the $j$-th place where we have $\sigma_3$. These are $2^n$ diagonal matrices and the eigenvalues of $A(\x)$ are  
$$
    \sum_{i = 1}^n \epsilon_i x_i, \qquad \epsilon_1,\dots,\epsilon_n\in \{-1,1\}.
$$
In particular, for all $\x \in \{-1,1\}^n$
\[
    \|A(\x)\| = \sup_{\epsilon \in \{-1,1\}^n} \Big| \sum_{i = 1}^n \epsilon_i x_i \Big| = n.
\]

Now suppose that a constant $C > 0$ and a set of $n$-qubit states $\mathbf{Y}_n$ are given such that \eqref{ineq:small design2} holds. For each $\rho \in \mathbf{Y}_n$, we consider the set $\mathbf{H}(\rho)$ of all $\x \in \{-1,1\}^n$ such that 
\begin{equation}
    \label{eq: Hrho}
     n = \|A(\x)\| \le C \lvert\tr[A(\x)\rho]\rvert\,.
\end{equation}
Expanding $\rho$ in the Pauli basis, we write
\[
    \rho = 2^{-n} \sum_{\mathbf{\lS} \in \{0,1,2,3\}^n} \widehat{\rho}_\mathbf{\lS} \sigma_{\mathbf{\lS}}, 
\]
and since $\rho$ is a state, we have for $\lS \ne (0,\dots, 0)$
\begin{equation}
    \label{eq norm sum}
    |\widehat{\rho}_\mathbf{\lS}| \le 1.
\end{equation}
Then \eqref{eq: Hrho} is equivalent to
$$
     \frac{n}{C} \le \left|\sum_{j = 1}^n x_j  \widehat{\rho}_{3\mathbf{e_j}} \right|\,,
$$
where $3\mathbf{e_j}$ is the multiindex that is $0$ in all places except the $j$-th, where it is $3$.
By Hoeffding's inequality, we can thus estimate the number of $\x \in \{-1,1\}^n$ satisfying \eqref{eq: Hrho} by
\begin{equation}
    \label{eq: Hbound}
    |\mathbf{H}(\rho)| = 2^n \Pr_{\x} \left[ \left|\sum_{j=1}^n x_j \widehat{\rho}_{3\mathbf{e_j}} \right|\ge \frac{n}{C}\right]
    \le 2^n \exp\bigg(-\frac{1}{2} \frac{n^2}{C^2\sum_{j=1}^n |\widehat{\rho}_{3\mathbf{e_j}}|^2}\bigg)\,.
\end{equation}
From \eqref{eq norm sum}, we conclude that
\begin{equation}
    \label{eq: rhobound}
    \sum_{j=1}^n |\widehat{\rho}_{3\mathbf{e_j}}|^2 \le n\,.
\end{equation}
Combining \eqref{eq: Hbound} and \eqref{eq: rhobound}, we find
$$
    |\mathbf{H}(\rho)| \le 2^n \exp\left(-\frac{n}{2C^2}\right) .
$$
Our assumption, that \eqref{ineq:small design2} holds, is equivalent to the statement that $\{-1,1\}^n$ is contained in the union of the sets $\mathbf{H}(\rho)$, $\rho \in \mathbf{Y}_n$.
It follows that 
$$
    2^n = |\{-1,1\}^n| \le \sum_{\rho \in \mathbf{Y}_n} |\mathbf{H}(\rho)| \le |\mathbf{Y}_n| 2^n \left( \exp\left(-\frac{1}{2C^2}\right)\right)^n\,,
$$
so 
$$
    |\mathbf{Y}_n| \ge \exp\left(\frac{1}{2C^2}\right)^n = (1 + \varepsilon)^n\,,
$$
where $\varepsilon = \varepsilon(C) = \exp\left(\frac{1}{2C^2}\right) - 1$.
\end{proof}

\section{Geometry of norm designs: norm designs from any 1-qubit 2-design}
\label{sec:2design}
Theorem \ref{thm: quantum norm design} establishes the grid of Pauli eigenstates as a quantum norm design.
Single-qubit Pauli eigenstates also form a quantum 2-design (actually a 3-design).
Here we demonstrate that an $n$-fold tensor power of any $2$-design is a quantum norm design.

%
	
	\medskip

Recall that a quantum 1-qubit 2-design is a set $D$ of 1-qubit states such that a certain matrix quadrature formula is satisfied:
	\begin{equation}
	\label{2d}
	\int_{\ket{\psi}\sim\mathrm{Haar}(\C^2)}\psi^{\otimes 2}\,\mathrm{d}\psi  =\frac1{|D|} \sum_{\ket{\psi}\in D} \psi^{\otimes 2}\,.\
	\end{equation}
	where Haar($\C^2$) denotes the uniform probability measure on 1-qubit pure states and $\psi$ refers to the rank one projection onto $\ket{\psi}$, or $\psi=\ket{\psi}\bra{\psi}$.
	There are many such collections $D$, but the smallest is
	one where $|D|=4$: 
	\begin{align*}
	\ket{\psi_1} &=\ket 0,\qquad & \ket{\psi_2} &= \frac1{\sqrt{3}} \ket 0+ \sqrt{\frac23} \ket 1, \\
	\ket{\psi_3}&=  \frac1{\sqrt{3}} \ket 0 + \sqrt{\frac23}e^{i\frac{2\pi }{3}} \ket 1, \quad & \ket{\psi_4} &= \frac1{\sqrt{3}} \ket 0 + \sqrt{\frac23}e^{i\frac{4\pi }{3}} \ket 1\,.
	\end{align*}

	We refer the reader to \textit{e.g.} \cite{AE} for an introduction to quantum $t$-designs.

	\begin{theorem}
	\label{2design-th}
		Let $D$ be any 1-qubit 2-design.
		Then for all d-local Hamiltonians $H$ we have
		\[
		\|H\|
		\leq C_d\max_{\ket{\psi}\in D^{\otimes n}}|\tr[H \psi]|\,.
		\]
		Here $C_d$ is a universal constant depending on $d$ only, which can be taken to be $C\cdot 3^{d^2}$. 
	\end{theorem}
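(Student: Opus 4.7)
The plan is to combine Theorem~\ref{thm: quantum norm design} with a Pauli-eigenstate-to-2-design replacement argument. Theorem~\ref{thm: quantum norm design} gives $\|H\|\le \tfrac{3}{2}(3+3\sqrt{2})^d \max_{\rho\in D_0^{\otimes n}}|\tr[H\rho]|$, where $D_0$ denotes the six Pauli eigenstates. It thus suffices to show $\max_{\rho_{\beps,\lZ}}|\tr[H\rho_{\beps,\lZ}]|\le C_1(d)\max_{\psi\in D^{\otimes n}}|\tr[H\psi]|$ for some $C_1(d)$ depending only on $d$, from which the total constant $C\cdot 3^{d^2}$ will follow.

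The main 1-qubit input is the tight-frame identity $\frac{1}{|D|}\sum_{\phi\in D}v_\phi v_\phi^T = I_3/3$ implied by \eqref{2d} (where $v_\phi\in\R^3$ is the Bloch vector of $\phi$). This yields $\frac{1}{|D|}\sum_{\phi}|\tr[Y\phi]|^2 = \|Y\|^2/3$ for any traceless Hermitian $Y$ on $\C^2$, and hence after a short optimization a universal estimate $\|X\|\le 2\max_{\phi\in D}|\tr[X\phi]|$ for any 1-qubit Hermitian $X$. A naive per-qubit iteration of this would incur an unacceptable factor $2^n$, which is avoided by exploiting the $d$-locality of $H$.

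To do so we mirror the structure of the proof of Theorem~\ref{thm: quantum norm design}. For each $\lZ\in [3]^n$, apply $\ce_\lZ$; since the 2-design property is preserved under conjugation by local unitaries (so $U_\lZ^\ast D U_\lZ$ is again a 1-qubit 2-design), we may rotate by $U_\lZ$ and reduce to $\lZ=(3,\ldots,3)$. In this case $\ce_\lZ(H)$ becomes diagonal and corresponds to a degree-$d$ function $f:\{-1,1\}^n\to\R$ with $|\tr[H\rho_{\beps,\lZ}]|=|f(\beps)|$, while the $D^{\otimes n}$-evaluation becomes $f$ evaluated at $w_j=\tr[\sigma_3\phi_j]$, where $W':=\{\tr[\sigma_3\phi]:\phi\in D\}\subset[-1,1]$ satisfies $\frac{1}{|D|}\sum_{w\in W'}w^2 = 1/3$ by the 2-design second moment. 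Decomposing $f$ into its Rademacher levels $\Rad_k(f)$ via Lemma~\ref{lem:figiel} and using the existence of some $w^*\in W'$ with $|w^*|\ge 1/\sqrt{3}$, a scaling argument transfers each $k$-homogeneous piece from $\{\pm 1\}^n$ to $W'^n$ at a cost of at most $3^{k/2}$. Summing across the $d+1$ Rademacher levels and combining with the Figiel and Theorem~\ref{thm: quantum norm design} factors yields a constant comfortably within the claimed $C\cdot 3^{d^2}$.

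The main obstacle is the classical discretization step in the previous paragraph, particularly the case when $W'$ is asymmetric (does not contain opposite-sign values of comparable magnitude), which can happen for an arbitrary 2-design $D$. In that case the simple scaling argument must be replaced by a probabilistic averaging argument---selecting random $\vec\phi\in D^n$ and invoking anti-concentration for the degree-$d$ polynomial $f$---or one may pass through the symmetrized 2-design $D\cup(I-D)$ (which is easily seen to be a 2-design and is reflection-invariant in Bloch coordinates) and then descend back to $D$ using the freedom to choose each $\phi_j$ independently. These technicalities, together with the Figiel constants at each Rademacher level, are what account for the conservative exponent $d^2$ in the final constant $C\cdot 3^{d^2}$.
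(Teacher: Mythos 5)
Your proposed route breaks down at the step where you pass from the Pauli-eigenstate maximum to the $D^{\otimes n}$ maximum. After conjugating by $U_\lZ$ and reducing to the diagonal polynomial $f$, you assert that ``the $D^{\otimes n}$-evaluation becomes $f$ evaluated at $w_j=\tr[\sigma_3\phi_j]$.'' That identification is false: $f(w)$ equals $\tr[\ce_\lZ(H)\,\psi]$ for $\psi=\otimes_j\phi_j$, \emph{not} $\tr[H\psi]$. The identity \eqref{AsA}, $\tr[H\rho_{\beps,\lZ}]=\tr[\ce_\lZ(H)\rho_{\beps,\lZ}]$, holds precisely because the states $\rho_{\beps,\lZ}$ lie in the commutative subalgebra $\mathcal{A}_\lZ$; a generic $2$-design state $\phi_j$ has nonzero Bloch components in all three directions, so $\tr[H\psi]$ picks up contributions from every Pauli monomial of $H$, not only those $\lS\le\lZ$. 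Consequently your classical discretization (even granting the asymmetric-$W'$ issues you flag) would only yield $\max_{\beps}|\tr[H\rho_{\beps,\lZ}]|\le C_1(d)\max_{\psi\in D^{\otimes n}}|\tr[\ce_\lZ(H)\,\psi]|$, and there is no way to replace $\ce_\lZ(H)$ by $H$ on the right-hand side without a boundedness statement for $\ce_\lZ$ relative to the design evaluations --- which is essentially the theorem you are trying to prove. This is exactly the obstruction the paper points out: for a general $2$-design one has ``much less control over the geometry'' and ``there does not seem to be a similar reduction'' to commutative subalgebras.

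The paper's actual proof avoids conditional expectations entirely. It uses the $2$-design property in the form $\mc N(M)=\frac{2}{|D|}\sum_{\phi\in D}\tr[M\phi]\phi$, where $\mc N$ is the depolarizing channel with parameter $1/3$, so that powers of $\mc E=\mc N^{\otimes n}$ act diagonally on the homogeneous parts: $\mc E^{k}(H)=\sum_\ell 3^{-\ell k}H_\ell$. Solving a $(d+1)\times(d+1)$ Vandermonde system produces coefficients $c_k$ with $H=\sum_{k=1}^{d+1}c_k\mc E^{k}(H)$ as in \eqref{eq:vandermonde-poly}, and each $\mc E^{k}(H)$ expands as a sum over chains $\psi_1,\dots,\psi_k\in D^{\otimes n}$ with nonnegative weights $\tr[\psi_i\psi_{i+1}]$ summing to $1$; this gives $\|H\|\le\|c\|_1\max_{\psi\in D^{\otimes n}}|\tr[H\psi]|$ with $\|c\|_1\le C\,3^{d^2}$ from the Vandermonde determinant. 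If you want to salvage your two-step plan, you would need to prove a design-evaluation analogue of Lemma \ref{lem:figiel} for the maps $\ce_\lZ$ (or for the Rademacher projections) with respect to $\sup_{\psi\in D^{\otimes n}}|\tr[\cdot\,\psi]|$ rather than the operator norm; the Vandermonde trick above is, in effect, the substitute for exactly that missing ingredient.
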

	
	In the proof of Theorem \ref{thm: quantum norm design} we took essential advantage of the geometry of our chosen $\mathbf{X}_n$ to reduce to commutative subalgebras.
	In the setting of Theorem \ref{2design-th}, where we have much less control over the geometry, there does not seem to be a similar reduction.
	Instead, we find that a certain polynomial of depolarizing channels can take the place of Rademacher projection at the expense of a worse dependence on $d$ in the dimension-free constant.
		
	\begin{proof}
	Consider the 1-qubit depolarizing channel with parameter $1/3$, which has the following integral formulation.
	With $M$ any 1-qubit operator,
	\[
	\mc N(M) = 2\int_{\ket{\psi}\sim\mathrm{Haar}({\C}^2)}\tr[M\psi]\psi\,\mathrm{d}\psi\,
	.\]
	Note that $\mc N$ acts on the Pauli matrices as
	\begin{equation}
	\label{eq:haar-int}
		 \mc N(\sigma_0)=\mc N(I)=I \qquad\text{and}\qquad \mc N(\sigma_j)=\frac{\sigma_j}3, \quad j=1,2,3.
	\end{equation}

	
	\medskip
	
	Put $\mc E = \mc N^{\otimes n}$ and for any $n$-qubit operator $A$ let $A_\ell$ be the $\ell$-homogeneous part of $A$.
	Then
	\[\mc E^{k}(A):=\underbrace{\vphantom{a_b}\mc E\circ\cdots \circ\mc E}_{k\text{ times}}(A)=\sum_\ell \left(\frac13\right)^{\ell\cdot k}A_\ell\,.\]	
	Let $c=(c_1,\ldots, c_{d+1})$ be the solution to the following Vandermonde system,
	\[
	\begin{pmatrix}
		1 & 1 & \cdots & 1\\
		3^{-1} & 3^{-2} & \cdots & 3^{-(d+1)}\\
		3^{-2} & 3^{-4} & \cdots & 3^{-2(d+1)}\\
		\vdots & \vdots & \ddots & \vdots\\
		3^{-d} & 3^{-2d} & \cdots & 3^{-d (d+1)}
	\end{pmatrix}
	\begin{pmatrix}
		c_1\\
		c_2\\
		\vdots\\
		c_{d+1}
	\end{pmatrix} = \begin{pmatrix}
		1\\
		1\\
		\vdots\\
		1
	\end{pmatrix}.
	\]
	Consider an $n$-qubit (mixed) state $\rho$.
	Then for Hamiltonian $H$ of degree at most $d$, we have
	\begin{equation}
	\label{eq:vandermonde-poly}
		H=\sum_{k=1}^{d+1}c_k \mc E^{k}(H).
	\end{equation}
		
	Coming back to any $2$-design $D$, recall that by the $2$-design property of $D$ we have for any 1-qubit operator $M$
	\[
	\mc N(M)=\frac{2}{|D|}\textstyle\sum_{\psi\in D}\tr[M\psi]\psi
	\]
	 and thus for any $n$-qubit operator $A$
	\[
	\mc E(A)=\frac{2^{n}}{|D|^{n}}\sum_{\psi_1\in D^{\otimes n}}\tr[A \psi_1]\psi_1.
	\]
	 Thus for all $k\ge 1$ and for any $n$-qubit operator $A$
	\[
	\mc E^{k}(A)=\frac{2^{kn}}{|D|^{kn}}\sum_{\psi_1,\ldots, \psi_{k}\in D^{\otimes n}}\tr[A \psi_1]\tr[\psi_1 \psi_2]\cdots\tr[\psi_{k-1}\psi_k]\psi_k.
\]
%
	We combine this observation with \eqref{eq:vandermonde-poly} to estimate
	\begin{align*}
		\|H\|&=\max_{\ket{\varphi}}|\tr[H\varphi]|\\
		&=\max_\rho\left|\tr\big[\textstyle\sum_{k=1}^{d+1}c_k\,\mc E^{k}(H)\rho\big]\right|\\
		&\leq \max_\rho\sum_{k}|c_k|\sum_{\psi_1,\ldots, \psi_{k}\in D^{\otimes n}}\big|\tr[H\psi_1]\big|\frac{\tr[\psi_1\psi_2]}{(|D|/2)^n}\cdots\frac{\tr[\psi_{k-1}\psi_k]}{(|D|/2)^n}\frac{\tr[\psi_k\rho]}{(|D|/2)^n}\\
		&\leq \max_{\rho}\left(\max_{\psi_1\in D^{\otimes n}}|\tr[H\, \psi_1]|\right)\sum_{k}|c_k|\sum_{\psi_1,\ldots, \psi_{k}\in D^{\otimes n}}\frac{\tr[\psi_1\psi_2]}{(|D|/2)^n}\cdots\frac{\tr[\psi_k\rho]}{(|D|/2)^n}\\
		&= \|c\|_1\max_{\psi\in D^n}|\tr[H\, \psi]|\,.
	\end{align*}	
	where $\rho$ is any $n$-qubit state. In the last line we used that 
	\[
	1=\tr[\ce^k(\un^{\otimes n})\rho]=\frac{2^{kn}}{|D|^{kn}}\sum_{\psi_1,\ldots, \psi_{k}\in D^{\otimes n}}\tr[\psi_1 \psi_2]\cdots\tr[\psi_{k-1}\psi_k]\tr[\psi_k\rho].
	\]
	\medskip	
	It remains to estimate $\|c\|_1$, which is at most $\big[\prod_{1\le j <k\le d+1} |\lambda_j-\lambda_k|\big]^{-1}$ for $\lambda_j=3^{-j}$, that is $\le C\, 3^{d^2}$.
	
	\end{proof}

	An advantage of choosing the 1-qubit $2$-design listed above and consisting of just $4$ elements is that now we got the grid of product states having cardinality $4^n$.

\section{Figiel's estimate for level-$k$ Rademacher projections and other related inequalities}
\label{sec:Figiel}
In this section, we will discuss Lemma \ref{lem:figiel} in more detail. The inequality is named after Figiel, and we are going to prove a qubit version of it. Recall that for any $0\le k\le n$, the level $k$-Rademacher projection $\Rad_{k}$ is a linear operator given by 
 \begin{equation*}
 \Rad_{k}(A)=\sum_{|\lS|=k}\widehat{A}_{\lS}\sigma_{\lS}
 \end{equation*}
 for any operator $A$ over $\cH^{\otimes n}$. We recall Lemma \ref{lem:figiel} with more details. 
 
 \begin{prop}\label{thm:figiel}
 Let $0\le k\le d\le n$. Suppose that $A$ is an operator over $\cH^{\otimes n}$ of degree at most $d$:
 \begin{equation*}
 A=\sum_{|\lS|\le d}\widehat{A}_{\lS}\sigma_{\lS}.
 \end{equation*}
 Then the level $k$-Rademacher projection $\Rad_{k}$ satisfies
 \begin{equation}
 \|\Rad_{k}(A)\|\le C(d,k)\|A\|,
 \end{equation}
 where $C(d,k)$ is a constant depending only on $d$ and $k$. Moreover, $C(d,k)$ is the same constant as in the discrete hypercube case \eqref{ineq:Rk}, and in particular, $C(d,k)\le (\sqrt{2}+1)^d$.
 \end{prop}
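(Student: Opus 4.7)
I plan to mirror the classical proof of Figiel's inequality from \cite{DMP}, which is based on Lagrange interpolation with the Bonami--Beckner noise operator. Let me introduce its qubit analogue $\tilde T_\rho := \mathcal N_\rho^{\otimes n}$, where the single-qubit map $\mathcal N_\rho$ is given in the Pauli basis by $\mathcal N_\rho(\sigma_0) = \sigma_0$ and $\mathcal N_\rho(\sigma_j) = \rho\,\sigma_j$ for $j \in \{1,2,3\}$; equivalently, $\mathcal N_\rho(B) = \rho B + (1-\rho)\tfrac{\tr B}{2}\un$. A direct computation gives $\tilde T_\rho(A) = \sum_{\lS} \rho^{|\lS|} \widehat{A}_{\lS} \sigma_{\lS}$ for any $A = \sum_\lS \widehat{A}_{\lS} \sigma_{\lS}$, which is the Pauli-basis analogue of the classical noise operator. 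The strategy is then to express $\Rad_k$ as a short linear combination of $\tilde T_{\rho}$'s via Lagrange interpolation in $\rho$, provided each $\tilde T_\rho$ is operator-norm contractive.

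The key technical step is to show that $\tilde T_\rho$ is a contraction on the operator norm for every real $\rho \in [-1,1]$. For $\rho \in [0,1]$, $\mathcal N_\rho$ is a convex combination of the identity and the fully depolarizing single-qubit channel, hence unital completely positive trace-preserving; tensorizing, $\tilde T_\rho$ is a unital quantum channel on $\cH^{\otimes n}$, which is contractive on the operator norm. For $\rho \in [-1,0)$, however, $\mathcal N_\rho$ fails to be completely positive once $\rho < -1/3$, so tensor contractivity is not automatic. My remedy is the transpose identity $\mathcal N_{-1}(B) = \sigma_2 B^T \sigma_2$, verified directly on the Pauli basis. Tensoring gives $\mathcal N_{-1}^{\otimes n}(A) = \sigma_2^{\otimes n} A^T \sigma_2^{\otimes n}$, where $A^T$ denotes the \emph{full} matrix transpose on operators on $\cH^{\otimes n}$. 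Since full transposition and unitary conjugation are both operator-norm isometries, so is $\mathcal N_{-1}^{\otimes n}$. Factoring $\mathcal N_\rho = \mathcal N_{-1} \circ \mathcal N_{-\rho}$ and tensorizing yields $\tilde T_\rho = \mathcal N_{-1}^{\otimes n} \circ \tilde T_{-\rho}$, the composition of an isometry and a contraction, hence itself a contraction. This transpose trick will be the main obstacle; without it, the classical constant would not carry over to the regime where $\mathcal N_\rho^{\otimes n}$ is not even positive.

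With contractivity of $\tilde T_\rho$ secured on $[-1,1]$, the remainder is the standard Lagrange argument. I fix $d+1$ distinct nodes $\rho_0, \ldots, \rho_d \in [-1,1]$ (Chebyshev-type) and let $c_0, \ldots, c_d$ solve the Vandermonde system $\sum_j c_j \rho_j^\ell = \delta_{\ell, k}$ for $0 \le \ell \le d$. Since $\deg(A) \le d$,
\[
\sum_j c_j\, \tilde T_{\rho_j}(A) = \sum_{\lS} \widehat{A}_{\lS} \sigma_{\lS} \sum_j c_j \rho_j^{|\lS|} = \Rad_k(A).
\]
Combining the triangle inequality with contractivity gives $\|\Rad_k(A)\| \le \bigl(\sum_j |c_j|\bigr)\|A\|$. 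Since the Lagrange coefficients are identical to those used in the commutative proof, the $\ell^1$-sum equals the classical Figiel constant $C(d, k)$; bounded by $(\sqrt{2}+1)^d$ via the Chebyshev-node computation in \cite[Lemma 1]{DMP}.
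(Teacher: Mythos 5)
Your proof is correct and follows essentially the same route as the paper: both hinge on the contractivity of the Pauli noise operator $P_r$ on $[-1,1]$, with the identical $\sigma_2$-conjugation/transpose trick handling $r=-1$ and the factorization $P_r = P_{-1}P_{-r}$ handling the rest of the negative range, followed by the one-dimensional Chebyshev coefficient-extraction bound. The only (cosmetic) difference is that you implement the extraction as an operator-level Lagrange interpolation identity $\Rad_k = \sum_j c_j \tilde T_{\rho_j}$, whereas the paper dualizes against trace-class $B$ and applies the scalar coefficient bound to the polynomial $r\mapsto \langle P_r(A),B\rangle$; these yield the same constant $C(d,k)$.
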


\begin{rem}
The constant $C(d,k)$ is given in terms of coefficients $d$-th Chebyshev polynomial of the first kind which satisfies a better estimate $C(d,k)\le\frac{d^k}{k!}$. The constant $\sqrt{2}+1$ in $C(d,k)\le (\sqrt{2}+1)^d$ is best possible.
\end{rem} 
 
To prove Theorem \ref{thm:figiel}, we follow the argument in \cite{EI3}. For any $A$ over $\cH^{\otimes n}$ of the form
\begin{equation*}
    A=\sum_{\lS\in \{0,1,2,3\}^n}\widehat{A}_{\lS}\sigma_{\lS},
\end{equation*}
consider the family of linear operators
\begin{equation}
P_r(A)=\sum_{\lS}r^{|\lS|}\widehat{A}_{\lS}\sigma_{\lS},\qquad r\in [-1,1].
\end{equation}
It is well-known that $P_r$ is a contraction over all Schatten-$p$ classes, $p\in [1,\infty]$, when $r\in [0,1]$. For any $p\in [1,\infty]$, we denote $\|A\|_p$ the Schatten-$p$ norm of $A$, and when $p=\infty$, $\|A\|_{\infty}=\|A\|$ is the operator norm. The following lemma says more about it. 


\begin{lemma}
\label{const1}
For any operator $A$ over $\cH^{\otimes n}$, we have for all $p\in [1,\infty]$ that
\begin{equation}\label{contraction}
\|P_r(A)\|_p\le \|A\|_p,\qquad r\in [-1,1].
\end{equation}
\end{lemma}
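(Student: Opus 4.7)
The plan is to split the range $r\in[-1,1]$ at $r=0$ and handle each half with complementary tools.

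For $r\in[0,1]$: the single-qubit map $P_r^{(1)}$ coincides with the depolarizing channel, which (using the identity $\tfrac14\sum_{j=0}^3\sigma_j X\sigma_j=\tfrac{\tr(X)}{2}I$) admits the Kraus form
\[
\mathcal D_r(X)=\tfrac{1+3r}{4}X+\tfrac{1-r}{4}\sum_{j=1}^3\sigma_j X\sigma_j.
\]
For $r\ge 0$ both weights are non-negative, so $\mathcal D_r$ is a convex combination of unitary conjugations, in particular unital, trace-preserving, and completely positive. All three properties tensorize, so $P_r=\mathcal D_r^{\otimes n}$ is a unital CPTP map on $M_{2^n}$. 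Being unital and $2$-positive it contracts the operator norm via Kadison--Schwarz ($\|T(X)\|^2\le\|T(X^\ast X)\|\le\|X\|^2$); being trace-preserving and positive its pre-adjoint is unital CP, so by duality it contracts the trace norm; complex interpolation then fills in all $p\in[1,\infty]$.

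For $r\in[-1,0]$: the depolarizing channel fails to be CP once $r<-1/3$, so I would instead use the factorization
\[
P_r=P_{-1}\circ P_{|r|},
\]
which is immediate on each single qubit from the action on the Pauli basis and tensorizes. The factor $P_{|r|}$ is a Schatten contraction by the first case, so it suffices to show $P_{-1}$ is a Schatten-$p$ isometry. On one qubit a direct computation gives $\mathcal D_{-1}(X)=\tr(X)I-X$, which for $2\times 2$ matrices is exactly the adjugate, and an explicit verification yields
\[
\mathcal D_{-1}(X)=\sigma_2\, X^T\,\sigma_2.
\]
Tensorizing and using $(X_1\otimes\cdots\otimes X_n)^T=X_1^T\otimes\cdots\otimes X_n^T$ gives the global formula
\[
P_{-1}(A)=\sigma_2^{\otimes n}\,A^T\,\sigma_2^{\otimes n},\qquad A\in M_{2^n},
\]
with $A^T$ the ordinary $2^n\times 2^n$ matrix transpose. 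Since transpose preserves singular values and so does conjugation by a unitary, $P_{-1}$ is a Schatten-$p$ isometry for every $p\in[1,\infty]$. Composing, $\|P_r(A)\|_p=\|P_{|r|}(A)\|_p\le\|A\|_p$.

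The principal obstacle I anticipate is the subregime $r\in(-1,-1/3)$, where $\mathcal D_r$ is not completely positive, so no CPTP-based Schatten contraction is directly available even at the level of a single qubit, let alone via tensorization. The factorization $P_r=P_{-1}\circ P_{|r|}$ sidesteps this by routing the sign through $P_{-1}$---whose Schatten isometry relies on the specifically $2\times 2$ identity $\tr(X)I-X=\sigma_2 X^T\sigma_2$, with no higher-dimensional analogue---while keeping the magnitude $|r|$ inside the CP range $[0,1]$.
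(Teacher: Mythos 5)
Your proposal is correct and follows essentially the same route as the paper: contractivity for $r\in[0,1]$ via complete positivity of the tensorized depolarizing channel plus interpolation, and for $r\in[-1,0)$ via the factorization $P_r=P_{-1}\circ P_{|r|}$ together with the identity $P_{-1}(A)=\sigma_2^{\otimes n}A^T\sigma_2^{\otimes n}$, which makes $P_{-1}$ a Schatten-$p$ isometry. The only cosmetic differences are that you exhibit an explicit random-unitary Kraus form where the paper invokes Russo--Dye, and your observation about the failure of complete positivity on $(-1,-1/3)$ matches the paper's implicit motivation for routing the sign through $P_{-1}$.
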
 

\begin{proof}

The map $P_r$ is the $n$-fold tensor product of the map 
$$
B  \mapsto r\,B + (1-r)\cdot 2^{-1}\tr[B]\un
$$
over $2$-by-$2$ complex matrix algebra that is completely positive when $r\in [0,1]$. So, $P_r$ is (completely) positive and by Russo--Dye theorem \cite[Theorem 2.3.7]{B} (see also \cite{RD}), $\|P_r(A)\|\le \|P_r(\un)\|\|A\|=\|A\|$, since $P_r$ is unital. Note that $P_r$ is also trace-preserving, so it is also a contraction in $\|\cdot\|_1$.  Then by complex interpolation, $P_r$ is a contraction in $\|\cdot\|_p$ for all $p\in [1,\infty]$ when $r\in [0,1]$.

To prove \eqref{contraction} for $r\in [-1,0)$, note that it suffices to show it for $r=-1$, since $P_r=P_{-r} P_{-1}$ would be a composition of two contractions $P_{-1}$ and $P_{-r},-r\in (0,1]$. 

In order to prove $\eqref{contraction}$ for $r=-1$, note that 
    $$
    \sigma_2^3=\sigma_2,\qquad \textnormal{while}\qquad \sigma_2 \sigma_j \sigma_2=-\sigma_j, \qquad j=1,3
    $$
   and
    $$
    \sigma_2^T=-\sigma_2, \qquad \textnormal{while}\qquad \sigma_j^T=\sigma_j,\qquad j=1,3.
    $$
   Here, $A^T$ denotes the transpose of $A$.
    Thus \begin{equation}\label{eq:rotation}
        (\sigma_2 \sigma_j \sigma_2)^T= -\sigma_j, \qquad j=1,2,3.
    \end{equation}
    This, together with $(\sigma_2 \sigma_0 \sigma_2)^T= \sigma_0$, implies 
    \begin{equation}\label{eq:flip}
    (UAU)^T=\sum_{\lS}(-1)^{|\lS|}\widehat{A}_{\lS}\sigma_{\lS}=P_{-1}(A)
    \end{equation}
where $U:=\sigma_2\otimes \cdots \otimes \sigma_2$ is an Hermitian unitary. Therefore, we have 
\begin{equation}\label{isometry}
\|P_{-1}(A)\|_p=\|(UAU)^T\|_p=\|UAU\|_p=\|A\|_p.
\end{equation}
where in the second equality we used the fact that the transpose preserves the Schatten-$p$ norms.
\end{proof}
%

\begin{proof}[Proof of Theorem \ref{thm:figiel}]
    For any operator $B$ over $\mc H^{\otimes n}$ with $\|B\|_1\le 1$, consider $p(r):=\langle P_r(A), B\rangle$ with $P_r$ as above. Then $p$ is a polynomial of degree at most $d$, and its $k$-homogeneous part is $\langle \Rad_k(A),B\rangle$. So, by classical Figiel's inequality \eqref{ineq:Rk}
    $$
    |\langle \Rad_k(A),B\rangle|
    \le C(d,k)\sup_{[-1,1]}|p|.
    $$
    By H\"older's inequality and Lemma \ref{const1}, we have 
    $$
    \sup_{[-1,1]}|p|
    \le \sup_{r\in [-1,1]}\|P_r(A)\|\cdot \|B\|_1
    \le \|A\|.
    $$
    Therefore, 
    $$
\|\Rad_k(A)\|=\sup_{\|B\|_1\le 1} |\langle \Rad_k(A),B\rangle|\le C(d,k)\|A\|.
    $$
    This finishes the proof. 
 \end{proof}
 
 More consequences follow from Lemma \ref{const1}, and we present here one of them as an example. 

\begin{prop}
Let $A$ be any operator over $\mc H^{\otimes n}$ of degree at most $d$. Then for all $p\in [1,\infty]$ we have 
\begin{equation}
\|P_r(A)\|_p\ge \frac{1}{T_d(1/r)}\|A\|_p, \qquad r\in [0,1],
\end{equation}
where $T_d$ is the $d$-th Chebyshev polynomial of the first kind.
\end{prop}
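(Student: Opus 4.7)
The plan is to write $A$ as a finite linear combination of operators of the form $P_s(A)$ with $|s| \le r$, and then rewrite each such operator as $P_{s/r}(P_r(A))$, so that Lemma \ref{const1} bounds its Schatten-$p$ norm by $\|P_r(A)\|_p$. The underlying semigroup identity is $P_s(A) = P_{s/r}(P_r(A))$ for every real $s$, immediate from the fact that $P_r$ multiplies each $k$-homogeneous component by $r^k$.

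To build the combination I would take the Chebyshev extrema $\tau_k = \cos(k\pi/d)$ for $k=0,1,\ldots,d$, set $s_k := r\tau_k \in [-r,r]$, and let $\ell_k$ denote the Lagrange basis polynomials for the nodes $\tau_0,\ldots,\tau_d$. Define $a_k := \ell_k(1/r)$. The Lagrange identity $q(1/r) = \sum_k q(\tau_k)\ell_k(1/r)$ applied to the polynomials $q(y) = r^j y^j$ for $j = 0,1,\ldots,d$ yields the moment relations $\sum_k a_k s_k^j = 1$ for every $0 \le j \le d$. Writing $A = \sum_{j=0}^d \Rad_j(A)$, these relations give
$$
A \;=\; \sum_{j=0}^d \Rad_j(A) \;=\; \sum_k a_k\!\sum_j s_k^j\,\Rad_j(A) \;=\; \sum_k a_k\,P_{s_k}(A) \;=\; \sum_k a_k\,P_{s_k/r}(P_r(A)).
$$
Since $|s_k/r| = |\tau_k| \le 1$, Lemma \ref{const1} yields $\|P_{s_k/r}(P_r(A))\|_p \le \|P_r(A)\|_p$, and the triangle inequality gives $\|A\|_p \le \bigl(\sum_k|a_k|\bigr)\|P_r(A)\|_p$.

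It remains to identify $\sum_k|a_k|$ with $T_d(1/r)$. A brief sign analysis shows that for $y > 1$ one has $(-1)^k\ell_k(y) > 0$: the numerator $\prod_{j\ne k}(y-\tau_j)$ is positive, while the denominator $\prod_{j\ne k}(\tau_k-\tau_j)$ has sign $(-1)^k$ because $\tau_j$ is strictly decreasing in $j$. Hence $\sum_k|\ell_k(1/r)| = \sum_k (-1)^k\ell_k(1/r)$, and the right-hand side is the unique polynomial of degree at most $d$ interpolating the values $(-1)^k = \cos(k\pi) = T_d(\tau_k)$ at the nodes $\tau_k$, which by uniqueness of interpolation is $T_d$ itself. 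Therefore $\sum_k|a_k| = T_d(1/r)$, completing the proof.

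The main obstacle I anticipate is this last sign bookkeeping and its coincidence with $T_d(1/r)$ rather than a larger quantity. A cleaner but less constructive alternative is to invoke LP duality, identifying $\inf\sum_j|a_j|$ over admissible quadrature formulas supported on $[-r,r]$ with $\sup\{q(1):\deg q\le d,\ \|q\|_{L^\infty([-r,r])}\le 1\}$, which is exactly $T_d(1/r)$ by the classical Chebyshev extremal theorem. Either route reduces the statement to a well-known fact about Chebyshev polynomials.
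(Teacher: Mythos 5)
Your proof is correct and follows essentially the same route as the paper: both factor $A=P_{1/r}\bigl(P_r(A)\bigr)$ and represent $P_{1/r}$ on degree-$\le d$ operators as an integral of the contractions $P_x$, $x\in[-1,1]$, against a signed measure of total variation $T_d(1/r)$, then apply Lemma \ref{const1} and the triangle inequality. The only difference is that you construct this quadrature explicitly on the Chebyshev extrema (including the sign analysis showing $\sum_k|\ell_k(1/r)|=T_d(1/r)$), whereas the paper simply cites the existence of such a measure from \cite{EI2}.
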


\begin{proof}
The proof is the same as in \cite{EI2}. In fact, according to the proof of \cite[Theorem 1]{EI2}, for any $r\in [0,1]$ there exists a complex measure $\mu_r$ on $[-1,1]$ such that
$$
\int_{-1}^{1}x^k\textnormal{d}\mu_r(x)=r^{-k}, \qquad k=0,1,\dots, d
$$
and $\|\mu_r\|\le T_d(1/r)$. Here, $\|\mu\|$ denotes the total variation norm of a complex measure $\mu$. Thus for $A$ of degree at most $d$:
$$
P_{1/r}(A)
=\sum_{|\lS|\le d}r^{-|\lS|}\widehat{A}_{\lS}\sigma_{\lS}
=\int^{1}_{-1}\sum_{|\lS|\le d}x^{|\lS|}\widehat{A}_{\lS}\sigma_{\lS}\textnormal{d}\mu_{r}(x)
=\int^{1}_{-1}P_x(A)\textnormal{d}\mu_r(x).
$$ 
This, together with Lemma \ref{const1} and the triangle inequality, implies
 $$
 \|A\|_p
 = \|P_{1/r}P_{r}(A)\|_p
 = \|\int^{1}_{-1}P_x(P_r(A))\textnormal{d}\mu_r(x)\|_p
 \le \int^{1}_{-1}\|P_x(P_r(A))\|_p\textnormal{d}|\mu_r|(x)
 \le \|\mu_r\|\cdot \|P_r(A)\|_p
 $$
which concludes the proof because $\|\mu_r\|\le T_d(1/r)$.
\end{proof}


\section{Constant $9$ for $2$-local Hamiltonians}
 \label{appendix}
Recall that for general $d$-local Hamiltonians, our approximation constant for a small norm design can be chosen to be $\frac{3}{2}(3+3\sqrt{2})^d$, and if $A$ is further homogeneous, one can improve the constant to $3^d$. When $d=2$, Lieb already proved a similar result for homogeneous Hamiltonian in \cite{L} with a constant $9=3^2$. In case it is non-homogeneous (and traceless), Bravyi--Gosset--K\"onig--Temme \cite{BGKT} obtained the same constant $9$ using a beautiful observation to reduce the problem to the homogeneous case, which we shall explain below.

Let $A=A_1+A_2$ be a traceless self-adjoint operator on $\cH^{\otimes n}$, where $A_k,k=1,2$ are the $k$-homogeneous parts of $A$, respectively. 
Bravyi--Gosset--K\"onig--Temme considered the operator
$$
A':= A_2 \otimes \sigma_0 + A_1\otimes \sigma_3
=\begin{pmatrix}
A_2+A_1 &0\\
0& A_2-A_1
\end{pmatrix}.
$$
which is homogeneous of degree $2$ over $\cH^{\otimes (n+1)}$.

Moreover, one has
\begin{equation}\label{eq:reduction}
\|A'\|=\|A\|
\end{equation}
so that one can reduce the problem to the homogeneous setting. In fact, recall that $P_{-1}(A)=A_2-A_1$, so \eqref{eq:reduction} follows from \eqref{isometry}
$$
\|A'\|=\max\{ \|A_2+A_1\|, \|A_2-A_1\|\}=\max\{\|A\|,\|P_{-1}(A)\|\}=\|A\|.
$$

To conclude the proof of constant $9$ for $A=A_1+A_2$, it suffices to apply our results for homogeneous $A'$. More precisely, let $S$ be the collection of all maps $s:[n]\to [3]$ as before, and $S'$ the collection of all maps $s':[n+1]\to [3]$. We use $\epsilon$ to denote any vector in $\{-1,1\}^n$, and we shall use $\epsilon'$ for any vector in  $\{-1,1\}^{n+1}$. Recall that $\rho_{\epsilon',s'}$ is a state of the form
$$
\rho_{\eps', s'}
=\ketbra{e^{s'(1)}_{\eps'_1}}{e^{s'(1)}_{\eps'_1}}\otimes\dots\otimes  \ketbra{e^{s'(n)}_{\eps'_n}}{e^{s'(n)}_{\eps'_n}}\otimes \ketbra{e^{s'(n+1)}_{\eps'_{n+1}}}{e^{s'(n+1)}_{\eps'_{n+1}}}.
$$

Then, combining \eqref{eq:reduction} and our proof of Theorem \ref{thm: quantum norm design} in the homogeneous 
case:
\begin{align*}
\|A\|
=\|A'\|
\le 9\max_{s',\epsilon'}|\tr[A'\rho_{\epsilon',s'}]|
=9\max_{s',\epsilon'}|\tr[A_1\rho_{\epsilon,s}]-\delta_{s'(n+1),3}\epsilon'_{n+1}\tr[A_2\rho_{\epsilon,s}]|,
\end{align*}
where $\epsilon=(\epsilon'_1,\dots, \epsilon'_n)$ and $s=s'|_{[n]}$. By definition, we have 
\begin{align*}
&\max_{s'(n+1)\in [3],\epsilon'_{n+1}=\pm 1}|\tr[A_1\rho_{\epsilon,s}]-\delta_{s'(n+1),3}\epsilon'_{n+1}\tr[A_2\rho_{\epsilon,s}]|\\
&\qquad=\max\left\{ 
|\tr[A_1\rho_{\epsilon,s}]|,
|\tr[A_1\rho_{\epsilon,s}]-\tr[A_2\rho_{\epsilon,s}]|,
|\tr[A_1\rho_{\epsilon,s}]+\tr[A_2\rho_{\epsilon,s}]|
\right\}\\
&\qquad =\max\left\{ 
|\tr[(A_2-A_1)\rho_{\epsilon,s}]|,
|\tr[(A_2+A_1)\rho_{\epsilon,s}]|
\right\}.
\end{align*}

Now we make one observation before taking the maximum over the rest $s,\epsilon$. Recall that
$A_2-A_1=(U(A_2+A_1)U)^T$, so 
$$
\tr[(A_2-A_1)\rho_{\epsilon,s}]
=\tr[(U(A_2+A_1)U)^T\rho_{\epsilon,s}]
=\tr[(A_2+A_1)U\rho_{\epsilon,s}^TU].
$$
Recalling \eqref{r-s} 
$$
\rho_{\eps, s}=\bigotimes_{j=1}^{n} \Big(\frac12\sigma_0+ \frac12 \eps_j \sigma_{s(j)}\Big),
$$
and \eqref{eq:rotation}
$$
\sigma_2\sigma_j^T\sigma_2=-\sigma_j, \qquad j=1,2,3,
$$
we have
$$
U^\ast\rho_{\eps, s}^TU= \bigotimes_{j=1}^{n} \Big(\frac12\sigma_0+ \frac12 \eps_j\sigma_2 \sigma_{s(j)}^T\sigma_2\Big)
=\bigotimes_{j=1}^{n} \Big(\frac12\sigma_0- \frac12 \eps_j \sigma_{s(j)}\Big)
=\rho_{-\eps, s}.
$$
Here, $-\epsilon=(-\epsilon_1,\dots, -\epsilon_n)\in \{-1,1\}^n$. Thus 
$$
\tr[(A_2-A_1)\rho_{\epsilon,s}]
=\tr[(A_2+A_1)\rho_{-\epsilon,s}].
$$
The above observation implies 
\begin{align*}
&\max_{s,\epsilon}\max\left\{ 
|\tr[(A_2-A_1)\rho_{\epsilon,s}]|,
|\tr[(A_2+A_1)\rho_{\epsilon,s}]|
\right\}\\
=&\max_{s,\epsilon}\max\left\{ 
|\tr[(A_2+A_1)\rho_{-\eps,s}]|,
|\tr[(A_2+A_1)\rho_{\epsilon,s}]|
\right\}\\
=&\max_{s,\epsilon}|\tr[A\rho_{\epsilon,s}]|.
\end{align*}

All combined, we conclude that 
$$
\|A\|
\le 9\max_{s,\epsilon}\max_{s'(n+1),\epsilon'_{n+1}}|\tr[A_1\rho_{\epsilon,s}]-\delta_{s'(n+1),3}\epsilon'_{n+1}\tr[A_2\rho_{\epsilon,s}]|
=9\max_{s,\epsilon}|\tr[A\rho_{\epsilon,s}]|
$$
which finishes the proof of traceless non-homogeneous case with constant 9.

\medskip 

However, it seems that the above ``augment the number of qubits" trick does not extend to the general setting. Say, $A=A_1+A_2+A_3$ is of degree 3 and $A_k,k=1,2,3$ are its $k$-homogeneous parts. Though
$$
A'=A_1\otimes \sigma_3\otimes \sigma_3 +A_2\otimes \sigma_3\otimes \sigma_0+
A_3\otimes\sigma_0\otimes \sigma_0
$$
becomes homogeneous, it looks hopeless to repeat the same argument with constant $27=3^3$.
    

\section{Random Hamiltonians}
\label{appendix b}
	
	Let $n$ denote the number of qubits
	and $d\ll n$ be future degree of a homogeneous Hamiltonian. Recall that for any $\lS\in \{0,1,2,3\}^n$, the Pauli monomial
    $$
\sigma_{\lS}=\sigma_{\ls_1}\otimes \cdots \otimes \sigma_{\ls_n}
    $$
    has degree $d$ if $|\lS|=|\{j:\ls_j\neq 0\}|=d$.
Consider the random Hamiltonian 
    $$
    H(n, d) =\frac{1}{\sqrt{{n\choose d}}}\sum_{\lS\in \{0,1,2,3\}^n:|\lS|=d}g_{\lS}\sigma_{\lS},
    $$
    where $g_{\lS}$'s are independent standard Gaussian (or Rademacher) random variables.
	
	An interesting question is to estimate
	$$
	E(n, d):=\bE \frac1{\sqrt{n} } \|H(n, d)\|,
	$$
    for which it is common to give the estimate of this ``average maximal energy'' by comparing it with ``free energy'':
    	$$
	F(n, d, \beta):=\frac1{\beta n} \bE \log \tr e^{\beta\sqrt{n} H(n, d)}\,.
	$$
Here, let us assume $\beta>0$ for convenience (unlike the usual case where $\beta<0$). This is just for convenience and our main focus is the estimate of $E(n,d)$ anyway. 

	For example, it is easy to see that
	\begin{equation}
	\label{EF} 
	E(n, d) \le \inf_{\beta>0} F(n, d, \beta)
	\end{equation}
	and 
	\begin{equation}
	\label{FE} 
	 F(n, d, \beta) \le \frac{\log 2}{\beta} + E(n, d)
	\end{equation}
	using the simple estimate $\|A\|\le \tr(A)\le 2^n \|A\|$ for a $2^n$-by-$2^n$ positive semi-definite matrix $A$. 
	Our goal is to prove that
	\begin{equation}
	\label{F}
	F(n, d, \beta) \le \frac{\log 2}{\beta} + \beta\cdot  C\,3^d\,.
	\end{equation}
	
	Then combining   \eqref{EF} with \eqref{F} we get
	\begin{equation}
	\label{Edn}
	E(n, d) \le C \sqrt{3}^d
	\end{equation}
	by optimizing $\beta$. This is $\sqrt{\log d}$ better than in \cite{AGK}.
	
	\bigskip
	
	Our proof below is much shorter than the one in \cite{AGK}, but in fact no proof is needed as the result \eqref{Edn} follows from
	noncommutative  Khintchine inequality of Lust-Piquard \cite{LP}. An exposition with the explicit constant 
	can be found on pp. 106--107 of  Pisier's book \cite{P}. See also \cite{Ju}.
	
	\medskip

	Let us recall this inequality here. Let $\{ g_k\}_{k=1}^N$ be independent standard gaussians or Rademacher random variables. 
	Let $\{A_k\}_{k=1}^N$ be self-adjoint operators and let $\|\cdot \|_p$ be the  Schatten-$p$ norm.
	Then for $p\ge 2$, one has 
	\begin{equation}
	\label{NCKp}
	c\, \left\| \big(\sum_{k=1}^N A_k^2\big)^{1/2}\right\|_p\ 
	\le \bE\left\| \sum_{k=1}^N g_k A_k\right\|_p \le C\sqrt{p} \left\| \big(\sum_{k=1}^N A_k^2\big)^{1/2}\right\|_p\,.
	\end{equation}
		for absolute constants $c, C>0$.
	
	
	
	Denote $N=3^d\binom{n}{d}$, and write  $\binom{n}{d}^{1/2}H(n,d)=\sum_{j=1}^{N}g_k\Sigma_k$. Here, $\{ g_k\}_{k=1}^N$ are 
	 independent standard gaussians or Rademacher random variables, and $\Sigma_k^2 = \text{Id}_{2^n}$. Now \eqref{NCKp} gives us 
	$$
	\bE\|H(n,  d)\|  \le   \bE\| H(n, d)\|_{n} \le C \frac{\sqrt{n}}{{n\choose d}^{1/2}} \|\big(\sum_{k=1}^N \Sigma_k^2\big)^{1/2} \|_n
	=C \frac{\sqrt{nN}}{{n\choose d}^{1/2}}\|\text{Id}_{2^n}\|_n
	=2C\sqrt{n} \,3^{d/2}
	$$
		which is exactly \eqref{Edn}.
	
%
%
%

\bigskip



	Having this estimate from above we still want to present our proof of it that does not use noncommutative Khintchine inequality. It is just a simple ``hands-on'' proof.  It also gives some estimates on free energy in \eqref{F}. In order to prove  inequality \eqref{F}, let us first notice that the concavity of the logarithm allows us to write
	\begin{equation}
	\label{log}
	\frac1{\beta \, n}\bE \log \tr e^{\beta \sqrt{n} H(n, d)} \le \frac1{\beta \, n}\log \bE \tr e^{\beta \sqrt{n} H(n, d)}.
\end{equation}

\begin{rem} 
\label{quenched-annealed}
The left-hand side of \eqref{log} deals with the so-called quenched free energy, while the right-hand side deals with annealed free energy. (But our sign is opposite to the usually used one.)
It is easier to deal with the annealed one. This is what  \cite{AGK} does and what we treat here.
\end{rem}

\medskip
	
When one considers $ \bE \tr e^{\beta \sqrt{n} H(n, d)}$, one expands the exponential into  Taylor series. Only even powers of $H(n, d)$ contribute, because odd powers have expectation $0$.
To obtain better estimates, let us denote $K:= 2m$, $N=3^d {n \choose d}$. Then $\binom{n}{d}^{1/2}H(n,d)$ is the sum of $N$ random Pauli monomials, labled by $\binom{n}{d}^{1/2}H(n,d)=\sum_{j=1}^{N}\gamma(j)\sigma(j)$ for simplicity. Here, $\gamma(j)$'s are the $i.i.d.$. standard Gaussian random variables, and $\sigma(j)$'s are the Pauli monomials of degree $d$. 
    

    Put $\alpha=\alpha_n=\beta \sqrt{n}\binom{n}{d}^{-1/2}$ and our goal is to give a upper bound of 
 $$\tr\bE \exp\left[\alpha\sum_{1\le j\le N}\gamma(j)\sigma(j)\right]
 =\sum_{m\ge 0}\frac{\alpha^{2m}}{(2m)!}\tr \bE \left[\left(\sum_{1\le j\le N}\gamma(j)\sigma(j)\right)^{2m}\right].$$
 The odd power terms vanish, as explained earlier. Also, for each $m\ge 0$,
 \begin{align*}
     \tr \bE \left[\left(\sum_{1\le j\le N}\gamma(j)\sigma(j)\right)^{2m}\right]
     = \sum_{j_1,...j_{2m}\in [N]}\bE[\gamma(j_1)\cdots\gamma(j_{2m})]\tr[\sigma(j_1)\cdots \sigma(j_{2m})]. 
 \end{align*} 
 Since $\gamma(j)$'s i.i.d. standard Gaussian, one has $\bE \gamma(j_1)\cdots \gamma(j_{N})\neq 0$ only if $|\{k\in [N]: j_k=j\}|$ is even for all $j\in [N]$. Note that 
 $$ |\tr[\sigma(j_1)\cdots \sigma(j_{2m})] |\le 2^n,$$
 so
 \begin{equation}\label{even}
      \tr \bE \left[\left(\sum_{1\le j\le N}\gamma(j)\sigma(j)\right)^{2m}\right]
     \le 2^n\sum_{k_1+\cdots +k_N=m} \frac{(2m)!}{(2k_1)!\cdots (2k_N)!}\bE\left(|\gamma(1)|^{2k_1}\cdots |\gamma(N)|^{2k_N}\right).
 \end{equation}
 We claim that for any real numbers $a_1,\dots, a_N$, one has
  \begin{equation}\label{claim}
    \sum_{k_1+\cdots +k_N=m} \frac{(2m)!}{(2k_1)!\cdots (2k_N)!}a_1^{2k_1}\cdots a_N^{2k_N}
    \le \frac{(2m)!}{2^m m!}\left(\sum_{j=1}^{N}a_j^2\right)^m.
 \end{equation}
 We will verify this claim below. Then \eqref{claim}, together with \eqref{even}, gives
  \begin{equation*}
     \tr \bE \left[\left(\sum_{1\le j\le N}\gamma(j)\sigma(j)\right)^{2m}\right]
     \le 2^n\frac{(2m)!}{2^m m!}\bE \left[\left(\sum_{1\le j\le N}|\gamma(j)|^2\right)^{m}\right].
  \end{equation*} 
 Therefore, we have the upper bound
 \begin{align*}
    \tr\bE \exp\left[\alpha\sum_{1\le j\le N}\gamma(j)\sigma(j)\right]
    &\le 2^n\sum_{m\ge 0}\frac{\alpha^{2m}}{(2m)!}\cdot \frac{(2m)!}{2^m m!}\bE \left[\left(\sum_{1\le j\le N}|\gamma(j)|^2\right)^{m}\right]\\
    &=2^{n}\bE e^{\frac{\alpha^2}{2} \sum_{j=1}^{N}|\gamma(j)|^2}.
 \end{align*}
Since $\gamma(j)$'s are i.i.d. standard Gaussian, 
$$
\bE e^{c\sum_{j=1}^{N}|\gamma(j)|^2}
=(\bE e^{c|\gamma(1)|^2})^N=(1-2c)^{-N/2},\qquad 0\le c<1/2.
$$
Thus, for $\beta$ such that 
$$
\alpha^2=\beta^2 n \binom{n}{d}^{-1}<1,
$$
we obtain the estimate 
\begin{equation}
    \tr\bE \exp\left[\alpha\sum_{1\le j\le N}\gamma(j)\sigma(j)\right]
    \le 2^n(1-\alpha^2)^{-N/2}.
\end{equation}
Using the elementary inequality
$$
\log(1-x)\ge -2x, \qquad 0<x<1/2,
$$
we have 
\begin{equation}\label{final estimate}
    \frac1{\beta n}\log \tr \bE e^{\beta\sqrt{n} H(n, d)}
\le \frac{\log 2}{\beta}-\frac{N}{2\beta n}\log(1-\alpha^2)
\le \frac{\log 2}{\beta}-\frac{N}{2\beta n}\cdot (-2\alpha^2)
=\frac{\log 2}{\beta}+3^d\beta
\end{equation}
 for all $\beta$ such that
\begin{equation}\label{beta constraint}
    \alpha^2=\beta^2 n \binom{n}{d}^{-1}<\frac{1}{2}.
\end{equation}

To conclude, we have shown that given claim \eqref{claim}, for all $\beta$ in \eqref{beta constraint}
$$
E(n,d)\le F(n,d,\beta)=\frac1{\beta n}\log \tr \bE e^{\beta\sqrt{n} H(n, d)}
\le \frac{\log 2}{\beta}+3^d\beta.
$$
This gives 
$$E(n,d)\le 2\sqrt{\log 2} \cdot \sqrt{3}^d
$$
by choosing $\beta$ such that 
$$
\beta^2=3^{-d}\log 2.
$$
This choice is not against the constraint \eqref{beta constraint}, since $3^{-d}\log 2<\frac{1}{2n}\binom{n}{d}$ is satisfied whenever $n\ge d\ge 1$.

Now, it remains to prove the claim \eqref{claim}. It is equivalent to 
\begin{equation}
     \sum_{k_1+\cdots +k_N=m} \frac{(2m)!}{(2k_1)!\cdots (2k_N)!}a_1^{2k_1}\cdots a_N^{2k_N}
    \le \frac{(2m)!}{2^m m!}  \sum_{k_1+\cdots +k_N=m} \frac{m!}{k_1!\cdots k_N!}a_1^{2k_1}\cdots a_N^{2k_N}.
\end{equation}
So it suffices to compare the coefficients before each monomial
$$
\frac{(2m)!}{(2k_1)!\cdots (2k_N)!}
\le \frac{(2m)!}{2^m m!} \cdot \frac{m!}{k_1!\cdots k_N!}, \qquad \forall k_1+\cdots +k_N=m
$$
which is nothing but 
$$
(2k_1)!\cdots (2k_N)!\ge 2^m k_1!\cdots k_N!, \qquad \forall k_1+\cdots +k_N=m.
$$
To see this, note that 
$$
\frac{(2k)!}{k!}=(2k)(2k-1)\cdots (k+1)\ge (2k)(2k-2)\cdots (2)=2^k k!.
$$
This implies, recalling the constraint $k_1+\cdots +k_N=m$,
$$
(2k_1)!\cdots (2k_N)!
\ge 2^{k_1}k_1 !\cdots 2^{k_N}
k_N !
=2^{k_1+\cdots +k_N} k_1!\cdots k_N!
=2^m k_1!\cdots k_N!.
$$
This completes the proof of the claim and thus the desired bound $C\sqrt{3}^d$ for $E(n,d)$.

\section{Possible extension to the qudit system}
\label{qudit}
It is possible to extend our main results on qubit systems to qudit systems. We only highlight the main ingredients here, and for statements about the qudit systems without proofs, we refer to \cite{SVZ} for details. 

Let $\go\ge 3$ be a prime integer and denote $\omega=\omega_\go=e^{2\pi i/\go}$. Let $\Z_\go$ and $\Omega_\go$ be the additive and multiplicative groups of order $\go$, respectively. The Heisenberg-Weyl basis of $M_\go(\C)^{\otimes n}$ is the class of matrices
$$
X^{\px}Z^{\pz},\qquad (\px,\pz)\in \Z_\go\times \Z_\go
$$
where $X$ and $Z$ are the shift and clock matrices, respectively
$$
X\ket{j}=\ket{j+1},\qquad Z\ket{j}=\omega \ket{j},\qquad j\in \Z_\go.
$$
The Heisenberg-Weyl decomposition of any $A\in M_\go(\C)^{\otimes n}$ is 
\begin{equation}
    A=\sum_{(\vpx, \vpz)\in \Z_\go^n\times \Z_\go^n}\widehat{A}(\vpx,\vpz)X^{\vpx}Z^{\vpz},\qquad \widehat{A}(\vpx,\vpz)\in \C, \qquad X^{\vpx}Z^{\vpz}:=\otimes_{j\in [n]}X^{\px_j}Z^{\pz_j}
\end{equation}
We define the degree of $A$ as 
\begin{equation}
    \deg_0(A):=\max_{\widehat{A}(\vpx,\vpz)\neq 0}|(\vpx,\vpz)|
\end{equation}
where we put 
\begin{equation}
|(\vpx,\vpz)|:=|\{j\in [n]: (\px_j,\pz_j)\neq (0,0)\}|.
\end{equation}
Note that there are alternative definitions of degree, such as 
\begin{equation}
    \deg(A):=\max_{\widehat{A}(\vpx,\vpz)\neq 0}\sum_{j\in [n]} \px_j +\pz_j,\qquad 0\le \px_j, \pz_j\le \go-1. 
\end{equation}
We will see why we used $\deg_0(A)$ here, but they are comparable 
\begin{equation}
    \deg_0(A)\le \deg(A)\le 2(\go-1)\deg_0(A)
\end{equation}
up to a factor independent of $n$. So, the choice of degree here does not affect much in describing the locality of $A$.

Since $\go$ is prime, we may decompose the group $\Z_\go\times \Z_\go$ as 
\begin{equation}\label{decomposition}
    \Z_\go\times \Z_\go
    =\bigcup_{(\gx,\gz)\in \Sigma}\langle (\gx, \gz)\rangle
\end{equation}
Here, for an element $g$ of a group $G$ we used the convention that $\langle g\rangle$ denotes the subgroup of $G$ generated by $g$. The set $\Sigma$ of generators is given by 
\begin{equation}
    \Sigma=\{(0,1),(1,1),(2,1),\dots, (\go-1,1),(1,0)\}.
\end{equation}
Note that, $|\Sigma|=\go+1$, and the intersection of each of two subgroups in the decomposition \eqref{decomposition} is exactly the singleton $\{(0,0)\}$ of the unit element. Moreover, for any $(\gx,\gz)\in \Sigma$, the set of eigenvalues of $X^{\gx}Z^{\gz}$ is $\Omega_\go$, each having multiplicity exactly one. For any $(\gx,\gz)\in \Sigma$ and $z\in \Omega_\go$, we write $\ket{e^{\gx,\gz}_{z}}$ as the unit eigenvector of $X^{\gx}Z^{\gz}$ with eigenvalue $z$.

For any $(\gx,\gz)\in \Sigma$, $X^{\gx} Z^{\gz}$ generates a commutative subalgebra of $M_{\go}(\C)$ that is exactly 
\begin{equation}
    \ca_{\gx,\gz}:=\spa \{X^{k\gx}Z^{k\gz}:k\in \Z_\go\}.
\end{equation}
Here, we used the fact that $(X^{\gx}Z^{\gz})^k=\omega^{\frac{1}{2}k(k-1)\gx\gz}X^{k\gx}Z^{k\gz}$. Let $\ce_{\gx,\gz}$ be the conditional expectation from $M_\go(\C)$ onto $\ca_{\gx,\gz}$. Then it has the form 
\begin{equation}
   \ce_{\gx,\gz}(A)
   =\sum_{z\in \Omega_\go}\ket{e^{\gx,\gz}_{z}}\bra{e^{\gx,\gz}_{z}}A\ket{e^{\gx,\gz}_{z}}\bra{e^{\gx,\gz}_{z}},\qquad A\in M_\go(\C).
\end{equation}

Now, for any $(\vgx,\vgz)=\{(\gx_j,\gz_j)\}_{j\in [n]}\in \Sigma^n$, we denote by $\ce_{\vgx,\vgz}$ the conditional expectation from $M_\go(\C)^{\otimes n}$
onto the commutative subalgebra 
\begin{equation}
    \ca_{\vgx,\vgz}:=\spa\{\ca_{\gx,\gz}\otimes \un \otimes \cdots\otimes \un, \dots, \un\otimes \cdots \otimes \un\otimes \ca_{\gx,\gz}\}.
\end{equation}
It takes the explicit form
\begin{equation}
   \ce_{\vgx,\vgz}(A)
   =\sum_{\z\in \Omega_\go^n}\ket{e^{\vgx,\vgz}_{\z}}\bra{e^{\vgx,\vgz}_{\z}}A\ket{e^{\vgx,\vgz}_{\z}}\bra{e^{\vgx,\vgz}_{\z}},\qquad \ket{e^{\vgx,\vgz}_{\z}}=\otimes_{j\in [n]}\ket{e^{\gx_j,\gz_j}_{z_j}},\qquad A\in M_\go(\C)^{\otimes n}.
\end{equation}

\begin{lemma}
    For any $A\in M_\go(\C)^{\otimes n}$, we have 
    \begin{equation}
        \frac{1}{(\go+1)^n}\sum_{(\vgx,\vgz)\in \Sigma^n}\ce_{\vgx,\vgz}(A)
        =\sum_{(\vpx, \vpz)\in \Z_\go^n\times \Z_\go^n}
        (\go+1)^{-|(\vpx, \vpz)|}\widehat{A}(\vpx,\vpz)X^{\vpx}Z^{\vpz}.
    \end{equation}
\end{lemma}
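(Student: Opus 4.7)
My plan is to reduce the identity, by linearity, to the case of a single Heisenberg--Weyl monomial $A=X^{\vpx}Z^{\vpz}=\bigotimes_{j\in[n]}X^{\px_j}Z^{\pz_j}$, and then to exploit a tensor-product factorization of $\ce_{\vgx,\vgz}$ to localize the computation to a single site. Concretely, the first step is to check that $\ce_{\vgx,\vgz}=\bigotimes_{j\in[n]}\ce_{\gx_j,\gz_j}$; this follows from the sum-of-eigenprojections formula for $\ce_{\vgx,\vgz}$ once one plugs in the product form $\ket{e^{\vgx,\vgz}_\z}=\otimes_j\ket{e^{\gx_j,\gz_j}_{z_j}}$ and distributes the tensor products through the sum over $\z\in\Omega_\go^n$. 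After this step, the left-hand side of the claim becomes
\[\bigotimes_{j\in[n]}\Bigl(\sum_{(\gx,\gz)\in\Sigma}\ce_{\gx,\gz}(X^{\px_j}Z^{\pz_j})\Bigr),\]
and the whole problem collapses to a one-qudit computation.

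The key one-qudit statement, which is the direct analog of \eqref{es:key1}, is
\[\ce_{\gx,\gz}(X^{\px}Z^{\pz})=\begin{cases}X^{\px}Z^{\pz}, & (\px,\pz)\in\langle(\gx,\gz)\rangle,\\ 0, & (\px,\pz)\notin\langle(\gx,\gz)\rangle.\end{cases}\]
To prove this I would first note that the relation $(X^{\gx}Z^{\gz})^k=\om^{\frac12k(k-1)\gx\gz}X^{k\gx}Z^{k\gz}$ gives $\ca_{\gx,\gz}=\spa\{X^{\px}Z^{\pz}:(\px,\pz)\in\langle(\gx,\gz)\rangle\}$, and then use that $\{X^{\px}Z^{\pz}:(\px,\pz)\in\Z_\go^2\}$ is orthogonal with respect to the normalized-trace inner product $\tfrac1\go\tr[\,\cdot^*\cdot\,]$. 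The conditional expectation onto $\ca_{\gx,\gz}$ therefore fixes any Heisenberg--Weyl basis element lying in $\ca_{\gx,\gz}$ and annihilates those orthogonal to it.

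The final step is a counting argument that mirrors \eqref{es:key2}. By primality of $\go$, the decomposition \eqref{decomposition} expresses $\Z_\go\times\Z_\go$ as a union of $\go+1$ cyclic subgroups meeting pairwise only at $(0,0)$. Hence any $(\px,\pz)\neq(0,0)$ lies in exactly one of the $\go+1$ subgroups $\langle(\gx,\gz)\rangle$ with $(\gx,\gz)\in\Sigma$, whereas $(0,0)$ lies in all of them. Combining this with the previous step yields
\[\sum_{(\gx,\gz)\in\Sigma}\ce_{\gx,\gz}(X^{\px}Z^{\pz})=\begin{cases}(\go+1)\un, & (\px,\pz)=(0,0),\\ X^{\px}Z^{\pz}, & (\px,\pz)\neq(0,0);\end{cases}\]
tensoring over $j\in[n]$, each site with $(\px_j,\pz_j)=(0,0)$ contributes a factor of $\go+1$ while the remaining $|(\vpx,\vpz)|$ sites contribute $1$, and dividing by $(\go+1)^n$ yields the stated prefactor $(\go+1)^{-|(\vpx,\vpz)|}$. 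I expect the only mildly delicate point to be verifying the tensor-product factorization of $\ce_{\vgx,\vgz}$ (in particular that $\ca_{\vgx,\vgz}$ really is $\otimes_j\ca_{\gx_j,\gz_j}$); once that is in hand, everything else is bookkeeping around \eqref{decomposition} and the orthogonality of the Heisenberg--Weyl basis.
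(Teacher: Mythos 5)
Your proposal is correct and follows essentially the same route as the paper: reduce by linearity to a single Heisenberg--Weyl monomial, establish the one-qudit identity $\ce_{\gx,\gz}(X^{\px}Z^{\pz})=X^{\px}Z^{\pz}$ or $0$ according to whether $(\px,\pz)\in\langle(\gx,\gz)\rangle$, and then count using the partition \eqref{decomposition} (the paper counts the scenarios $(\vgx,\vgz)$ globally, getting $(\go+1)^{n-|(\vpx,\vpz)|}$ directly, whereas you tensor the one-site sums — the same bookkeeping). Your extra care in justifying the one-site formula via orthogonality of the Heisenberg--Weyl basis and the tensor factorization of $\ce_{\vgx,\vgz}$ is sound and merely makes explicit what the paper asserts ``by definition.''
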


\begin{proof}
    The proof is similar to the qubit case. By definition, we have for all $(\gx,\gz)\in \Sigma$ and $(\px,\pz)\in \Z_\go\times \Z_\go$ that 
    \begin{equation}
    \ce_{\gx,\gz}(X^{\px}Z^{\pz})
    =\begin{cases}
       X^{\px}Z^{\pz} & (\px,\pz)\in \langle (\gx, \gz)\rangle  \\
       0 & \textnormal{otherwise}
    \end{cases}.
    \end{equation}
This implies immediately that for all $(\vgx,\vgz)\in \Sigma^n$ and $(\vpx,\vpz)\in \Z_\go^n\times \Z_\go^n$ 
\begin{equation}
    \ce_{\vgx,\vgz}(X^{\vpx}Z^{\vpz})
    =\begin{cases}
       X^{\vpx}Z^{\vpz} & (\px_j,\pz_j)\in \langle (\gx_j, \gz_j)\rangle \text{ for all }j\in [n] \\
       0 & \textnormal{otherwise}
    \end{cases}.
    \end{equation}
Then, using the fact that each $(\px,\pz)\neq(0,0)$ belongs to exactly one of $\langle (\gx,\gz)\rangle,(\gx,\gz)\in \Sigma$, we have 
\begin{equation}
    \sum_{(\vgx,\vgz)\in \Sigma^n} \ce_{\vgx,\vgz}(X^{\vpx}Z^{\vpz})
    = \sum_{j, (\gx_j, \gz_j):(\px_j,\pz_j)\in \langle (\gx_j, \gz_j)\rangle }X^{\vpx}Z^{\vpz}
    =(\go+1)^{n-|(\vpx,\vpz)|}X^{\vpx}Z^{\vpz}.
\end{equation}
This finishes the proof of the desired equality by linearity.
\end{proof}

Similar to the qubit case, the above lemma also helps in improving the constant of the reduction method for BH inequality on qudit systems. We omit the details here, since the BH constant on cyclic groups are not good enough.

\bigskip

To treat the non-homogeneous case, we also need a Figiel's inequality in this case. The main ingredient is the contractivity of the linear map $P_r$ defined by
\begin{equation}
    P_r: X^{\vpx}Z^{\vpz}\mapsto r^{|(\vpx,\vpz)|}X^{\vpx}Z^{\vpz}
\end{equation}
when $r\in [-1,1]$. The contraction property of $P_r$ is trivial when $r\in [0,1]$, since it is again the tensor product of the depolarizing channel 
$$
P_r (A)=(r\, A +(1-r)\go^{-1}\tr[A ]\un)^{\otimes n}.
$$
It remains to prove the property when $r=-1$, since $P_{-r}=P_{-1}P_r$.

However, the contraction property fails for $r=-1$ when $\go\ge 3$ even when $n=1$. When $n=1$, our map $P_{-1}$ is given by 
$$
P_{-1}(A)=-A+2K^{-1}\tr(A)\un.
$$
There is a naive estimate 
$$
\|P_{-1}(A)\|
=\|-A+2K^{-1}\tr(A)\un\|
\le 3\|A\|,
$$
and in general $P_{-1}$ is not a contraction. Indeed, take $\go=3$ and let $A$ be the diagonal matrix with diagonal entries $1,1,-1$. Then $P_{-1}(A)$ is the diagonal matrix with diagonal entries 
$-1/3,-1/3,5/3$. So it cannot be a contraction.

In other words, in the high-dimensional setting, we cannot expect 
$$
\|P_{-1}(A)\|\le C\|A\|
$$
with $C$ independent of $n$ for any $A\in M_\go(\C)^{\otimes n}$. But we only need it to be true for low-degree $A$. For this, one can use the estimate in \cite{BKSVZ}, following the arguments in Section \ref{card}.


\section{Large deviation of $\|H(d, n)\|_{op}$}
\label{explan}

Recall that $N=3^d{n \choose d}$.
Let 
\begin{equation}
\label{Delta}
\Delta := \sup_{\rho\, \, \text{state}} \frac1{N} \sum_{i=1}^N \big(\tr [\Sigma_i \rho]\big)^2
\end{equation}
In  \cite{BBoVH} of A. S. Bandeira, M. T. Boedihardjo, and R. van Handel  it was proved that
\begin{equation}
\label{March2}
\bP\{ \|{\bf H(d, n)}\|_{op} -\bE \|{\bf H (d, n)}\|_{op} >t\} \le e^{-\frac{t^2}{2\Delta}}\,,
\end{equation}
where, by reconciling notations, we have
$$
{\bf H(d, n)} = \frac1{\sqrt{3^d { n\choose d}} }\sum_{i=1}^N g_i\Sigma_i = \frac1{\sqrt{3^d}} H(d, n)\,.
$$
Therefore,
$$
\bP\{ \|H(d, n)\|_{op} -\bE \|H (d, n)\|_{op} >3^{d/2}t\} \le e^{-\frac{t^2}{2\Delta}}\,.
$$
which is
$$
\bP \{ \|H(d, n)\|_{op} -\bE \|H (d, n)\|_{op} > t\} \le e^{-\frac{t^2}{2\cdot 3^d\Delta}}\,.
$$
Also Proposition B.1 of \cite{ACKK} (see also Lemma \ref{Delta} below, where we repeat \cite{ACKK}) proves that

$$
3^d \le 2n+1 \Rightarrow \Delta \le 3^{-d}\,.
$$
Therefore, in this regime
\begin{equation}
\label{March3}
\bP\{ \|H(d, n)\|_{op} -\bE \|H (d, n)\|_{op} > t\} \le e^{-\frac{t^2}{2}}\,,
\end{equation}
which, in its turn, implies

$$
\bE e^{\beta \,\| H(d, n)\|_{op}- \beta \, \bE\|H(d, n)\|_{op}}   \le
e^{	4\beta^2 }.
$$
So,
\begin{equation}
\label{eE}
e^{\beta\, \sqrt{n} \, \bE\|H(d, n)\|_{op}}  \ge \bE e^{\beta\, \sqrt{n} \, \|H(d, n)\|_{op}} \cdot e^{-4 n\beta^2}\,.
\end{equation}

In fact, for any $d, n$ one has
\begin{equation}
\label{delta23}
\Delta \le \Big(\frac23\Big)^d,
\end{equation}
which gives
\begin{equation}
\label{March2d}
\bP\{ \|H(d, n)\|_{op} -\bE \|H (d, n)\|_{op} > t\} \le e^{-\frac{t^2}{2\cdot 2^d}}\,,
\end{equation}
Inequality \eqref{delta23} is also proved in \cite{ACKK}, but we give a somewhat different proof below in Lemma \ref{Delta23}.

\bigskip

\subsection{Some consequences of large deviation estimates}
Given a positive r. v. $g$ such that $\bE g=1$ and
\begin{equation}
\label{g}
\bP\{ g-1 > s\} \le  e^{-\frac{s^2 a^2}{2}}
\end{equation}
one can easily see that
\begin{align*}
& \bE g^{2m} \le \bE (\1_{g \le 1} g^{2m} )+ 1+
2m \int_1^\infty s^{2m-1} e^{-\frac{(s-1)^2 a^2}{2}} \, ds\le 
\\
& 2+ 2m \int_0^\infty (1+s)^{2m-1} e^{-\frac{s^2 a^2}{2}} \, ds= 2+ \frac{2m}{a^{2m}} \int_0^\infty (a+s)^{2m-1} e^{-\frac{s^2 }{2}} \, ds \le
\\
&2+  \frac{2m}{a^{2m}}\big(\int_0^a \dots + \int_a^\infty \dots\big) \le  2 +  \frac{2m}{a^{2m}}\big(2^{2m-1} a^{2m-1} + 2^m \int_a^\infty s^{2m-1} e^{-s^2/2} \, ds\big)
\\
&2 +  \frac{2^{2m}m}{a} + \frac{2m}{a^{2m}}2^m (m-1)! 2^{m-1}= 2 +  \frac{2^{2m} m}{a} + \frac{1}{a^{2m}}2^{2m} m!\,.
\end{align*}

Putting $f:=\|H(d, n)\|_{op}$, $g:= \frac{f}{\bE f}$, $ a=\bE f$ we get \eqref{g} from \eqref{March3}.

\medskip

Then multiplying the previous inequality by $a^{2m}=(Ef)^{2m}$ we get
\begin{align}
\label{f-m}
&\bE [\|H(d, n)\|_{op}^{2m}]\le 2\big( \bE \|H(d, n)\|_{op}\big)^{2m} + \notag
\\& 2^{2m} m \big( \bE \|H(d, n)\|_{op}\big)^{2m-1} + 2^{2m}m!\,.
\end{align}

By inequality \eqref{asymptH} proven below, the left hand side has an estimate
\begin{equation}
\label{m2n}
m \le \frac{n}{2C d^2}\Rightarrow \bE \big[\|H(d, n)\|_{op}^{2m}\big] \ge c^m \, 3^{dm} \, m^m\,.
 \end{equation}

So,
$$
2  \big( \bE \|H(d, n)\|_{op}\big)^{2m} +2^{2m} m \big( \bE \|H(d, n)\|_{op}\big)^{2m-1} \ge 
$$
$$
c^m \, 3^{dm} \, m^m - 2^{2m} m!\ge \frac12 \,  c^m \, 3^{dm} \, m^m,
 $$
for $d\ge 2$. This of course means that $\bE \|H(d, n)\|_{op}\ge 1$, and so,
 $$
 \big( \bE \|H(d, n)\|_{op}\big)^{2m-1} \le  \big( \bE \|H(d, n)\|_{op}\big)^{2m}.
 $$
 Therefore,
 \begin{equation}
 \label{cm3dm}
 (2+ 2^{2m} m)\big( \bE \|H(d, n)\|_{op}\big)^{2m} \ge \frac12 c^m 3^{dm} m^m\,.
 \end{equation}

Hence,

\begin{equation}
\label{final}
\bE \|H(d, n)\|_{op} \ge \tilde c\, 3^{d/2} \sqrt{m}, \quad m\le \frac{n}{C\,d^2}\,.
\end{equation}

The fact that we used $3^d \le 2n+1$ is not important here, the Proposition B.2 of \cite{ACKK} (see also Lemma \ref{Delta23} below) proves the same \eqref{March3} without any restrictions  on $d, n$ in the form
\begin{equation}
\label{March4}
\bP\{ \|H(d, n)\|_{op} -\bE \|H (d, n)\|_{op} > t\} \le e^{-\frac{t^2}{2\cdot 2^d}}\,,
\end{equation}

This changes \eqref{f-m} but only slightly and it becomes
\begin{equation}
\label{f-m=d}
\bE [\|H(d, n)\|_{op}^{2m}]\le \big( \bE \|H(d, n)\|_{op}\big)^{2m} + 2^{dm}\,2^m m!\,.
\end{equation}
So,
$$
 \big( \bE \|H(d, n)\|_{op}\big)^{2m} \ge c^m \, 3^{dm} \, m^m - 2^{dm}\,2^{2m} m!\ge \frac12 \,  c^m \, 3^{dm} \, m^m,
 $$
for $d\ge 4$.

\medskip

Therefore, we again have \eqref{final}.



\medskip

\begin{rem}
\label{comm-case}
The case  $ d^2\le c n$ is a ``{\it commutative case}".  This property says that the probability  to commute for
$\Sigma_a, \Sigma_b$ drawn uniformly at random and independently is $\ge 1-c$.
\end{rem}

\section{An estimate from below $\frac{1}{\sqrt{n}}\bE \|H(d, n)\|_{op} \ge c\, \frac{3^{d/2}}{d}$. Proof of  \eqref{asymptH}}
\label{another}

As always $N= 3^d {n \choose d}$, it is convenient to slightly change the normalization and to consider
$
\bH(d, n) = \frac1{\sqrt{N}}\sum_{j=1}^N g_j \Sigma_j,
$
where $g_j$ are independent standard gaussians, and $\Sigma_j$ are all tensor product of $n$ Paulis $\sigma_0, \sigma_1, \sigma_2, \sigma_3$ such that among them only $d$
are not $\sigma_0= \text{Id}_{2\times 2}$.

Let us denote 
$$
b_m :=\bE\tr [\bH(d, n)^{2m}],
$$
we will have a recursive inequality on $b_{m+1}$.

\medskip

We will be using gaussian integration by parts, which is the following formula, where $F$ is a smooth functions and $g, g_1, \dots, g_\ell$ are any gaussians:
\begin{equation}
\label{g-by-parts}
\bE ( g F(g_1, \dots, g_n) = \sum_{k=1}^n \bE(g g_\ell) \bE \big( F'_\ell (g_1, \dots, g_n)\big)\,.
\end{equation}
Also
\begin{equation}
\label{diff}
\big( \bH(g_1,\dots, g_N)^{2m+1}\big)'_j = \sum_{k=0}^{2m} \bH^k \, \big( \bH(g_1,\dots, g_N)\big)'_j \, \bH^{2m-k}\,.
\end{equation}

Therefore,
\begin{align*}
& b_{m+1} :=\bE\tr [\bH(d, n)^{2m+2}] =\frac1{N^{1/2}}\sum_{j=1}^N \bE\tr \Big(\Sigma_j g_j \bH(d, n)^{2m+1}\Big) =^{\eqref{g-by-parts}, \eqref{diff}}
\\
&\frac1{N}\sum_{j=1}^N\sum_{k=0}^{2m} \bE\tr \big( \Sigma_j \,\bH^k \,\Sigma_j \,\bH ^{2m-k}\big) = \frac1{N}\sum_{j=1}^N\sum_{k=0}^{2m} \bE\tr \big( \Sigma_j ^2 \,\bH^k \,\,\bH ^{2m-k}\big)  + 
\\
&\frac1{N}\sum_{j=1}^N\sum_{k=0}^{2m} \bE\tr \big( \Sigma_j \,[\bH^k, \Sigma_j ]\,\bH ^{2m-k}\big)  = \frac1{N}\sum_{j=1}^N\sum_{k=0}^{2m} \bE\tr \big( \text{Id}_{2^n\times 2^n} \,\bH ^{2m}\big)  + 
\\
& \frac1{N}\sum_{j=1}^N\sum_{k=0}^{2m} \sum_{r=0}^{k-1} \bE\tr \big( \Sigma_j \, \bH^r\, [\bH, \Sigma_j] \,\bH^{k-r-1} \, \bH^{2m-k}\big) = \frac1{N}\sum_{j=1}^N\sum_{k=0}^{2m} \bE\tr \big( \bH ^{2m}\big) +
\\
& \frac1{N^{3/2}} \sum_{j=1}^N\sum_{j'=1}^N\sum_{k=0}^{2m} \sum_{r=0}^{k-1} \bE\tr \big( \Sigma_j \, \bH^r\, g_{j'}\,[\Sigma_{j'}, \Sigma_j]  \, \bH^{2m-r-1}\big) = 2m\, b_m +
\\
&\frac1{N^2} \sum_{j, j'=1}^N \sum_{k=0}^{2m} \sum_{r=0}^{k-1} \sum_{\ell=0}^{r-1} \bE \tr \big(\Sigma_j \bH^\ell\, \Sigma_{j'}\, \bH^{r-\ell-1} \, [\Sigma_{j'}, \Sigma_j]\, \bH^{2m-r-1}\big) +
\\
& \frac1{N^2} \sum_{j, j'=1}^N \sum_{k=0}^{2m} \sum_{r=0}^{k-1} \sum_{\ell=0}^{2m-r-2} \bE \tr \big(\Sigma_j \bH^{r} \, [\Sigma_{j'}, \Sigma_j]\, \bH^\ell \Sigma_{j'}\bH^{2m-r-\ell-2}\big)  =: 2m b_m + E_1+ E_2.
\end{align*}

We would like to record also one equality in this chain:
\begin{equation}
\label{SiSi}
b_{m+1}= \frac1{N}\sum_{j=1}^N\sum_{k=0}^{2m} \bE\tr \big( \Sigma_j \,\bH^k \,\Sigma_j \,\bH ^{2m-k}\big) \,.
\end{equation}

Let us estimate $E_1$ (the same will be true for $E_2$): $\cH :=\bH^{2m-2}$. By using H\"older inequality  for trace
$$
|\tr \big(\Sigma_j \bH^\ell\, \Sigma_{j'}\, \bH^{r-\ell-1} \, [\Sigma_{j'}, \Sigma_j]\, \bH^{2m-r-1}\big) | \le
$$
$$
 \|\Sigma_j\|_\infty \| \bH^{\ell}\|_a \|\Sigma_{j'}\|_\infty  \| \bH^{r-\ell-1}\|_{b}  \|[\Sigma_{j'}, \Sigma_j]\|_\infty \| \bH^{2m-r-1}\|_{c}
$$
where $\|\cdot\|_p$ means Schatten--von Neumann norm of the matrix $\cdot$ and  $a=\frac{2m-2}{\ell}$, $b=\frac{2m-2}{r-\ell-1}$, $c=\frac{2m-2}{2m-r-1}$, and
$$
\frac1a  + \frac 1b +\frac 1c =1, a, b, c \ge 1\,.
$$
Thus each term of $E_1$ has the following estimate
\begin{equation}
\label{term}
\!\!\!\!\!\!\!|\tr \!\big(\Sigma_j \bH^\ell\, \Sigma_{j'}\, \bH^{r-\ell-1} \, [\Sigma_{j'}, \Sigma_j]\, \bH^{2m-r-1}\big) \!| \le \|[\Sigma_{j'}, \Sigma_j]\|_\infty \big(\tr \bH^{2m-2}\big)\,.
\end{equation}

Numbers $ r-\ell-1, 2m-r-1, \ell$ sum up to $2m-2$, so  $E_1$ can be rewritten (using non-negative integers $\alpha, \beta, \gamma$) as
$$
E_1= \frac1{N^2} \sum_{j, j'=1}^N \sum_{\alpha, \beta, \gamma: \alpha+\beta+\gamma= 2m-2} \bE \tr \big(\Sigma_j \bH^\alpha\, \Sigma_{j'}\, \bH^{\beta} \, [\Sigma_{j'}, \Sigma_j]\, \bH^{\gamma}\big) 
$$

So number of terms in $E_1$ is $N^2\times \frac{(2m-2)(2m-2-1)}{2}=CN^2m^2$. Taking into account \eqref{term}  we get
\begin{equation}
\label{E1}
E_1 \le Cm^2\, \bE \tr \bH^{2m-2}\ \frac1{N^2} \sum_{j', j=1}^N \|[\Sigma_{j'}, \Sigma_j]\|_\infty \le \frac{Cm^2 d^2}{n}\, \bE \tr \bH^{2m-2}\,.
\end{equation}
Verbatim the same reasoning shows that 
\begin{equation}
\label{E2}
E_2 \le Cm^2\, \bE \tr \bH^{2m-2}\ \frac1{N^2} \sum_{j', j=1}^N \|[\Sigma_{j'}, \Sigma_j]\|_\infty \le \frac{Cm^2 d^2}{n}\,\bE \tr \bH^{2m-2}\,.
\end{equation}

Coming back to
$$
b_{m+1} = 2m\, b_m + E_1+ E_2
$$
we obtain 
\begin{equation}
\label{recurs}
b_{m+1} \ge 2m\, b_m - Cm^2 \eps b_{m-1},\quad \eps:= \frac{d^2}{n} ,.
\end{equation}
We rewrite it as follows
$$
\frac{b_{m+1}}{b_m} \ge 2m - C\eps m^2 \frac{b_{m-1}}{b_m}=  m\Big( 2- C\eps\, m \frac{b_{m-1}}{b_m}\Big)\,.
$$
We denote $y_m :=  \frac{b_{m+1}}{m\,b_m}$ to get
\begin{equation}
\label{ym}
y_m\ge  2- C\eps \frac1{y_{m-1}}, \quad m\ge 2\,.
\end{equation}

Notice that $b_2 =\bE \tr \bH^4$ and $b_1 =\bE \tr \bH^2 $, have the relationship
$$
b_2 \ge  b_1\,.
$$
So, $1/y_1\le 1$. Then, just by \eqref{ym} all $1/y_k\le 1$ if $\eps $ is small enough, in particular $b_{k-1} \le b_k$.
This automatically means that  
\begin{equation}
\label{bm}
b_{m+1} \ge   m\Big( 2-  C\eps\, m \Big) \, b_m\,.
\end{equation}

\medskip

One can use this inequality for estimating $b_{m+1}$ from below as many times as 
\begin{equation}
\label{mnd2}
m\le \frac{1}{C} \cdot \frac1\eps =\frac{1}{C} \cdot \frac{ n}{d^2}
\end{equation}
getting, as a result,
$$
\bE \tr \bH^{2m+2} \ge (2-C_1\eps)\, m \, \bE \tr \bH^{2m}\,,\quad m \le \frac{c\, n}{d^2}\,
$$
We got (just by using $\tr \bH^2 =\tr \text{Id}_{2^n\times 2^n}= 2^n$)
\begin{equation}
\label{nbfH}
2^n \bE \| \bH^{2m+2}\|_{op} \ge \bE \tr \bH ^{2m+2} \ge 2^n\, \Big(2-C_1\frac{d^2}{n}\Big)^m \, m!\,.
\end{equation}

Coming back to our notation $H(d, n)$ which is $3^{d/2} \bH(d, n)$ we obtain
\begin{equation}
\label{asymptH}
 2^n\,\bE \| H(d, n)^{2m}\|_{op} \ge \bE \tr H(d, n) ^{2m} \ge 2^n\,c^m \, 3^{dm} m^m\,.
\end{equation}

Now we use \eqref{g} to get again \eqref{f-m}, which is
\begin{align}
\label{f-m1}
&\bE [\|H(d, n)\|_{op}^{2m}]\le 2\big( \bE \|H(d, n)\|_{op}\big)^{2m} + \notag
\\
& 2^{2m} m \big( \bE \|H(d, n)\|_{op}\big)^{2m-1} + 2^{2m}m!\,.
\end{align}
This of course gives
$$
(2+ 2^{2m} m)  \big( \bE \|H(d, n)\|_{op}\big)^{2m} \ge \bE \big( \|H(d, n)\|_{op}^{2m}\big) - 2^m m!\,.
$$
We now use \eqref{asymptH} to get 
$$
(2+ 2^{2m} m)  \big( \bE \|H(d, n)\|_{op}\big)^{2m} \ge c^m 3^{dm} m^m - 2^m m! \ge^{d \ge d_0} \frac12 c^m 3^{dm} m^m \,.
 $$
 we mentioned this already in \eqref{m2n}.
 Now we can choose $m = \frac{c_0 n}{d^2}$.
 
\medskip

So put $m = \frac{c_0 n}{d^2}$ and we obtain
\begin{equation}
\label{Final}
\bE \|H(d, n)\|_{op} \ge c\, 3^{d/2} \sqrt{\frac{n}{d^2}} = c \sqrt{n} \frac{3^{d/2}}{d}\,.
\end{equation}
 
 Thus, this expectation $\bE \|H(d, n)\|_{op}$ happily scales as $\sqrt{n}$ from below as it  was doing from above.
 
 Estimate \eqref{Final} was obtained under the assumption that $\eps = \frac{d^2}{n} \le \eps_0$ a certain small absolute positive number.
 
 Of course our proof here reminds a lot the proof in \cite{ACKK}, while formally it is a different  and arguable a little bit more direct proof.
 
\subsection{One more estimate from above}
\label{above again}
Recall that  we do not need to prove the estimate from above
$$
\bE \|H(d, n)\|_{op} \le c\, 3^{d/2} \sqrt{n}
$$
as it follows directly from Francoise Lust-Piquard NCK noncommutative Khintchine inequality \cite{LP} (see also p. 106 in Gilles Pisier's book \cite{P}).

\medskip

But the following independent proof might become handy for sharper estimates under certain $d, n$ regimes.
So, again, by \eqref{SiSi}
 $$
 b_{m+1} =\bE\tr [\bH(d, n)^{2m+2}] 
  =\frac1{N}\sum_{j=1}^N\sum_{k=0}^{2m} \bE\tr \big( \Sigma_j \,\bH^k \,\Sigma_j \,\bH ^{2m-k}\big).
$$
Denoting
$$
\phi(A):=\frac{1}{N}\sum_{j=1}^{N} \Sigma_j A \Sigma_j
$$
we may write 
 $$
 b_{m+1} =\sum_{k=0}^{2m} \bE\tr \big[ \phi(\bH^k)\bH ^{2m-k}\big].
$$
By definition, $\phi$ is a unital completely positive trace preserving map, and it is known that it is a contraction over Schatten-$p$ classes for all $p\ge 1$ \cite{B}: for all $A$,
$$
\|\phi(A)\|_p\le \|A\|_p,\qquad  p\ge 1.
$$
This, together with H\"older's inequality, implies for all $0\le k\le 2m$ that
\begin{align*}
 \bE\tr \big[ \phi(\bH^k)\bH ^{2m-k}\big]
& \le \left[\bE\tr\left(\phi(\bH^k)^{2m/k}\right)\right]^{k/2m}\left[\bE\tr\left(\bH^{2m}\right)\right]^{(2m-k)/2m}\\
& \le  \left[\bE\tr\left(\bH^{2m}\right)\right]^{k/2m}\left[\bE\tr\left(\bH^{2m}\right)\right]^{(2m-k)/2m}\\
&=\bE\tr\left(\bH^{2m}\right).
\end{align*} 
Then, we just proved 
$$
 b_{m+1} =\sum_{k=0}^{2m} \bE\tr \big[ \phi(\bH^k)\bH ^{2m-k}\big]\le (2m+1)\bE\tr\left(\bH^{2m}\right)
 =(2m+1)b_{m}.
$$
Inductively, one obtains
$$
b_m\le (2m-1)b_{m-1}\le \cdots \le (2m-1)!!b_1
$$
where 
$$
b_1=\bE\tr\left(\bH^2\right)=\frac{1}{N}\sum_{i=1}^{N}\bE[g_i^2]\tr\text{id}_{2^n}=2^n.
$$
Therefore, we have shown that 
$$
b_m\le 2^n\cdot (2m-1)!!=2^n\cdot \frac{(2m)!}{2^m m!}.
$$
In our earlier work, we obtained the estimate 
$$
b_m\le 2^n\cdot \frac{(2m)!}{2^m m!}\frac{1}{N^m}\bE\left[\left(\sum_{1\le j\le N}|g_j|^2 \right)^m\right]\,.
$$
With the new bound, we estimate 
\begin{align*}
\bE\tr [e^{\beta\sqrt{n} H(n,d)}]
&=\bE\tr [e^{\sqrt{3}^d\beta\sqrt{n} \bH}]\\
&=\sum_{m\ge 0}\frac{3^{dm}\beta^{2m}n^m}{(2m)!}b_m\\
&\le 2^n\sum_{m\ge 0}\frac{3^{dm}\beta^{2m}n^m}{(2m)!}\frac{(2m)!}{2^m m!}\\
&=2^n e^{\frac{3^d n\beta^2}{2}}
\end{align*}
and thus
$$
\frac{1}{\beta n}\log \bE\tr [e^{\beta\sqrt{n} H(n,d)}]\le \frac{\log 2}{\beta}+\frac{3^d\beta}{2}
$$
Optimizing over $\beta>0$ yields 
$$
E(d,n)\le \sqrt{2\log 2}\cdot \sqrt{3}^d.
$$

\bigskip
 
 Now one needs to use \cite{BBoVH}.

\section{Using \cite{BBoVH} for the second term of the asymptotics}
 \label{2term}
 
 Again $N= 3^d\,{n \choose d}$.
 For $2^n \times 2^n$ matrix $X=\frac1{\sqrt{N}} \sum_{j=1}^N g_j \Sigma_j$ we recall that $g_j$ are i.i.d. standard gaussians
 and $\Sigma_j$ are tensor products of elementary Paulis among which only exactly $d$ are not identities.
 
 In \cite{BBoVH} the following important characteristics of  random matrix $X$ are  introduced and used.  One  of them called $\Delta$ we already saw above (normalization is different from one in \cite{BBoVH}):
 $$
 \sigma_*(X)^2 := \max_{\|u\|=\|v||=1} \frac1{N}\sum_{j=1}^N |\langle \Sigma_j u, v\rangle|^2\,.
 $$
 In fact,  for our system of Paulis
 $$
 \cS_d:=\{ \Sigma_j\}_{j=1}^N
 $$
 we used
 $$
 \Delta(\cS_d) = \max_{\|u\|=1} \frac1{N} \sum_{j=1}^N  |\langle \Sigma_j u, u\rangle|^2\asymp \max_{\rho\ge 0, \tr\rho=1} \frac{1}{N}\sum_{j=1}^n |\tr[\Sigma_j \rho]|^2\,.
 $$
 The quantities $\sigma_*(X)^2$ and $\Delta(S_d)$ are comparable with absolute constants just by polarization.
 
 \medskip
 
 The following characteristic $v(X)$ plays an important part in \cite{BBoVH}:
 $$
 v(X)^2 = \max_{\tr [|M|^2] \le 1} \frac1N \sum_{j=1}^N |\tr [\Sigma_j M]|^2\,.
 $$
 Lemma 4.7 of \cite{BBoVH} proves that
 
 \begin{equation}
 \label{HS}
 v(X) = \max_j \frac1{\sqrt{N}}\|\Sigma_j\|_{HS}= \frac{2^{n/2}}{N^{1/2}},
 \end{equation}
 as $\tr[\Sigma_k\Sigma_j]=0, k\neq j$.
 We will give the proof for the sake of completeness below.
 
 \medskip
 
 In \cite{BBoVH} it is proved that for $\bE \|\bH(d, n)\|_{op}$ the second term of the asymptotics
 is bounded by
 $$
 v(X)^{1/2} \sigma(X)^{1/2} \log ^{3/4} D,
 $$ 
 where $D=2^n$ the dimension of the Hilbert space and 
 $$
 \sigma(X) = \frac1N \|\sum_{j=1}^N\Sigma_j^2\|_{op} = \|\text{Id}_{2^n\times 2^n}\|_{op} =1\,.
 $$
 
 Gathering all above we get that the second term of the asymptotic is
 $$
B(d, n)= \frac{ 2^{n/2}\, n^{3/4}}{ 3^{d/2} \, {n \choose d}^{1/2}}\,.
 $$
 If $d= pn$ this is approximately
 $$
 n^{3/4}2^{\frac12 n-\frac12 \frac{\log 3}{\log 2} d - n (p \log \frac1p + (1-p) \log\frac1{1-p})}
 $$
 and is exponentially small if  $d \ge c_0 n$ for a certain $c_0\in (0,1)$. This $c_0$ is obviously $< \frac{\log 2}{\log 3}$.
 
 \medskip
 
 So for such large $d$ we have that
 \begin{equation}
 \label{large-d}
 \bE \|\bH(d, n)\|_{op} \asymp C\sqrt{n}, \quad  \bE \|H(d, n)\|_{op} \asymp C \,3^{d/2}\,\sqrt{n}
 \end{equation}
 
 However, for, say, $d^2 \le b_0 n$ for a small absolute $b_0$ we have two estimates that do not match:
 So for such 
 small $d$ we have that
 \begin{align}
 \label{small-d}
&\frac{c\sqrt{n}}{d}  \le \bE \|\bH(d, n)\|_{op} \le C\sqrt{n},  \,\,\,\text{which is}
\\
& \frac{c\, 3^{d/2}\sqrt{n}}{d} \le \bE \|H(d, n)\|_{op} \le c \,3^{d/2}\,\sqrt{n}\,.
 \end{align}
 
 \subsection{Equality \eqref{HS} }
Equality \eqref{HS} is proved in Lemma 4.7 of \cite{BBoVH}. For the sake of completeness we give the reasoning now (virtually the same).
 It is easy to see that $v(X)^2$ is also the norm of the $D^2\times D^2$ matrix with $(ij, kl)$ element equal to $\frac1N \sum_{m=1}^N (\Sigma_m)_{ij} (\Sigma_m)_{kl}$. Denoting pairs of indices by letters $\mu, \nu$
 and thinking about matrices $\Sigma_m$ as $\ell_2^{D^2}$- vectors $V_1,\dots, V_N$ this is nothing else  as $\frac1N \sum_{m=1}^N (V_m)_\mu (V_m)_\nu$. 
 
 The norm of rank-$1$ matrix $(\cdot, V_m) V_m$ (it has exactly these $(\mu, \nu)$ matrix elements: $(V_m)_\mu (V_m)_\nu$) is $\|V_m\|_2^2$, which is
 $\|\Sigma_m\|_{HS}^2$. But we have the sum of those rank-$1$ matrices. However, $\tr [\Sigma_m\Sigma_{m'}] = 0$ for $m\neq m'$ means 
 that vectors $V_m$ are orthogonal in $\ell_2^{D^2}$. Therefore,
 the norm of the sum is the maximal of norms: $\frac1N \max_{1\le m\le N} \|V_m\|_2^2 =  \frac1N \max_{1\le m\le N} \|\Sigma_m\|_{HS}^2$.
 Therefore \eqref{HS} holds, that is
 $$
 v(X) =\frac1{\sqrt{N}} \max_{1\le i \le N} \|\Sigma_i\|_{HS}\,.
 $$

 \bigskip
 
\subsection{Estimating $\Delta(\cS_d)$}
 Paper \cite{ACKK} has very beautiful calculation of $\Delta(\cS_d)$. We repeat it here.
 
 \begin{lemma} 
 \label{Delta}
 If $3^d \le 2n+1$ then $\Delta(\cS_d) \le 3^{-d}$.
 \end{lemma}

 \begin{proof}
 First one should notice that
 for a Hermitian matrix $A$ and a unit vector $\psi$ it is always true that
 \begin{equation}
 \label{Apsi}
 \big( \langle A\psi, \psi\rangle \big)^2 \le \langle A^2 \psi, \psi\rangle\,.
 \end{equation}
 Inequality \eqref{Apsi} follows easily from H\"older inequality.
 Then \eqref{Apsi} can be easily upgraded to the following inequality, where $\rho$ is any state:
 \begin{equation}
 \label{Arho}
 \big(\tr [A\rho]\big)^2 \le \tr [A^2 \rho]\,.
 \end{equation}
 
 We can apply \eqref{Arho} to 
 $$
 A=\sum_{j=1}^N \tr[\Sigma_j \rho] \, \Sigma_j.
 $$
 Then we get the inequality like this:
 $$
 \big[\sum_{j=1}^N (\tr [\Sigma_j \, \rho])^2\big]^2 \le \sum_{k, j} \tr [\Sigma_k \, \rho] \tr [\Sigma_j \, \rho]\, \tr [\Sigma_k \Sigma_j \rho] = 
 $$
 $$
\frac12 \sum_{k, j} \tr [\Sigma_k \, \rho] \tr [\Sigma_j \, \rho]\, \tr [(\Sigma_k \Sigma_j +\Sigma_j\Sigma_k)\rho] 
 $$
 Anti-commuting $\Sigma'$s give $0$, but the same $\Sigma'$s give us $1$ as $\tr \rho=1$ and $\Sigma_j^2=\text{Id}$. 
 So, we see the inequality below that was first proved in \cite{GHG}, see also \cite{AGK}, \cite{ACKK}: 
 \begin{equation}
 \label{B}
\max_{\rho\ge 0, \tr \rho=1} \sum_{j=1}^n (\tr [\Sigma_j\rho])^2 \le \max \|B\|_{op},
 \end{equation}
 where $B$ is non-negatively defined, $B_{kk}=1$, $B_{kj}=0$ if $\Sigma_k, \Sigma_j$ anti-commute. Looks like an adjacent matrix of a graph.  
 Such maximum is called Lov\'asz number of anti-commuting graph $G(\cS_d)$ of $\cS_d$.
  Graph $G(\cS_d)$ has $\cS_d$ as vertices and there is no edge if $\Sigma_k, \Sigma_j$ anti-commute.
 Lov\'asz number is denoted by $\theta(G(\cS_d))$.
 
 For any graph $G$, $\bar G$ denoted complemented graph--same vertices and exactly those edges which were not present in $G$.
 
 Graph $\bar G(\cS_d)$ is obviously transitive (there is an automorphism sending any vertex to any vertex connected by an edge). 
 For transitive graphs there is an equality
 $$
 \theta(G) \theta(\bar G) = \sharp{\text{vertices}}
 $$
 So
 \begin{equation}
 \label{th}
 \theta(G(\cS_d)) \theta(\bar G(\cS_d))  =N
 \end{equation}
 And we already saw in \eqref{B} that
 $$
  \Delta(S_d) \le \frac 1N \theta(G(\cS_d)) =(\theta(\bar G(\cS_d)))^{-1}\,.
 $$
 
 It is time to use a combinatorial result of Knuth \cite{K}, that says that $\theta (g)$ can be estimated from below by the size of maximal clique of $g$. 

We can illustrate this by two extreme examples. One is when $g$ is a full graph on $n$ vertices. 
Then $\theta(g)$ is the norm of matrix whose entries are all $1$. In this case $\theta(g)= n$, this is the size of maximal cliques obviously.
The second example is $g$, where there are no edges, but there are $n$ vertices. Again $\theta(g)=1$, which again is the size of the maximal clique.

 Graph
 $\bar G(\cS_d)$ has edges between any two $\Sigma_k, \Sigma_j$ that anti-commute. It has been known \cite{JKMN} that the maximum clique has at least $3^d$ vertices if 
 $$
 2n+1 \ge 3^d\,.
 $$
 Let us illustrate this for $d=2,  n=4$.
 
 $$
 (1001), (1002), (1003), (2010), (2020), (2030), (3100), (3200), (3300)
 $$
 are nine tensor products $\sigma_1\otimes\sigma_0\otimes \sigma_0\otimes\sigma_1,\dots , \sigma_3\otimes\sigma_3\otimes \sigma_0\otimes\sigma_0$ that anti-commute if we have $n=4$ qubits Hilbert space.
 \end{proof}
 
\medskip

 \begin{lemma} 
 \label{Delta23}
 Always $\Delta(\cS_d) \le \Big(\frac23\Big)^{d}$.
 \end{lemma}
 
\begin{proof}
This lemma is also proved in \cite{ACKK}. Our proof is only slightly different. We fix 
a unit vector $\psi$ in $\bC^{2^n}$:
$$
\psi= \sum_{x\in \{0,1\}^n} \alpha_{x} \psi_{x_1}\otimes\dots\otimes \psi_{x_n}=: \sum \alpha_x \psi_x,
$$
where $\psi_{x_i} $ is $\ket 1$ if $x_i=1$ and $\ket 0$ if $x_i=0$. 

Our quantity
$$
\Delta(\cS_d)=\bE _{\Sigma\in \cS_d} \langle \Sigma\psi, \psi\rangle^2=
$$
$$
\bE_{S\subset [n], |S|=d} \bE_{\cS_d^S}\langle \Sigma\psi, \psi\rangle^2 = \bE_{S\subset [n], |S|=d}\frac1{3^d} \sum_{\Sigma \in \cS_d^S}\langle \Sigma\psi, \psi\rangle^2  \,.
$$
We need to estimate $\frac1{3^d} \sum_{\Sigma \in \cS_d^S}\langle \Sigma\psi, \psi\rangle^2 $ for a fixed $S$. Let us allow to have also $\sigma_0$, and call the new collection of Paulis aimed at $S\subset [n]$ by
$\cP_d^S$. Then we continue thinking that 
$$
S=[d],
$$
(which is without loss of generality as all $S$ are the same) to get
\begin{align}
\label{Jd}
 &\frac1{3^d} \sum_{\Sigma \in \cS_d^S}\langle \Sigma\psi, \psi\rangle^2  \le  \frac1{3^d} \sum_{\Sigma \in \cP_d^S}\langle \Sigma\psi, \psi\rangle^2  = \notag
 \\
 & \frac1{3^d}\bra\psi\bra\psi \sum_{\Sigma \in \cP_d^S}\Sigma\otimes\Sigma\ket\psi\ket\psi=: \frac{J}{3^d}.
 \end{align}
There are two independent $\psi\otimes\psi$ here. We think that the first one and the second one are correspondingly
$$
\psi\otimes \psi = \sum_{x, x'} \alpha_{x}\alpha_{x'} \psi_x\otimes \psi_{x'},\,\,\psi\otimes \psi = \sum_{y, y'} \alpha_{y}\alpha_{y'} \psi_{y}\otimes\psi_{y'}
$$
Before plugging these expressions into \eqref{Jd} let us recall what is the operator called $\text{SWAP}$. 
This is the operator in $\bC^2\otimes \bC^2$ that sends $\ket u\ket v$ into $\ket v\ket u$. Its matrix is 
$$
\text{SWAP}= \begin{bmatrix} 
1\,0\,0\,0
\\
0\,0\,1\,0
\\
0\,1\,0\,0
\\
0\,0\,0\,1
\end{bmatrix}
$$
It is well known that
$$
2\text{SWAP} = \sigma_0\otimes \sigma_0 +  \sigma_1\otimes \sigma_1+  \sigma_2\otimes \sigma_2 + \sigma_3\otimes \sigma_3\,,
$$
which we will use soon below.
 \begin{align*}
& J = \sum_x\sum_y \sum_{x'} \sum_{y'} \alpha_x\alpha_{y} \alpha_{x'}\alpha_{y'}\prod_{i=1}^d \sum_{\kappa\in \{0, 1, 2, 3\}} \Big(\bra{\psi_{x_i}} \sigma_\kappa \ket{\psi_{y_i}}\bra{\psi_{x'_i}} \sigma_\kappa \ket{\psi_{y'_i}}\Big)\cdot
\\
&\prod_{j=d+1}^n\Big(\bra{\psi_{x_j}} \sigma_0 \ket{\psi_{y_j}}\bra{\psi_{x'_j} }\sigma_0 \ket{\psi_{y'_j}}\Big) =
\\
&  \sum_x\sum_y \sum_{x'} \sum_{y'} \alpha_x\alpha_{y} \alpha_{x'}\alpha_{y'}\prod_{i=1}^d \bra{\psi_{x_i}}\bra{\psi_{x'_i}}\sum_{\kappa\in \{0, 1, 2, 3\}}  \sigma_\kappa\otimes \sigma_\kappa \ket{\psi_{y_i}}\ket{\psi_{y'_i}} \cdot
\\
&\prod_{j=d+1}^n \bra{\psi_{x_j}}\bra{\psi_{x'_j}}  \sigma_0\otimes \sigma_0 \ket{\psi_{y_j}}\ket{\psi_{y'_j}} =
\\
&  \sum_x\sum_y \sum_{x'} \sum_{y'} \alpha_x\alpha_{y} \alpha_{x'}\alpha_{y'}\prod_{i=1}^d \bra{\psi_{x_i}}\bra{\psi_{x'_i}}2\text{SWAP}_{i, i+n} \ket{\psi_{y_i}}\ket{\psi_{y'_i}} \cdot
\\
&\prod_{j=d+1}^n \bra{\psi_{x_j}}\bra{\psi_{x'_j}}  \sigma_0\otimes \sigma_0 \ket{\psi_{y_j}}\ket{\psi_{y'_j}} =
\\
&2^d\, \sum_x\sum_y \sum_{x'} \sum_{y'} \alpha_x\alpha_{y} \alpha_{x'}\alpha_{y'}  \prod_{i=1}^d \bra{\psi_{x_i}}\bra{\psi_{x'_i}}\ket{\psi_{y'_i}}\ket{\psi_{y_i}} \cdot
\\
&\prod_{j=d+1}^n \bra{\psi_{x_j}}\bra{\psi_{x'_j}}  \sigma_0\otimes \sigma_0 \ket{\psi_{y_j}}\ket{\psi_{y'_j}}\,.
 \end{align*}
 
 Huge number of products vanish. Let us list those that survive. These are those, where $x_i= y'_i$ and $y_i = x'_i$, for all $i=1,\dots, d$ and 
 simultaneously $x_j= y_j$ and $y'_j= x'_j$, $i= d+1,\dots, n$.
 
 \medskip
 
 Rewrite all quadruple products $\alpha_x\alpha_{y} \alpha_{x'}\alpha_{y'}$ as follows (below $S=[d], \bar S=[n]\setminus [d]$):
 $$
 \alpha_{x|S\, x|\bar S}\, \alpha_{y|S\, y|\bar S}\, \alpha_{x'|S\, x'|\bar S}\, \alpha_{y'|S\, y'|\bar S}\,.
 $$
 
 So only those quadruple products survive 
 $$
 \alpha_{x|S\,x|\bar S}\, \alpha_{x'|S\, x|\bar S}\, \alpha_{x'|S\, x'|\bar S}\, \alpha_{x|S\,x'|\bar S}
 $$
 To simplify we use nee notations and we write the sum of surviving products as
 $$
 \sum_{I, J, K, L} \alpha_{IJ}\, \alpha_{KJ} \, \alpha_{KL}\, \alpha_{IL}\,.
 $$
 Let us denote $A= \{\alpha_{IJ}$ the $d\times (n-d)$ matrix of entries $\alpha-{x|S\, x|\bar S}$ (again $S=[d], \bar S =[n]\setminus [d]$).
 Then 
 $$
 \|A\|_{HS} =\|\{\alpha_x\}\|_2 =1,
 $$ 
 and
 $$
 \sum_{I, J, K, L} \alpha_{IJ}\, \alpha_{KJ} \, \alpha_{KL}\, \alpha_{IL}=\tr[ A^*A\cdot A^*A] \le \|A\|_{HS}^4 = 1\,.
 $$
 
 Henceforth,
 $$
 J\le 2^d,
 $$
 and lemma is proved because of \eqref{Jd}.

\end{proof}

\section{One impossibility and questions}
\label{e}
Notice that if we have a good estimate of this quantity from below then we will have a good estimate for 
(a smaller) quantity
$$
 e^{\beta\sqrt{n}\bE \|H(d, n)\|_{op}}
 $$
 from below. 
\begin{equation}
\label{March}
 e^{\beta\sqrt{n}\bE \|H(d, n)\|_{op}} \ge \bE e^{\beta\sqrt{n} \|H(d, n)\|_{op}} \cdot\text{not too small},
 \end{equation}
 where in the case
 \begin{equation}
 \label{d-n}
 3^d \le 2n+1
 \end{equation}
 we can consult \eqref{eE} and see that 
 $\text{not too small}\ge e^{-4n\beta^2}$.
 That is
 \begin{equation}
\label{March0}
  3^d \le 2n+1 \, \Rightarrow\,e^{\beta\sqrt{n}\bE \|H(d, n)\|_{op}} \ge \bE e^{\beta\sqrt{n} \|H(d, n)\|_{op}} \cdot e^{-4 n\beta^2}\,.
 \end{equation}
 See the detailed explanation  in Section \ref{explan}.
 
 \bigskip

 To estimate $\bE e^{\beta\sqrt{n} \|H(d, n)\|_{op}}$ from below we first use
 \begin{equation}
 \label{tr}
 \bE e^{\beta\sqrt{n} \|H(d, n)\|_{op}} \ge 2^{-n} \bE\tr e^{\beta\sqrt{n} H(d, n)} \,.
 \end{equation}
 Expanding $e^x$ and noticing that odd powers are eliminated by $\bE$ we now need the estimate from below for
 \begin{align*}
 &\sum_{m=0}^\infty \frac1{(2m)!}\beta^{2m} n^m\bE \tr [H(d, n)^{2m}]= \sum_{m=0}^{C_1n}\dots + 
 \\
 &\sum_{m>C_1 n} \frac1{(2m)!}\beta^{2m} n^m\bE \tr [H(d, n)^{2m}]=: \sigma_1+\sigma_2\,.
\end{align*}

\medskip

Lemma 39 of \cite{AGK} claims that for every $d, \eps, \beta$ there exists $C_1(d, \eps, \beta) $ such that $\sigma_2= e^{-\eps n}$. 
This should allow us to consider only $\sigma_1$.

\medskip

Now let us take a look at the second inequality of \eqref{asymptH}. We proved this estimate  only for $m\le c\frac{n}{d^2}$:
\begin{equation}
\label{trtr}
\bE \tr [H(d, n)^{2m}] \ge 2^n \, c^m\frac{(2m)!}{m!} 3^{dm}
\end{equation}

Suppose we would be able to prove it for $m\le C_1n$ with $C_1$ as large as above, then
 using \eqref{tr} and \eqref{March0}
  we would have 
 \begin{equation}
 \label{March1}
\bE \big( \frac1{\sqrt{n}} \|H(d, n)\|_{op}\big) \ge c_0 3^{d/2} \beta - \frac{4n\beta^2 }{\beta n} =  c_0 3^{d/2} \beta- 4\beta \,,
\end{equation}
which is nonsense as $\beta$ can be chosen arbitrarily.

What else do we know except
$$
\sqrt{n}\frac{3^{d/2}}{d} \le \bE\|H(d, n)\|_{op} \le \sqrt{n} \,3^{d/2}\,?
$$
For $d\asymp 1$ this is a sort of equivalence.

For ``commutative case" $ 1<< d\le c\sqrt{n}$ the upper estimate should be improved to the lower one, but we did not prove that.

If $d\ge c\sqrt{n}$ the lower estimate becomes weaker than noncommutative Khintchine inequality (NCK) estimate $ c \le \bE\|H(d, n)\|_{op}$. 

What happens for $c\sqrt{n} \le d\le c_1 n$? 

Looks like we do not have any information, but it is reasonable to think that the left hand side of NCK estimate becomes prevailing for $d\asymp \sqrt{n}$.

For large $d$, like $d\asymp n$, the highly ``noncommutative case" we should have 
$$
 \bE\|H(d, n)\|_{op}  \asymp 3^{d/2}\sqrt{n}\,.
 $$

Another question: for a fixed $d$ is  there a limit:
$$
\lim_{n\to \infty} \frac1{\sqrt{n}} \bE\|H(d, n)\|_{op}\,?
$$

\newcommand{\etalchar}[1]{$^{#1}$}


\begin{thebibliography}{DGMSP19}
\bibitem[AE07]{AE}
Andris Ambainis and Joseph Emerson.
\newblock Quantum $t$-designs: $t$-wise Independence in the Quantum World.
\newblock In {\em {CCC}'07---{P}roceedings of the 22nd Annual {IEEE} Conference on Computational Complexity}, pages 129--140.

\bibitem[AGK24]{AGK} {\sc Eric R.  Anschuetz, David Gamarnik, Bobak T. Kiani}, {\em Bounds on the ground state energy of quantum $p$-spin Hamiltonians}, pp. 1-54, arXiv2404.0723 v2.

\bibitem[ACKK24]{ACKK}{\sc Eric R.  Anschuetz, Chi-Fang Chen, Bobak T. Kiani, Robbie King}, {\em Strongly interacting fermions are non-trivial yet non-glassy}, pp. 1--24, arXiv2408.15699v4.

\bibitem[BBoVH]{BBoVH} {\sc A. S. Bandeira, M. T. Boedihardjo, and R. van Handel}, {\em Matrix concentration inequalities and free probability} (2021).

\bibitem[BaH13]{BrH}
Fernando G. S.~L. Brand\~ao and Aram~W. Harrow.
\newblock Product-state approximations to quantum ground states (extended
  abstract).
\newblock In {\em S{TOC}'13---{P}roceedings of the 2013 {ACM} {S}ymposium on
  {T}heory of {C}omputing}, pages 871--880. ACM, New York, 2013.

\bibitem[BGKT19]{BGKT}
Sergey Bravyi, David Gosset, Robert K\"onig, and Kristan Temme.
\newblock Approximation algorithms for quantum many-body problems.
\newblock {\em J. Math. Phys.}, 60(3):032203, 18, 2019.

\bibitem[Bha07]{B}
Rajendra Bhatia.
\newblock {\em Positive definite matrices}.
\newblock Princeton Series in Applied Mathematics. Princeton University Press,
  Princeton, NJ, 2007.

\bibitem[BKSVZ]{BKSVZ}
Lars Becker, Ohad Klein, Joseph Slote, Alexander Volberg, and Haonan Zhang.
\newblock Dimension-free discretizations of the uniform norm by small product
  sets.
\newblock {\em Invent. Math.}, 239(2):469--503, 2025.

\bibitem[DGMSP19]{DGMS}
Andreas Defant, Domingo Garc\'ia, Manuel Maestre, and Pablo Sevilla-Peris.
\newblock {\em Dirichlet series and holomorphic functions in high dimensions},
  volume~37 of {\em New Mathematical Monographs}.
\newblock Cambridge University Press, Cambridge, 2019.

\bibitem[DGS77]{DGS}
P.~Delsarte, J.~M. Goethals, and J.~J. Seidel.
\newblock Spherical codes and designs.
\newblock {\em Geometriae Dedicata}, 6(3):363--388, 1977.

\bibitem[DMP19]{DMP}
Andreas Defant, Mieczys{\l}aw Masty{\l}o, and Antonio P\'erez.
\newblock On the {F}ourier spectrum of functions on {B}oolean cubes.
\newblock {\em Math. Ann.}, 374(1-2):653--680, 2019.

\bibitem[DP24]{DP}
Feng Dai and Andriy Prymak.
\newblock Optimal polynomial meshes exist on any multivariate convex domain.
\newblock {\em Found. Comput. Math.}, 24(3):989--1018, 2024.

\bibitem[EI20]{EI2}
Alexandros Eskenazis and Paata Ivanisvili.
\newblock Polynomial inequalities on the {H}amming cube.
\newblock {\em Probab. Theory Related Fields}, 178(1-2):235--287, 2020.

\bibitem[EI22]{EI3}
Alexandros Eskenazis and Paata Ivanisvili.
\newblock Learning low-degree functions from a logarithmic number of random
  queries.
\newblock In {\em S{TOC} '22---{P}roceedings of the 54th {A}nnual {ACM}
  {SIGACT} {S}ymposium on {T}heory of {C}omputing}, pages 203--207. ACM, New
  York, [2022] \copyright 2022.
  
\bibitem[GHG]{GHG}  C. de Gois, K. Hansenne, and O. G\"uhne, {\em Uncertainty relations from graph theory}, Physical Review A 107, 062211
(2023).

\bibitem[GHL{\etalchar{+}}15]{GHLS}
Sevag Gharibian, Yichen Huang, Zeph Landau, Seung~Woo Shin, et~al.
\newblock Quantum hamiltonian complexity.
\newblock {\em Foundations and Trends{\textregistered} in Theoretical Computer
  Science}, 10(3):159--282, 2015.

\bibitem[GK12]{GK}
Sevag Gharibian and Julia Kempe.
\newblock Approximation algorithms for {QMA}-complete problems.
\newblock {\em SIAM J. Comput.}, 41(4):1028--1050, 2012.

\bibitem[HCP23]{HCP}
Hsin-Yuan Huang, Sitan Chen, and John Preskill.
\newblock Learning to predict arbitrary quantum processes.
\newblock {\em PRX Quantum}, 4(4):040337, 2023.

\bibitem[HM17]{HM}
Aram~W Harrow and Ashley Montanaro.
\newblock Extremal eigenvalues of local hamiltonians.
\newblock {\em Quantum}, 1:6, 2017.

\bibitem[JKMN]{JKMN}  Z. Jiang, A. Kalev, W. Mruczkiewicz, and H. Neven,  {\em Optimal fermion-to-qubit mapping via ternary trees with
applications to reduced quantum states learning}, Quantum 4, 276 (2020).

\bibitem[Jun96]{Ju} 
Marius Junge. 
\newblock Factorization theory for spaces of operators. \newblock Habilitation thesis. Kiel university, 1996.

\bibitem[KKR04]{KKR}
Julia Kempe, Alexei Kitaev, and Oded Regev.
\newblock The complexity of the local {H}amiltonian problem.
\newblock In {\em F{STTCS} 2004: {F}oundations of software technology and
  theoretical computer science}, volume 3328 of {\em Lecture Notes in Comput.
  Sci.}, pages 372--383. Springer, Berlin, 2004.

\bibitem[KSV02]{KSV}
A.~Yu. Kitaev, A.~H. Shen, and M.~N. Vyalyi.
\newblock {\em Classical and quantum computation}, volume~47 of {\em Graduate
  Studies in Mathematics}.
\newblock American Mathematical Society, Providence, RI, 2002.
\newblock Translated from the 1999 Russian original by Lester J.\ Senechal.
\bibitem[K]{K} D. E. Knuth, {\em The sandwich theorem}, arXiv preprint math/9312214 (1993).

\bibitem[Lie73]{L}
Elliott~H. Lieb.
\newblock The classical limit of quantum spin systems.
\newblock {\em Comm. Math. Phys.}, 31:327--340, 1973.

\bibitem[Lit30]{Lit}
John~E Littlewood.
\newblock On bounded bilinear forms in an infinite number of variables.
\newblock {\em The Quarterly Journal of Mathematics}, (1):164--174, 1930.

\bibitem[LP86]{LP} F. Lust-Piquard, {\em  In\'egalit´es de Khintchine dans $C_p$($1 < p < \infty$)}. C. R. Acad. Sci.
Paris S\'erie I Math. 303 (1986), 289--292.


\bibitem[MS86]{MS}
Vitali~D. Milman and Gideon Schechtman.
\newblock {\em Asymptotic theory of finite-dimensional normed spaces}, volume
  1200 of {\em Lecture Notes in Mathematics}.
\newblock Springer-Verlag, Berlin, 1986.
\newblock With an appendix by M. Gromov.

\bibitem[Pis98]{P}  Gilles Pisier.
\newblock Non-commutative vector valued $L_p$-spaces and completely $p$-summing map.
\newblock Ast\'erisque, tome 247 (1998), pp.1--138.

\bibitem[RD66]{RD}
B.~Russo and H.~A. Dye.
\newblock A note on unitary operators in {$C\sp{\ast} $}-algebras.
\newblock {\em Duke Math. J.}, 33:413--416, 1966.

\bibitem[SVZ24]{SVZ}
Joseph Slote, Alexander Volberg and Haonan Zhang.
\newblock Noncommutative Bohnenblust--Hille Inequality for qudit systems.
\newblock {arXiv preprint}, arXiv:2406.08509.



\bibitem[VZ24]{VZ}
Alexander Volberg and Haonan Zhang.
\newblock Noncommutative {B}ohnenblust-{H}ille inequalities.
\newblock {\em Math. Ann.}, 389(2):1657--1676, 2024.

\bibitem[Wat12]{W}
John Watrous.
\newblock Quantum computational complexity.
\newblock In {\em Computational complexity. {V}ols. 1--6}, pages 2361--2387.
  Springer, New York, 2012.

\end{thebibliography}
\end{document}

Let us write the representation of our positive matrix polynomial as its sum of positive monomial with positive coefficients $a_\al$
 \begin{equation}
 \label{posdecomp}
\mathcal{A}= \mathcal{A}^{+}=\sum_\al a_\al M_\al\,.
 \end{equation}
 Here $M_\al$ is the tensor product of identities and one-dimensional orthogonal projections.
 Let $M_\al$ be tensor product of $A_m, 1\le m\le 2n$.
 Here is the  thing  pertinent to recalculating our positive polynomial from its positive representation back to its representation as the  Pauli matrices tensor product of  matrices $\sigma_0=Id, \sigma_1, \sigma_2, \sigma_3$.
 Let
 \begin{equation}
 \label{cim}
 A_m = c_{0m} \sigma_0 + c_{1m} \sigma_1 + c_{2m}\sigma_2+ c_{3m} \sigma_3\,.
 \end{equation}
 If $A_m$ is identity then $c_{0m}=1$ and other coefficients are zeros. If $A_m$ is a projection then $c_{0m}=\frac12$.
 
 
 \bigskip


Now let us compare positive decomposition  \eqref{posdecomp} of polynomial $\mathcal{A}$ with its standard decomposition
\begin{equation}
\label{decomp}
\mathcal{A}= \mathcal{A}^{+}=\sum_\al b_\al N_\al,
 \end{equation}
 where $N_\al$ are tensor products of $\sigma_0, \sigma_1,\sigma_2,\sigma_3$.  
 
 Notice that if $M_\al$ has exactly  $k_\al, k_\al\le d$, projections, then $\a_\al M_\al$ will, in particular, generate a contribution to a ``free term'' $I\otimes I\otimes\dots\otimes I$. This contribution will look like $\frac1{2^{k_\al}} a_\al I\otimes I\otimes\dots\otimes I$.
 
 \medskip
 
 So, the whole ``free term'' in the standard decomposition  \eqref{decomp}
 will be (there are $2n$ tensors)
 \begin{equation}
 \label{free}
 F(\mathcal A) = \sum_\al \frac1{2^{k_\al}} a_\al I\otimes I\otimes\dots\otimes I\,.
 \end{equation}
 Just use \eqref{cim} for every $A_m$ in every $M_\al$.

 \medskip
 


Now let us take a look at $\tr (\mathcal{A}\, \rho(\z)\otimes \rho(\z)$, where
\begin{equation}
\{\z\in \{-1/3,-1/4, 1/4, 1/3\}^{3n}\}.
\end{equation}
Here $\rho(\z)$  for any $\z\in  Z:=\{-1/3, -1/4,1/4,1/3\}^{3n}$ is the tensor product of
$$
\rho_j(\z)=2^{-1}\sigma_0+2^{-1}\sum_{i=1,2,3}z^{(i)}_j\sigma_i\,.
$$ 
Operator $\rho_j(\z)$ is self-adjoint and has trace $\tr[\rho_j(\z)]=1$. Moreover, 
$$
\|\rho_j(\z)-2^{-1}\un\|\le 2^{-1}\sum_{i=1,2,3}|z^{(i)}_j|=\frac{1}{2}. 
$$
So, these are non-negative operators.
So these  $\rho(\z) =\Pi^{\otimes} \rho_j(\z)$ are obviously product states.

We consider polynomial of $\z$:
$$
\cP(\z):=\tr [\mathcal{A} \, \rho(\z)\otimes \rho(\z)] = \tr [F(\mathcal{A}) \,  \rho(\z)\otimes \rho(\z)]  + \text{poly}_1(\z) + \text{poly}_2(\z),
$$
where $\text{poly}_i(\z)$ are polynomials of $\z$ without a constant term.   Notice that trivially $\cP(\z)$ is  of local degree at most $2$ in each variable $z_j^{(i)}$.

Above $ \text{poly}_1(\z)$ consists of monomials such that at least in one variable the local degree is $1$. 

And $ \text{poly}_2(\z)$ consists of monomials, where each participating  variable  is of degree $2$. 

So, the expectation of $ \text{poly}_1(\z)$ over $Z:=\{-1/3, -1/4,1/4,1/3\}^{3n}$ is zero. If $ \text{poly}_2(\z)$ is not present then
$$
| \tr [F(\mathcal{A}) \,  \rho(\z)\otimes \rho(\z)]|=\big|\bE \big[\tr(\mathcal{A} \, \rho(\z)\otimes \rho(\z)) \big]\big| \le \max_{z\in Z^{3n}}|\tr(\mathcal{A} \, \rho(\z)\otimes \rho(\z)) |\,.
$$
We should now compare it with \eqref{free} to get the estimate:
\begin{equation}
\label{coef}
\|\cA\|\le \sum_\al a_\al \le 2^d\,  \max_{z\in Z^{3n}}|\tr [\mathcal{A} \, \rho(\z)\otimes \rho(\z)] |\,.
\end{equation}

In fact,
$$
\tr [F(\mathcal{A}) \,  \rho(\z)\otimes \rho(\z))]=\sum_\al \frac1{2^{k_\al}} a_\al \ge \frac1{2^d} \sum_\al a_\al \ge \frac1{2^d}  \|\cA\|\,.
$$

Polynomial $\text{poly}_2$ appears from symmetric monomials in decomposition \eqref{decomp}.
Theorem \ref{rho-thm} is proved in the case when $\cA$ does not have symmetric monomials on standard decomposition \eqref{decomp}.

\bigskip

To take care of $\text{poly}_2(z)$, whose $\bE\, \text{poly}_2(z)$ can be unfortunately negative, let us introduce one more parameter: consider 
$$
Z_t:=\{-1/3, -t, t,1/3\}^{3n},
$$
where $t$ runs over small interval $J$ centered at $1/4$. We have $\z_t$ running over $Z_t^{3n}$. Then
$$
\tr(\mathcal{A} \, \rho(\z_t)\otimes \rho(\z_t)) = \tr(F(\mathcal{A}) \,  \rho(\z)\otimes \rho(\z))  + \text{poly}_1(\z_t) + \text{poly}_2(\z_t).
$$
Hit it by $\bE_{Z_t^{3n}}$ and use that $ \tr(F(\mathcal{A}) \,  \rho(\z)\otimes \rho(\z))=\sum_\al \frac1{2^{k_\al}} a_\al$
\begin{equation}
\label{zt}
\Big|\bE \,  \text{poly}_2(\z_t) + \sum_\al \frac1{2^{k_\al}} a_\al  \Big|   \le \max_{z\in Z_t^{3n}}|\tr(\mathcal{A} \, \rho(\z_t)\otimes \rho(\z_t)) |\,.
\end{equation}
Notice that $p(t^2)=\bE \,  \text{poly}_2(\z_t) + \sum_\al \frac1{2^{k_\al}} a_\al $ is a polynomial of variable $t^2$ of degree at most $d$ on interval $J$. We want to estimate its free term given inequality \eqref{zt}. Divide $J$ to $2d$ equal intervals, and let the division points form set $N$. Then 
$$
|p(t^2)| \le \max_{z\in Z_t^{3n}}|\tr(\mathcal{A} \, \rho(\z_t)\otimes \rho(\z_t)) |\quad \forall t\in N\,.
$$
We have $2d+1$ estimates of polynomial on a family of equidistributed points $N$ on interval $J$. And the degree is at most $2d$. So, the free term  can be estimated  by $C(d)\max_{t\in N}\max_{z\in Z_t^{3n}}|\tr(\mathcal{A} \, \rho(\z_t)\otimes \rho(\z_t)) |$ just by using Vandermond determinant.

Then we have the estimate of the free term $\sum_\al \frac1{2^{k_\al}} a_\al $:
$$
\frac1{2^d} \|\cA\| \le 
$$
$$
\Big|\sum_\al \frac1{2^{k_\al}} a_\al \Big| \le C(d) \max_{t\in N} \max_{z\in Z_t^{3n}}|\tr [\mathcal{A} \, \rho(\z_t)\otimes \rho(\z_t)]|\,.
$$
So we need the union of grids, containing $(2d+1) 4^{3n}$ points in total.

Theorem \ref{rho-thm} is proved.





To treat the case of monomials of both degrees let us modify a bit $\rho_{\eps, s}$, let $\cT:= \{\frac14, \frac12, 1\}$ and
$$
\rho_{\eps, t, s}  := \big(\frac12+ \frac12 t\eps_1 \sigma_{s(1)}\big)\otimes\dots\otimes  \big(\frac12+ \frac12 t\eps_n \sigma_{s(n)}\big),\,\, \eps\in \{-1,1\}^n, \, t\in \cT\,.
$$
Let $\tilde A$ be a matrix polynomial with the same coefficients $\hat A_{i,j}$, but $\hat A_k$ is multiplied by $1/3$. Then, exactly as in \eqref{AsA} we get
\begin{equation}
\label{AsAtilde}
\tr [\tilde A_s \rho_{\eps,  t, s}] = \tr [\tilde A \rho_{\eps, t, s}], \quad \forall \eps\in \{-1,1\}^n,\, \, t\in\cT.
\end{equation}
Similar to \eqref{eq1} we have from \eqref{av}
\begin{align}
\label{eq2}
&
\|A\|\le 3^{2-n} \sum_{s\in \cs}\| \tilde A_s\|= C_0\, 3^{2-n}\sum_{s\in \cs} \max_{\eps\in \{-1,1\}^n, \,t\in \cT} |\tr [\tilde A_s \rho_{\eps, t, s}]| =^{\eqref{AsAtilde}}               \notag
\\
& C_0\,3^{2-n}\sum_{s\in \cs} \max_{\eps\in \{-1,1\}^n, \,t\in \cT} |\tr [\tilde A \rho_{\eps, t, s}]| \le  9 C_1 \max_{s, \eps, t} |\tr[A\rho_{\eps, t,s}]|\,.
\end{align}

Here constants $C_0, C_1$ are absolute but should be explained.  Fix $\eps\in \{-1,1\}^n$. The expression $|\tr [\tilde A_s \rho_{\eps, s}]| $ is an absolute value of the polynomial of $p(t)$ degree $2$ in $t$. Also $p(t)= p_2(t) +p_1(t)$, $p_2$ is  degree $2$ part and  $p_1$  is degree $1$ part. The maximum of  $|p(t)|$ on set $\cT$  is equivalent with absolute constant to the $\max_{t\in\cT} |p_2(t)| + \max_{t\in\cT} |p_1(t)| $ (this is just an exercise with Vandermond determinant). This explains constant $C_0$: we can just split $\tilde A_s$ to homogeneous monomials of degrees $1$ and $2$ and use this remark.

The same remark trivially explains  constant $C_1\le 3C_0$. The cardinality of the test set of product states becomes now $3\cdot 3^n\cdot 2^n$.

